\title{On the Power of Graphical Reconfigurable Circuits} 
\author{Yuval Emek}{Technion - Israel Institute of Technology, Israel }{yemek@technion.ac.il}{https://orcid.org/0000-0002-3123-3451}{}
\author{Yuval Gil}{Technion - Israel Institute of Technology, Israel }{yuval.gil@campus.technion.ac.il}{https://orcid.org/0009-0007-7762-3029}{}
\author{Noga Harlev}{Technion - Israel Institute of Technology, Israel }{snogazur@campus.technion.ac.il}{}{}
\authorrunning{Y. Emek, Y. Gil, N. Harlev} 
\keywords{graphical reconfigurable circuits,
	bounded uniformity,
	beeping}
\newcommand{\Sect}{Sec.}
\newcommand{\Thm}{Thm.}
\newcommand{\Lem}{Lem.}
\newcommand{\Obs}{Obs.}
\newcommand{\Cor}{Cor.}
\newcommand{\Proc}{Proc.}
\newcommand{\Alg}{\ensuremath{\mathtt{Alg}}}
\newcommand{\Reals}{\mathbb{R}}
\newcommand{\Degree}{\operatorname{deg}}
\newcommand{\Integers}{\mathbb{Z}}
\newcommand{\Pins}{\mathit{P}}
\newcommand{\PinPart}{\mathcal{R}}
\newcommand{\RefTrnsCl}{\operatorname{tc}}
\newcommand{\Circuits}{\mathcal{C}}
\newcommand{\Distance}{d_{G}}
\providecommand{\Pr}{}
\renewcommand{\Pr}{\mathbb{P}}
\begin{document}
	
	\maketitle
	
	\begin{abstract}
		We introduce the \emph{graphical reconfigurable circuits (GRC)} model as an
		abstraction for distributed graph algorithms whose communication scheme is
		based on local mechanisms that collectively construct long-range
		reconfigurable channels (this is an extension to general graphs of a
		distributed computational model recently introduced by Feldmann et al.\ (JCB
		2022) for hexagonal grids).
		The crux of the GRC model lies in its modest assumptions:
		(1)
		the individual nodes are computationally weak, with state space bounded
		independently of any global graph parameter;
		and
		(2)
		the reconfigurable communication channels are highly restrictive, only
		carrying information-less signals (a.k.a.\ \emph{beeps}).
		Despite these modest assumptions, we prove that GRC algorithms can solve many
		important distributed tasks efficiently, i.e., in polylogarithmic time.
		On the negative side, we establish various runtime lower bounds, proving that
		for other tasks, GRC algorithms (if they exist) are doomed to be
		slow.
	\end{abstract}
	\section{Introduction}
	\label{sec:introduction}
	The \emph{reconfigurable circuits} model was introduced recently by Feldmann
	et al.~\cite{feldmannPSD2022} and studied further by Padalkin et al.\
	\cite{padalkinSW2022,
		PadalkinS2024shortest-path}.
	It extends the popular \emph{geometric amoebot} model for (synchronous)
	distributed algorithms running in the hexagonal grid by providing them with an
	opportunity to form long-range communication channels.
	This is done by means of a distributed mechanism that allows each node to bind
	together a subset of its incident edges (which can be thought of as installing
	internal ``wires'' between the corresponding ports);
	the long-range channels, a.k.a.\ \emph{circuits}, are then formed by taking the
	transitive closure of these local bindings (see \Sect{}~\ref{sec:model} for
	details).
	The circuits serve as \emph{beeping channels}, enabling their participating
	nodes to communicate via information-less signals.
	The crux of the model is that the distributed mechanism that controls the
	circuit formation is invoked in every round (of the synchronous execution) so
	that the circuits can be reconfigured.
	
	In contrast to the original geometric amoebot model which is tailored
	specifically to planarly embedded (hexagonal) grids, the reconfigurable
	circuits model can be naturally generalized to arbitrary graph topologies.
	The starting point of the current paper is the formulation of such a
	generalization that we refer to as the \emph{graphical reconfigurable circuits
		(GRC)} model (formally defined in \Sect{}~\ref{sec:model}).
	
	An important feature of the GRC model is that it is \emph{uniform}:
	the actions of each node $v$ in the (general) communication graph $G$ are
	dictated by a (possibly randomized) state machine whose description is fully
	determined by the degree of $v$ (and the local input provided to $v$ if there
	is such an input), independently of any global parameter of $G$
	\cite{Angluin1980local}.
	A clear advantage of uniform algorithms is that they can be deployed in a
	``one size fits all'' fashion, without any global knowledge of the graph on
	which they run.
	We further require that the aforementioned state machines admit a finite
	description, which means, in particular, that the state space of the state
	machines are bounded independently of any global graph parameter.
	This requirement is an obvious necessary condition for practical
	implementations;
	we subsequently refer to uniform distributed algorithms subject to this
	requirement as \emph{boundedly uniform}.
	
	Combining the bounded uniformity with the light demands of the beeping
	communication scheme, demands which are known to be easy to meet in practice
	\cite{CornejoK2010deploying,
		FluryW2010slotted},
	we conclude that the GRC model provides an abstraction for distributed
	(arbitrary topology) graph algorithms that can be implemented over devices
	with slim computation and communication capabilities.
	In particular, the GRC model may open the gate for a rigorous investigation of
	distributed algorithms operating in (natural or artificial) biological
	cellular networks whose communication mechanism is based on bioelectric
	signaling, known to be the basis for long range (low latency) communication in
	such networks.
	
	The main technical contribution of this paper is the design of GRC algorithms
	for various classic distributed tasks that terminate in polylogarithmic time.
	Some of these tasks (e.g., the construction of a minimum spanning tree) are
	inherently global and are known to be subject to congestion bottlenecks,
	thus demonstrating that despite their limited computation and communication
	power, GRC algorithms can overcome both ``locality'' and ``bandwidth''
	barriers.
	In fact, as far as we know, these are the first distributed algorithms that
	solve such tasks in polylogarithmic time under any boundedly uniform model.
	
	While GRC algorithms can bypass the congestion bottlenecks of some distributed
	tasks, other tasks turn out to be much harder:
	We prove that under certain conditions, runtime lower bound constructions,
	developed originally for the CONGEST model \cite{peleg2000}, can be
	translated, almost directly, to the GRC model, thus establishing runtime lower
	bounds for a wide class of tasks.

	\subsection{The GRC Model}
	\label{sec:model}
	In the current section, we introduce the distributed computational model used
	throughout this paper, referred to as the \emph{graphical reconfigurable
		circuits (GRC)} model.
	A GRC algorithm \Alg{} runs over a (finite simple) undirected graph
	$G = (V, E)$
	so that each node
	$v \in V$
	is associated with its own copy of a (possibly randomized) state machine
	defined by \Alg{};
	for clarity of the exposition, we often address node $v$ and the state machine
	that dictates $v$'s actions as the same entity (our intention will be clear
	from the context).
	
	We adopt the \emph{port numbering} convention
	\cite{Angluin1980local,
		HellaJKLLLSV2015weak-models}
	stating that from the perspective of a node
	$v \in V$,
	each edge
	$e \in E(v)$
	is identified by a unique port number taken from the set
	$\{ 1, \dots, \Degree(v) \}$.\footnote{%
		Given an edge subset
		$F \subseteq E$
		and a node
		$v \in V$,
		we denote the set of edges in $F$ incident on $v$ by
		$F(v) = \{ e \in F \mid e \ni v \}$
		and the degree of $v$ by
		$\Degree(v) = | E(v) |$.
	}
	Every edge
	$e \in E$
	is associated with $k$ \emph{pins}, where
	$k \in \Integers_{> 0}$
	is a constant determined by the algorithm designer;\footnote{%
		For the (asymptotic) upper bounds established in the current paper, it is
		actually sufficient to use
		$k = 1$
		pins per edge.
		However, this is not true in general (see, e.g.,
		\cite[Sections 3.4 and 4.4]{feldmannPSD2022})
		and regardless, using multiple (yet,
		$O (1)$)
		pins per edge often facilitates the algorithm's exposition.
		In any case, we do not make an effort to optimize the value of $k$.}
	these pins are represented as pairs of the form
	$p = (e, i)$
	for
	$i \in [k]$.
	Let
	$\Pins = E \times [k]$
	denote the set of all pins.
	For a node
	$v \in V$,
	let
	$\Pins(v) = E(v) \times [k]$
	denote the set of pins associated with the edges incident on $v$.
	The GRC model is defined so that for each pin
	$p = (e, i) \in \Pins(v)$,
	node $v$ is aware of the (local) port number of edge $e$ as well as the
	(global) index
	$i \in [k]$.
	In particular, the other endpoint of edge $e$ agrees with $v$ on the index $i$
	of pin $p$ although the two nodes may identify $e$ by different port numbers.
	
	The execution of algorithm \Alg{} advances in synchronous \emph{rounds}.
	Each round
	$t = 0, 1, \dots$
	is associated with a partition $\Circuits^{t}$ of the pin set $\Pins$ into
	non-empty pairwise disjoint parts, called \emph{circuits}.
	The partition $\Circuits^{0}$ is defined so that each pin forms its own
	singleton circuit;
	for
	$t \geq 1$,
	the partition $\Circuits^{t}$ is determined by the nodes according to a
	distributed mechanism explained soon.
	
	For a round
	$t \geq 0$,
	a node
	$v \in V$
	is said to \emph{partake} in a circuit
	$C \in \Circuits^{t}$
	if
	$\Pins(v) \cap C \neq \emptyset$.
	Let
	$\Circuits^{t}(v) = \{ C \in \Circuits^{t} \mid \Pins(v) \cap C \neq \emptyset
	\}$
	denote the set of circuits in which node $v$ partakes.
	
	The communication scheme of the GRC model is defined on top of the circuits so
	that each circuit
	$C \in \Circuits^{t}$
	serves (during round $t$) as a \emph{beeping channel}
	\cite{CornejoK2010deploying}
	for the nodes that partake in $C$.
	Before getting into the specifics of this communication scheme, let us explain
	how the partition
	$\Circuits^{t}$
	is formed based on the actions of the nodes in round
	$t - 1$.
	
	Fix some round
	$t \geq 1$.
	Towards the end of round
	$t - 1$,
	each node
	$v \in V$
	decides on a partition $\PinPart^{t}(v)$ of $\Pins(v)$, referred to as the
	\emph{local pin partition} of $v$.
	Let $\mathcal{L}^{t}$ be the symmetric binary relation over $\Pins$ defined so
	that pins
	$p = (e, i)$
	and
	$p' = (e', i')$
	are related under $\mathcal{L}^{t}$ (i.e.,
	$(p, p'), (p', p) \in \mathcal{L}^{t}$)
	if and only if there exists a node
	$v \in V$
	(incident on both $e$ and $e'$) such that $p$ and $p'$ belong to the same
	part of $\PinPart^{t}(v)$.
	Let $\RefTrnsCl(\mathcal{L}^{t})$ be the reflexive transitive closure of
	$\mathcal{L}^{t}$, which is, by definition, an equivalence relation over
	$\Pins$.
	The circuits in $\Circuits^{t}$ are taken to be the equivalence classes of
	$\RefTrnsCl(\mathcal{L}^{t})$.
	See Figure~\ref{figure:grc} for an illustration.\footnote{%
		As presented by Feldmann et al.~\cite{feldmannPSD2022}, the physical
		interpretation of the abstract circuit forming process is that each node $v$
		internally ``wires'' all pins belonging to the same part
		$R \in \PinPart^{t}(v)$
		to each other, thus ensuring that a signal transmitted over one pin in $R$ is
		disseminated to all pins in $R$ (and through them, to the entire circuit that
		contains $R$).}
	
		\begin{figure}
		\centering
		\includegraphics[width=0.8\textwidth]{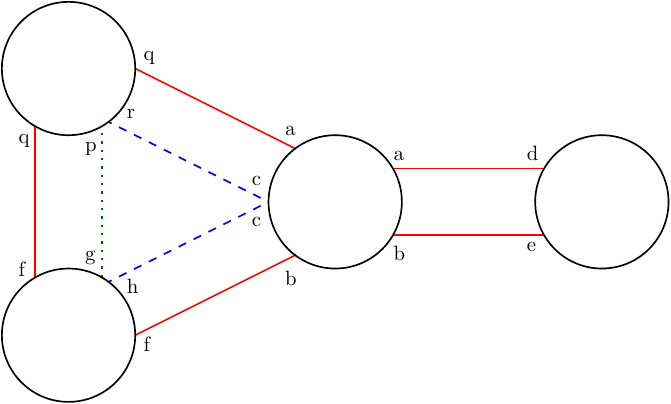}
		\caption{\label{figure:grc}%
			The circuits formed on a communication graph by the local node decisions.
			The graph includes $4$ nodes, depicted by the black cycles, and $4$ edges (not
			shown explicitly in the figure), each one of them is associated with
			$k = 2$
			pins, depicted by the straight lines.
			The local pin partitions are presented by the lower-case letters.
			These local pin partitions result in forming three circuits, consisting of
			the red (solid) pins,
			the blue (dashed) pins,
			and
			the green (dotted) pin.
		}
	\end{figure}
	
	We are now ready to formally define the operation of each node
	$v \in V$
	in round
	$t = 0, 1, \dots$
	This includes the following three steps, where we denote the state of $v$ in
	round $t$ by $S^{t}(v)$:
	\\
	(1)
	Node $v$ decides (possibly in a probabilistic fashion), based on $S^{t}(v)$,
	on a pin subset
	$B^{t}(v) \subseteq \Pins(v)$
	and \emph{beeps} --- namely, emits an information-less signal --- on every pin
	in $B^{t}(v)$;
	we say that $v$ \emph{beeps} on a circuit
	$C \in \Circuits^{t}(v)$
	if $v$ beeps on (at least) one of the pins in $C$.
	\\
	(2)
	For each pin
	$p \in \Pins(v)$,
	node $v$ obtains a bit of information revealing whether at least one node
	beeps (in the current round) on the (unique) circuit
	$C \in \Circuits^{t}$
	to which $p$ belongs.
	\\
	(3)
	Node $v$ decides (possibly in a probabilistic fashion), based on $S^{t}(v)$
	and the information obtained in step (2), on the next state
	$S^{t + 1}(v)$
	and the next local pin partition
	$\PinPart^{t + 1}(v)$.
	\\
	We emphasize that for each circuit
	$C \in \Circuits^{t}(v)$
	and pin
	$p \in \Pins(v) \cap C$,
	node $v$ can distinguish, based on the information obtained in step (2) for
	$p$, between the scenario in which zero nodes beep on $C$ and the scenario in
	which a positive number of nodes beep on $C$, however, node $v$ cannot tell
	how large this positive number is.
	In fact, if $v$ itself decides (in step (1)) to beep on pin $p$, then $v$ does
	not obtain any meaningful information from $p$ in step (2) (in the
	beeping model terminology \cite{AfekABCHK2013beeping}, this is referred to as
	lacking ``sender collision detection'').\footnote{%
		\label{footnote:far-away-local-pin-partitions}%
		The reader may wonder why the decisions made in step (1) and the information
		obtained in step (2) are centered on the pins in $\Pins(v)$, rather than on
		the circuits in $\Circuits^{t}(v)$.
		The reason is that node $v$ is not necessarily aware of the partition induced
		on $\Pins(v)$ by $\Circuits^{t}(v)$ (i.e., the exact assignment of the pins in
		$\Pins(v)$ to the circuits in $\Circuits^{t}(v)$);
		indeed, the latter partition depends on the local pin partitions
		$\PinPart^{t}(u)$ of other nodes
		$u \in V$,
		some of which may be far away from $v$.
		For example, in Figure~\ref{figure:grc}, the local pin partition of the
		rightmost node separates between its two incident pins;
		nevertheless, both pins belong to the same (red) circuit due to local pin
		partitions decided upon in the other side of the graph.}
	
	An important feature of the GRC model is that \Alg{} is required to be
	\emph{boundedly uniform}, namely, the number of states in the state machine
	associated with a node
	$v \in V$,
	as well as the description of the transition functions that determine the next
	state
	$S^{t + 1}(v)$
	and the next local pin partition
	$\PinPart^{t + 1}(v)$,
	are finite and fully determined by the local parameters of $v$, independently
	of any global parameter of the graph $G$ on which \Alg{} runs.
	These local parameters include the degree $\Degree(v)$ of $v$ and, depending
	on the specific task, any local input provided to $v$ at the beginning of the
	execution (e.g., the weights of the edges incident on $v$).\footnote{%
		To maintain strict uniformity, we adhere to the convention that numerical
		values included in the local inputs (e.g., edge weights) are encoded as
		bitstrings without ``leading zeros'', thus ensuring that the length of such a
		bitstring by itself does not reveal any global information.}
	In particular, node $v$ does not ``know'' (and generally, cannot encode)
	the number
	$n = |V|$
	of nodes,
	the number
	$m = |E|$
	of edges,
	the maximum degree
	$\Delta = \max_{v \in V} \Degree(v)$,
	or
	the diameter
	$D = \max_{u, v \in V} \Distance(u, v)$.\footnote{%
		The notation
		$\Distance(u, v)$
		denotes the distance (in hops) between nodes $u$ and $v$ in $G$.}
	Notice that the uniformity in $n$ means that the nodes are also
	\emph{anonymous}, i.e., they do not (and cannot) have unique identifiers.
	
	The primary performance measure applied to our algorithms is their
	\emph{runtime} defined to be the number of rounds until termination.
	When the algorithm is randomized, its runtime may be a random variable, in
	which case we aim towards bounding it whp.\footnote{%
		An event $A$ holds \emph{with high probability (whp)} if
		$\Pr (A) \geq 1 - n^{-c}$
		for an arbitrarily large constant $c$.}
	
	\subparagraph{Relation to CONGEST.}
	An adversity faced by GRC algorithms is the limited amount of information that
	can be sent/received by each node in a single round.
	Such limitations lie at the heart of the popular \emph{CONGEST}
	\cite{peleg2000} model that operates in synchronous message passing rounds,
	using messages of size $B$, where the typical choices for $B$ are
	$B = O (1)$,
	$B = \Theta (\log n)$,
	or
	$B = \operatorname{polylog} (n)$
	(by definition, the uniform version of CONGEST adopts the former choice).
	An important point of similarity between the two models is that per round,
	both CONGEST and GRC algorithms can communicate
	$\tilde{O} (s)$
	bits of information over a cut of size $s$.\footnote{%
		The asymptotic notations
		$\tilde{O}(\cdot)$
		and
		$\tilde{\Omega}(\cdot)$
		hide
		$\operatorname{polylog}(n)$
		expressions.}
	As explained in \Sect{}~\ref{sec:preliminaries}, from the perspective of
	message exchange per se (regardless of local computation), $T$ CONGEST rounds
	can be simulated by
	$O (\log n + T \cdot B)$
	GRC rounds whp, so, ignoring the additive logarithmic term, GRC algorithms are
	at least as strong as the boundedly uniform version of CONGEST algorithms.
	In fact, they are strictly stronger:
	the crux of GRC algorithms is that they enjoy the advantage of reconfigurable
	long-range communication channels (though highly restrictive ones);
	this advantage materializes in some of the GRC algorithms developed in the
	sequel whose runtime is significantly smaller than their corresponding (not
	necessarily uniform) CONGEST lower bounds.
	
	\subsection{Our Contribution}
	\label{sec:contribution}
	The main takeaway from this paper is that many important distributed tasks
	admit highly efficient GRC algorithms --- see Table~\ref{table:upper-bounds}.
	Notice that with the exception of the sparse spanner construction, all tasks
	mentioned in Table~\ref{table:upper-bounds} admit
	$\tilde{\Omega} (\sqrt{n} + D)$
	runtime lower bounds under the (not necessarily uniform) CONGEST model
	\cite{dasHKKNPPW2011,
		PelegR00},
	demonstrating that reconfigurable beeping channels are a powerful tool even
	for boundedly uniform algorithms.
	
	\begin{table*}
		\begin{tabular}{l|l|c}
			\toprule
			\multicolumn{2}{c|}{task} &
			runtime
			\\
			\midrule
			\multirow{2}{*}{construction} &
			minimum spanning tree
			(integral edge weights
			$\in [1, W]$) &
			$O (\log (n) \cdot \log (n + W))$
			\\
			\cmidrule{2-3}
			&
			$(2 \kappa - 1)$-spanner
			with
			$O (n^{1 + (1 + \varepsilon) / \kappa})$
			edges in expectation &
			$O (\kappa + \log n)$
			\\
			\midrule
			\multirow{2}{*}{verification} &
			minimum spanning tree
			(integral edge weights
			$\in [1, W]$) &
			$O (\log (n) \cdot \log (n + W))$
			\\
			\cmidrule{2-3}
			&
			\begin{minipage}{0.6\textwidth}
				simple path,
				connectivity,
				$(s, t)$-connectivity,
				connected spanning subgraph,
				cut,
				$(s, t)$-cut,
				Hamiltonian cycle,
				$e$-cycle containment,
				edge on all
				$(s, t)$-paths
			\end{minipage}
			&
			$O (\log n)$
			\\
			\bottomrule
		\end{tabular}
		\caption{\label{table:upper-bounds}%
			Our runtime upper bounds.
			The corresponding GRC algorithms are randomized and their correctness and
			runtime guarantees hold whp;
			the one exception is the spanner construction, where the number of edges is
			bounded in expectation.
		}
	\end{table*}
	
	The polylogarithmic runtime upper bounds presented in
	Table~\ref{table:upper-bounds} imply that the
	$\tilde{\Omega} (\sqrt{n} + D)$
	CONGEST lower bounds for the corresponding tasks fail to transfer to
	the GRC model (refer to \Sect{}~\ref{sec:inapplicable-reductions} for further
	discussion of this ``failed transfer'').
	CONGEST lower bounds for other distributed tasks on the other hand do
	transfer, almost directly, to GRC.
	Indeed, we develop a generic translation, from CONGEST runtime lower bounds to
	GRC runtime lower bounds, which applies to a large class of CONGEST lower bound
	constructions --- see Table~\ref{table:lower-bounds} (in
	\Sect{}~\ref{sec:lower-bounds}) for a sample of the results obtained through
	this translation.
	
	\subsection{Paper's Outline}
	The remainder of this paper is organized as follows.
	We start in \Sect{}~\ref{sec:overview} with a discussion of the main technical
	challenges encountered towards establishing our results and the ideas used to
	overcome them.
	\Sect{}~\ref{sec:preliminaries} introduces some preliminary definitions, as
	well as several basic procedures used in the later technical sections.
	The GRC algorithms promised in Table~\ref{table:upper-bounds} for the tasks of
	constructing a minimum spanning tree and a spanner are presented and analyzed
	in \Sect{} \ref{sec:mst} and \ref{sec:spanner}, respectively.
	\Sect{}~\ref{sec:verification-tasks} is dedicated to the algorithms for the
	verification tasks promised in the bottom half of that table.
	Our GRC runtime lower bounds (as discussed in \Sect{}~\ref{sec:contribution})
	are established in \Sect{}~\ref{sec:lower-bounds}.
	We conclude in \Sect{}~\ref{sec:related-work} with a discussion of additional
	related work.
	(Throughout, missing proofs are deferred to
	Appendix~\ref{appendix:missing-proofs}.)
	
	\section{Technical Overview}
	\label{sec:overview}
	In this section, we discuss the different challenges that arise in our upper
	and lower bound constructions and present a brief overview of the
	technical ideas used to overcome these challenges;
	see \Sect{}\ \ref{sec:mst}, \ref{sec:spanner}, and \ref{sec:lower-bounds} for
	the full details.
	(The techniques employed in \Sect{}~\ref{sec:verification-tasks} for the
	verification tasks are similar to those developed in \Sect{}~\ref{sec:mst} for
	the MST algorithm.)
	
	\subparagraph{Minimum Spanning Tree.} 
	The minimum spanning tree (MST) construction follows the structure of
	Boruvka's classic algorithm \cite{boruvka1926}.
	The algorithm maintains a partition of the node set into \emph{clusters} that
	correspond to the connected components of the subgraph induced by the edges
	which were already selected for the MST.
	It operates in phases, where the main algorithmic task in a phase is to
	identify a \emph{lightest outgoing edge} for each cluster.
	The clusters are then merged over the identified edges, adding those edges to
	the output edge set.
	
	If the edge weights are distinct, then no cycles are formed by the cluster
	merging process and Boruvka's algorithm is guaranteed to return an MST of the
	original graph.
	This well known fact is utilized by the existing distributed implementations
	of Boruvka's algorithm that typically use the unique node IDs to ``enforce''
	distinct edge weights.
	
	Unfortunately, obtaining distinct edge weights under our boundedly uniform
	model is hopeless.
	This means that the set $L$ of lightest outgoing edges (of all clusters)
	cannot be safely added to the output edge set without the risk of forming
	cycles, thus forcing us to come up with an alternative mechanism.
	The key technical idea here is a procedure that runs in each phase
	independently and constructs (whp) a \emph{total order} $\mathcal{T}$ over
	the set $L$.
	Following that, we identify a $\mathcal{T}$-minimal outgoing edge for each
	cluster and perform the cluster merger over the identified edges.
	As we prove in \Sect{}~\ref{sec:mst}, selecting the $\mathcal{T}$-minimal
	outgoing edges ensures that no cycles are formed, resulting in a valid MST.
	Notice that for this argument to work, it is crucial that $\mathcal{T}$ is
	defined \emph{globally} over all edges in $L$ which is ensured by a careful
	design of the aforementioned procedure.
	
	\subparagraph{Spanner.} 
	The spanner construction is based on the elegant random shifts method of
	\cite{MillerPX13}.
	Particularly, the idea is similar to the distributed algorithm of
	\cite{ForsterGV21} that uses random shifts to obtain a
	$(2 \kappa - 1)$-spanner
	of expected size
	$O(n^{1 + 1 / \kappa})$.
	The heart of the random shift method is a probabilistic clustering process
	based on a random variable $\delta_{v}$ drawn independently by each node
	$v \in V$.
	Specifically, in \cite{ForsterGV21}, each node
	$v \in V$
	samples $\delta_{v}$ from the capped geometric distribution (see
	\Sect~\ref{sec:preliminaries} for a definition) with
	parameters
	$p = 1 - n^{-1 / \kappa}$
	and
	$r = \kappa - 1$.
	The main challenge of adapting the algorithm to the boundedly uniform GRC
	model lies in the fact that the nodes are unable to sample from a distribution
	whose parameters depend on $n$.
	Nevertheless, we present a sampling procedure that allows each node
	$v \in V$
	to sample $\delta_{v}$ from a distribution that is \emph{sufficiently close}
	to the aforementioned capped geometric distribution.
	
	As we prove in \Sect{}~\ref{sec:spanner}, the sampling procedure allows us to
	construct a spanner with nearly the same properties as those of
	\cite{ForsterGV21}.
	More concretely, we extend and adapt the analysis of \cite{ForsterGV21} to
	show that our algorithm constructs a spanner with stretch
	$2 \kappa - 1$
	whp,
	and size
	$O (n^{1 + (1 + \varepsilon) / \kappa})$
	in expectation,
	where
	$\varepsilon > 0$
	is a constant parameter that can be made desirably small.
	
	\subparagraph{Lower Bounds.}
	Since the GRC model is subject to bandwidth constraints, with each pin carrying
	at most one bit of information per round, we wish to utilize the popular
	two-party communication complexity reduction framework, developed originally
	for CONGEST runtime lower bounds, in order to establish GRC runtime lower
	bounds.
	This framework is based on a partition of the node set of a carefully designed
	graph $G$ into (disjoint) sets $A$ and $B$, simulated by Alice and Bob,
	respectively.
	To adapt this framework to the GRC model, we aim to bound (from above) the
	number of bits that Alice and Bob need to exchange in order to simulate a
	round of a GRC algorithm over the graph $G$.
	
	Let us first consider the following naive communication scheme:
	for each pin associated with an edge in the
	$(A, B)$-cut,
	Alice (resp., Bob) sends a single bit that reflects whether a beep was
	transmitted on that pin from her (resp., his) side.
	Unfortunately, this scheme fails to truthfully simulate a round of the
	algorithm:
	Recalling the discussion in
	footnote~\ref{footnote:far-away-local-pin-partitions}, two pins
	$p, p'$
	associated with edges incident on nodes in $A$ (resp., $B$) may belong to the
	same circuit due to the local pin partitions of the nodes in $B$ (resp., $A$).
	In this case, Alice (resp., Bob) may not be able to determine whether $p$ and
	$p'$ belong to the same circuit and therefore, cannot simulate the behavior of
	their incident nodes.
	In \Sect{}~\ref{sec:lower-bounds}, we present a communication scheme that
	overcomes this obstacle while incurring a communication overhead which is only
	logarithmic in the size of the
	$(A, B)$-cut.
	
	It is important to note that our GRC runtime lower bounds can only use
	reductions that admit a ``static node partition'' structure.
	While these make for a rich class of reductions, one may wonder whether our
	lower bounds can be extended to reductions of a more dynamic structure,
	including, e.g., the reductions developed by Das Sarma et al.\ in
	\cite{dasHKKNPPW2011}.
	Our GRC runtime upper bounds demonstrate that this is not the case
	and in \Sect{}~\ref{sec:inapplicable-reductions}, we identify the ``point of
	failure'' that makes these reductions inapplicable to the GRC model.
	
	\section{Preliminaries}
	\label{sec:preliminaries}
	
	\subparagraph{Graph Theoretic Definitions.}
	Consider a connected graph
	$G = (V, E)$.
	Given an edge-weight function
	$w : E \rightarrow \Reals$,
	a \emph{minimum spanning tree (MST)} of $G$ with respect to $w$ is an edge
	subset
	$T \subseteq E$
	such that
	$(V, T)$
	is a spanning tree of $G$ that minimizes the weight
	$w(T) = \sum_{e \in T} w(e)$.
	
	For an edge subset
	$H \subseteq E$,
	let
	$d_{H}(u, v)$
	denote the distance in the graph
	$(V, H)$
	between two nodes
	$u, v \in V$.
	For an integer
	$\sigma > 0$,
	we say that
	$H \subseteq E$
	is a \emph{$\sigma$-spanner} of $G$ if
	$d_{H}(u, v) \leq \sigma \cdot d_{G}(u, v)$
	for all
	$u, v \in V$.
	Equivalently, $H$ is a $\sigma$-spanner if and only if
	$d_{H}(u, v) \leq \sigma$
	for every edge
	$(u, v) \in E$.
	The \emph{stretch} of $H$ is defined as the smallest value $\sigma$ for which
	$H$ is a $\sigma$-spanner.
	
	The parts of a partition $\mathcal{P}$ of the node set $V$ are often referred
	to as \emph{clusters}.
	We say that clusters $S$ and $S'$,
	$S \neq S'$,
	are \emph{neighboring clusters} if there exists an edge
	$(v, v') \in E$
	such that
	$v \in S$
	and
	$v' \in S'$.
	In this case, we say that edge
	$(v, v')$
	\emph{bridges} the clusters $S$ and $S'$, and more broadly, refer to
	$(v, v')$
	as a \emph{bridging} edge of $\mathcal{P}$.
	We say that an edge $(u, v) \in E$ is an \emph{outgoing} edge of cluster $S$
	if
	$u \in S$
	and
	$v \notin S$.
	For a cluster $S$, let
	$\partial_{S} \subseteq E$
	denote the set of edges outgoing from $S$.
	
	\subparagraph{Capped Geometric Distribution.}
	For parameters
	$p \in [0, 1]$
	and
	$r \in \mathbb{Z}_{> 0}$,
	the \emph{capped geometric distribution}, denoted by
	$GeomCap(p, r)$,
	is defined by taking
	$\Pr[GeomCap(p, r) = i]$
	to be
	$p (1 - p)^{i}$
	if
	$i \in \{ 0, \dots, r - 1 \}$;
	$(1-p)^{r}$
	if
	$i = r$;
	and $0$ otherwise.
	Intuitively, the distribution relates to $r$ Bernoulli experiments indexed by $0, \dots, r-1$, each with success probability $p$. A random variable sampled from the capped geometric distribution represents the index of the first successful experiment, whereas it is equal to $r$ if all experiments fail. The capped geometric distribution admits a memoryless property for the values
	$0\leq i \leq r - 1$.
	In particular, a useful identity that follows is $\Pr[X = i \mid X \geq i] = \Pr[X = 0] = p$ for a random variable $X\sim GeomCap(p,r)$ and an index  $0\leq i \leq r - 1$.

	\subsection{Auxiliary Procedures}\label{sec:auxiliary}
	
	\subparagraph{Global Circuits.}
	The algorithms presented in this paper utilize a \emph{global circuit}, i.e., a circuit in which every node $v\in V$ partakes.
	A global circuit can be constructed in round $t\geq 0$ as follows.
	For some index $1 \leq i \leq k$, every node $v \in V$
	partitions its pin set in round $t$ such that
	$E(v)\times \{i\}\in \PinPart^{t}(v)$.

	\subparagraph{Procedure $\mathtt{CountingToLogn}$.}
	We next present a procedure referred to as $\mathtt{CountingToLogn}$, whose runtime is $\Theta (\log n)$ rounds whp. 
	While the uniformity in $n$ prevents the nodes from counting $\log n$ rounds individually, the duration of this procedure can indicate to the nodes that whp, $\Theta (\log n)$ rounds have passed.
	The nodes first construct a global circuit, as described above.
	Throughout the procedure, the nodes maintain a node set $M \subseteq V$ of \emph{competitors}, where initially $M = V$.
	In each round, each competitor $v \in M$ tosses a fair coin and beeps through the global circuit if the coin lands heads.
	If the coin lands tails, $v$ removes itself from $M$.
	The procedure terminates when no competitor beeps through the global circuit. 
	
	We show the following useful property regarding the runtime of the described
	procedure.
	(All proofs missing from this section are deferred to
	Appendix~\ref{appendix:missing-proofs:preliminaries}).

	\begin{lemma}
		\label{lem:counting}
		For an integer $r > 0$, consider $2r - 1$ independent executions of $\mathtt{CountingToLogn}$ and let $\tau$ be the median runtime of these executions (i.e., the $r$-th fastest runtime).
		For any constant $0 < \rho < 1$, it holds that
		$\Pr[(1 - \rho) \log n \leq \tau \leq (1 + \rho) \log n]
		\geq
		1 - 2n^{-\rho r}$.
	\end{lemma}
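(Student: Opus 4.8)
The plan is to analyze one execution of $\mathtt{CountingToLogn}$ and then amplify the concentration using the median. Fix a single execution and associate with each node $v \in V$ the random variable $X_v \geq 0$ equal to the index of the first round in which $v$'s coin comes up tails; thus $v$ stays a competitor and flips heads in rounds $0, \dots, X_v - 1$, the $X_v$ are i.i.d., and $\Pr[X_v \geq j] = 2^{-j}$ for every integer $j \geq 0$. The key structural observation is that a beep travels on the global circuit in a round $t$ if and only if some competitor flips heads in round $t$, i.e., if and only if $X_v \geq t + 1$ for some $v \in V$. Hence the procedure halts in the first round $t$ satisfying $\max_{v \in V} X_v \leq t$, so its runtime equals $\max_{v \in V} X_v$ up to a fixed additive constant (reflecting the runtime convention), which I suppress from now on.

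The next step is to bound the tails of $\max_{v \in V} X_v$, which I will show concentrates around $\log n$ (all logarithms base $2$, as generated by the fair coins). For the upper tail, a union bound over the $n$ nodes gives $\Pr[\max_{v} X_v \geq (1 + \rho)\log n] \leq n \cdot 2^{-(1+\rho)\log n} = n^{-\rho}$. For the lower tail, independence of the $X_v$ gives $\Pr[\max_{v} X_v < (1 - \rho)\log n] = \prod_{v \in V} \Pr[X_v < (1-\rho)\log n] \leq (1 - n^{-(1-\rho)}/2)^{n} \leq e^{-n^{\rho}/2}$, which is super-polynomially small. So a single execution already lies in $[(1-\rho)\log n,\, (1+\rho)\log n]$ with probability at least $1 - n^{-\rho} - e^{-n^\rho/2}$, and it is only the polynomially-small upper tail that needs the median to be pushed down.

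Finally I would pass to the $2r - 1$ independent executions with runtimes $\tau_1, \dots, \tau_{2r-1}$ and their median $\tau$, the $r$-th smallest value. The event $\tau > (1+\rho)\log n$ implies that at least $r$ of the executions exceed $(1+\rho)\log n$, so by independence and a union bound over the $\binom{2r-1}{r}$ such subsets, $\Pr[\tau > (1+\rho)\log n] \leq \binom{2r-1}{r}\,(n^{-\rho})^{r}$. Symmetrically, $\tau < (1-\rho)\log n$ implies that at least $r$ executions fall below $(1-\rho)\log n$, so $\Pr[\tau < (1-\rho)\log n] \leq \binom{2r-1}{r}\,(e^{-n^{\rho}/2})^{r} = o(n^{-\rho r})$. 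Summing the two estimates and absorbing the constant $\binom{2r-1}{r} \leq 2^{2r-1}$ (which depends only on the fixed $r$) into the bound for large enough $n$ yields $\Pr[(1-\rho)\log n \leq \tau \leq (1+\rho)\log n] \geq 1 - 2 n^{-\rho r}$, as claimed.

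The step I expect to be the crux is the single-execution tail analysis: correctly identifying the runtime with $\max_{v} X_v$ and controlling the rounding so that the upper tail is only polynomially small while the lower tail is super-polynomially small --- this asymmetry is precisely what makes the median (of an odd number $2r-1$ of runs) the right amplification device. The median step itself is a routine union-bound argument; the only loose end is matching the combinatorial factor $\binom{2r-1}{r}$ to the clean constant $2$ in the statement, which is immaterial to the high-probability conclusion.
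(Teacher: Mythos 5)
Your proof is correct and takes a genuinely different route from the paper's. You identify the runtime of a single execution with $\max_{v\in V} X_v$, where each $X_v$ is the (geometric) number of heads flipped by $v$ before its first tails, and then bound the two tails of this maximum directly: a union bound over the $n$ nodes for the upper tail, and a product over the $n$ independent $X_v$'s for the lower tail (yielding a super-polynomially small bound $e^{-n^\rho/2}$). The paper instead tracks the random process $X_t$ counting the remaining competitors at round $t$: it derives $\mathbb{E}[X_t]=n/2^t$ and applies Markov for the upper tail, and for the lower tail it invokes a Chernoff bound on the round-to-round shrinkage of $X_t$ together with a union bound over the $O(\log n)$ rounds, a visibly more laborious argument. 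Your observation that the procedure's termination round is exactly $\max_v X_v$ is the key simplification; it buys you a much shorter and more elementary lower-tail argument (no Chernoff, no per-round union bound, and in fact a strictly stronger $e^{-\Omega(n^\rho)}$ tail rather than the paper's $n^{-\rho}$), while the upper tail is essentially the same union-bound content as Markov on $\mathbb{E}[X_t]$. The median-amplification step in both proofs is the same idea; the $\binom{2r-1}{r}$ factor you flag as a loose end is also present (and silently dropped) in the paper's own proof, so it is not a defect of your argument relative to the paper --- it is an imprecision in the stated constant $2$ that both proofs gloss over, and as you say, it is immaterial once $r$ is fixed and $n$ is treated asymptotically.
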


	\subparagraph{Simulating a Message-Passing Network.}
	In a \emph{message-passing} network, in each round, every pair of neighboring
	nodes may exchange single bit messages with each other (cf.\ the
	$\text{CONGEST}(1)$
	model~\cite{peleg2000}).
	One can simulate a message-passing network in the GRC model using relatively
	standard techniques as cast in the following theorem.
	
	\begin{theorem}
		\label{thm:message-passing-simulation}
		Let \Alg{} be a GRC algorithm where additionally, in each round, each node is
		able to exchange $1$-bit messages with its neighbors.
		If the runtime of \Alg{} is $T$, then it can be transformed into an algorithm
		$\Alg'$ in the GRC model (without messages between neighbors) with a runtime
		of
		$O (\log n) + 4 T$
		whp.
	\end{theorem}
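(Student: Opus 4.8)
The plan is to show how a single round of $\Alg$ (a GRC round augmented with a $1$-bit message exchange along every edge) can be simulated by a constant number of GRC rounds, after an initial $O(\log n)$-round preprocessing phase. The preprocessing consists of a single invocation of $\mathtt{CountingToLogn}$ (or, to drive down the failure probability, the median of $2r-1$ parallel copies as in \Lem{}~\ref{lem:counting}); its purpose is to give the nodes a \emph{shared clock} of length $\Theta(\log n)$ whp, so that after the simulation of the original round we can append a ``synchronization'' phase of $\Theta(\log n)$ rounds that lets every node learn when to proceed to simulating the next round. The point is that the nodes cannot count $\log n$ individually, but a global circuit together with the dwindling competitor set of $\mathtt{CountingToLogn}$ lets them \emph{agree} on when $\Theta(\log n)$ rounds have elapsed; this accounts for the additive $O(\log n)$ term, incurred once.

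Next I would describe how to simulate one message-passing round within $4$ GRC rounds. Reserve an index $i \in [k]$ (recall $k \geq 1$ suffices, and we are free to use $O(1)$ pins per edge) so that for each edge $e = (u,v)$, the pin $(e,i)$ is kept as a singleton in both endpoints' local pin partitions, i.e.\ $\{(e,i)\} \in \PinPart^{t}(u)$ and $\{(e,i)\} \in \PinPart^{t}(v)$. By the circuit-formation rule, the equivalence class of $(e,i)$ under $\RefTrnsCl(\mathcal{L}^{t})$ is then exactly $\{(e,i)\}$: a singleton circuit spanning precisely the two endpoints of $e$ and no one else. This gives each pair of neighbours a private beeping channel. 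A beep encodes the bit $1$ and silence encodes $0$, so a single such round transmits one bit \emph{in one direction} per edge; since the two endpoints may wish to send different bits and neither has sender collision detection, we use two GRC rounds per direction-pair, or more simply two rounds in which in round one only the ``lower-port'' endpoint transmits while the other listens, and in round two the roles are swapped — using the local port numbers of $e$ at its two endpoints to break the tie deterministically (the two endpoints need not agree on port numbers, but each edge's two port numbers are fixed, so one orientation is well defined locally at each node once we fix a convention). This costs $2$ rounds; interleaving it with the beeps of the original GRC round of $\Alg$ (which use the circuits $\Circuits^{t}$ and the pin set $\Pins(v)$ directly, and can be run in a separate round using the complementary pin indices) brings the total to at most $4$ GRC rounds per simulated round of $\Alg$, hence the $4T$ term.

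The main obstacle I expect is the bookkeeping around the synchronization phase: after simulating round $t$ of $\Alg$, different nodes finish at the same physical GRC round (the simulation is synchronous), so in fact the clock is only needed \emph{once} at the start to calibrate, not between every pair of simulated rounds — which is why the overhead is additive $O(\log n)$ rather than multiplicative. I would need to argue carefully that running $\mathtt{CountingToLogn}$ on a global circuit does not interfere with the dedicated per-edge channels (it does not, since it uses a different reserved pin index), and that the $\Theta(\log n)$ bound from \Lem{}~\ref{lem:counting} holds whp with a failure probability that can be pushed below $n^{-c}$ for any constant $c$ by taking $r$ large enough — note $r$ is a constant, so this only inflates $k$ and the round count by constant factors, consistent with the claimed bound. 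The correctness of the per-edge simulation is then immediate from the singleton-circuit observation, and a union bound over the $T$ simulated rounds (and the single preprocessing phase) yields that $\Alg'$ faithfully reproduces $\Alg$'s behaviour and terminates in $O(\log n) + 4T$ rounds whp.
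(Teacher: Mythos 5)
Your singleton-circuit observation is exactly the one the paper uses, and the overall shape of the argument (dedicated pin index, private per-edge beeping channels, constant rounds per simulated round) is on the right track. However, there is a genuine gap in the edge-orientation step. You propose to break symmetry deterministically via the ``lower-port'' endpoint, but port numbers are assigned independently at each endpoint of an edge: $u$ knows its own port number for $e=(u,v)$, but does \emph{not} know $v$'s, and in fact both endpoints may have the \emph{same} port number for $e$. Hence there is no local rule that lets $u$ and $v$ agree on who speaks first; ``one orientation is well defined locally at each node once we fix a convention'' is not true, because a convention local to $u$ need not produce the complementary choice at $v$. In the GRC model, with anonymous nodes and uncoordinated port numbers, orienting an edge requires genuine (randomized) symmetry breaking --- that is precisely what the paper's preprocessing stage does: in each two-round phase both endpoints toss coins, an edge is oriented when the coins differ, and a global circuit signals whether any edge is still unoriented; the stage ends after $O(\log n)$ rounds whp by a Chernoff-plus-union-bound argument. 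Your $\mathtt{CountingToLogn}$-based preprocessing is, by your own admission, not really needed (the simulation is synchronous, so no shared clock is required), and it does not supply the missing edge orientation. Two smaller points: first, once each direction is allotted a fixed pair of rounds by the orientation, your per-direction encoding must distinguish three possibilities (``send $0$'', ``send $1$'', ``no message''), which is why the paper spends two rounds per direction (e.g., beep-beep, silence-silence, silence-beep), giving $4$ rounds per simulated round for the message exchange alone; your one-bit-per-direction scheme cannot encode ``no message''. Second, the original GRC beeps of $\Alg$ can run in parallel on the complementary pin indices, so they do not cost extra rounds --- you hint at this, and it is correct, but then your arithmetic reaching ``$4$'' is off for the wrong reason.
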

	
	For simplicity of presentation, we subsequently utilize
	\Thm{}~\ref{thm:message-passing-simulation} and describe our algorithms as if
	the nodes can exchange $1$-bit messages with their neighbors in each round.
	
	\subparagraph{Leader Election.}
	In the leader election task, the goal is for a single node in a given node set $I \subseteq V$ to be selected as a \emph{leader}, whereas all other nodes of $I$ are selected to be \emph{non-leaders}. Leader election is used as a procedure in some of our algorithms.
	To that end, we use a leader election algorithm presented by Feldmann et al.~\cite{feldmannPSD2022} in the context of reconfigurable circuits in the geometric amoebot model. We note that this leader election algorithm only uses a global circuit (as described above) and thus can be applied as-is in the GRC model. Hence, the following theorem is established.
	
	\begin{theorem}[\cite{feldmannPSD2022}]
		\label{thm:leader-election}
		The leader election task can be solved within $O(\log n)$ rounds whp.
	\end{theorem}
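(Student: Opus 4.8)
The plan is to recall the leader election algorithm of Feldmann et al.~\cite{feldmannPSD2022}, observe (as already noted in the text) that it relies solely on a global circuit and therefore runs verbatim in the GRC model, and then spell out why its runtime is $O(\log n)$ whp. First I would have every node build a global circuit on some fixed pin index, so that in each round all nodes of $I$ share a single beeping channel while nodes outside $I$ merely relay; no $1$-bit neighbor messages and no global knowledge are needed, and each node keeps only $O(1)$ bits of state (a ``candidate'' flag and a couple of auxiliary bits). The algorithm then maintains a shrinking candidate set $C \subseteq I$, initialized to $C = I$, via a coin-tossing tournament on the global circuit.

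In each iteration of the tournament, every candidate tosses a fair coin; using two beeping opportunities on the global circuit (two parallel circuits on pin indices $1$ and $2$, or two consecutive rounds with a single pin), every node learns whether the set of heads-candidates and the set of tails-candidates are each non-empty. If both are non-empty, each tails-candidate leaves $C$; otherwise (all surviving candidates tossed the same value) $C$ is left unchanged. This rule keeps $C$ non-increasing and never lets it become empty, it is stable once $|C| = 1$, and whenever $|C| \geq 2$ the iteration strictly shrinks $C$ with probability at least $1/2$, with $\mathbb{E}[\,|C_{t+1}| \mid |C_t|\,] \leq |C_t|/2 + 1$. A Chernoff-based argument essentially identical to the one behind \Lem{}~\ref{lem:counting} (and behind $\mathtt{CountingToLogn}$) then shows that $|C|$ reaches $1$ within $c_0 \log n$ rounds whp, for a suitable constant $c_0$.

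The crux is termination: by bounded uniformity and anonymity, no node can count rounds, so the unique surviving candidate cannot tell that it is alone and the non-candidates cannot tell that the tournament has finished. I would resolve this by running, in parallel on a disjoint pin index, a constant number $\lceil c_0 / (1-\rho) \rceil$ of sequential executions of $\mathtt{CountingToLogn}$; by \Lem{}~\ref{lem:counting} (with a fixed small $\rho$ and $r$ large enough) the total duration of these executions lies in $[\,c_0 \log n,\ O(\log n)\,]$ whp, and crucially \emph{every} node detects the moment this global process halts. When it halts, whp the tournament has already converged to $|C| = 1$; the unique remaining candidate then declares itself the leader and beeps once on the global circuit, and every other node of $I$ declares itself a non-leader, after which the algorithm terminates. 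Altogether this elects exactly one leader, uses $O(\log n)$ rounds whp, and uses $O(1)$ state per node. I expect the main obstacle to be precisely this last step — synchronizing a ``we are done'' signal across anonymous finite-state nodes while simultaneously guaranteeing that the tournament has actually finished — together with the bookkeeping needed to line up the tournament's $O(\log n)$ convergence bound with the $\Theta(\log n)$ duration of the parallel counting process.
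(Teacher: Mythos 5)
The paper does not actually prove this theorem; it is imported wholesale from Feldmann et al.\ \cite{feldmannPSD2022}, and the only argument given is the observation (in the surrounding text) that the cited leader-election algorithm uses nothing beyond a single global circuit, so it transfers to the GRC model unchanged. Your proposal therefore does something the paper deliberately does not: it reconstructs the algorithm and its analysis from scratch. The reconstruction is in the right spirit --- a coin-tossing elimination tournament over a global beeping channel is indeed the standard pattern, and it fits the rest of the paper's toolkit --- but since there is no proof in the paper to match against, the useful thing to say is whether the reconstruction would hold up on its own.

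There it has a genuine gap, and it is exactly where you flag it: termination detection. You invoke \Lem{}~\ref{lem:counting}, but that lemma is a statement about the \emph{median} runtime of $2r-1$ independent (parallel) executions of $\mathtt{CountingToLogn}$, and the whp guarantee comes from amplifying the per-execution failure probability $n^{-\rho}$ to $n^{-\rho r}$ by taking a median with large $r$. Your scheme instead runs a constant number $\lceil c_0/(1-\rho)\rceil$ of executions \emph{sequentially} and sums their durations. For a single execution, the bounds $\Pr[\text{too short}]\le n^{-\rho}$ and $\Pr[\text{too long}]\le n^{-\rho}$ only give failure probability polynomially small with exponent $\rho<1$, which is not ``whp'' in the paper's sense (arbitrarily large constant in the exponent), and summing a constant number of such runs cannot amplify the exponent --- it can only degrade it via a union bound. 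To make your termination clock work you would need to replace the sequential runs with the paper's own pattern: run $2r-1$ copies of $\mathtt{CountingToLogn}$ interleaved in parallel, have every node count how many have halted, and stop when $r$ have halted (as in the sampling procedure of \Sect{}~\ref{sec:spanner}), choosing $r$ large enough that $n^{-\rho r}$ beats both the tournament's failure probability and the desired whp exponent. With that fix, and with the observation that the tournament itself needs $O(1)$ state and a single global circuit, the reconstruction would be sound.
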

	
	\subparagraph{Outgoing Edge Detection.}
	Consider a graph $G=(V, E)$ and let $H\subseteq E$ be a subset of edges such that each node $v\in V$ knows the set of incident edges $H(v)$. Define a partition of $V$ into clusters according to the connected components of $(V,H)$. The objective of this procedure is for each node $v\in V$ to determine for each neighbor $u\in N(v)$, whether $u$ belongs to the same cluster as $v$. To that end, the nodes first construct a circuit for each cluster. This is done by each node $v\in V$ including the pin subset $H(v)\times \{i\}$ as part of its local pin partition for some $i\in [k]$ ($i$ is the same for all nodes). Then, each cluster elects a leader utilizing the leader election algorithm mentioned above. 
	The selected leader of each cluster tosses $\Theta (\log n)$ bits and beeps them through the cluster's circuit, one at a time (a beep represents $1$ and silence represents $0$).
	Since the nodes cannot count $\Theta (\log n)$ rounds, \Proc{}~$\mathtt{CountingToLogn}$ is executed in parallel through a global circuit for (a sufficiently large) $c > 1$ times, indicating to the clusters' leaders how long to continue with the bit tossing process. 
	Every node $v \in V$ sends every bit received through its cluster's circuit in a direct message to all its neighbors (messages between neighbors are executed by means of the simulation method described in \Sect{}~\ref{sec:auxiliary}).
	For every incident edge $e \in E(v)$, node $v$ checks if the bit received differs from the bit sent. If so, $e$ is classified by $v$ as an outgoing edge.
	
	\begin{lemma}
		\label{lem:outgoing}
		In the outgoing edge detection procedure, every edge
		$e = (u, v) \in E$
		is classified correctly whp by both $u$ and $v$.
	\end{lemma}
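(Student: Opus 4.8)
The plan is to argue separately for an edge $e = (u,v)$ that is intra-cluster (so $e$ is \emph{not} outgoing) and one that is inter-cluster (so $e$ \emph{is} outgoing), after first recording a symmetry that handles the ``by both $u$ and $v$'' part of the statement for free. In each round $t$ of the bit-tossing phase let $b^v_t$ denote the bit that $v$ receives on its cluster's circuit and forwards to its neighbours. Then $v$ classifies $e$ as outgoing precisely when $b^u_t \ne b^v_t$ for some round $t$, and $u$ classifies $e$ as outgoing under the identical condition; hence $u$ and $v$ always agree, and it suffices to prove this common classification is correct whp.

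Next I would pin down the circuit structure and condition on the two ``infrastructure'' events. Since every node $w$ places $H(w)\times\{i\}$ as a single part of $\PinPart(w)$, the relation $\mathcal{L}$ links any two index-$i$ pins of $H$-edges sharing an endpoint, so its reflexive transitive closure turns the index-$i$ pins of each connected component of $(V,H)$ into exactly one circuit, in which precisely the nodes of that component partake. Let $\mathcal{E}_{\mathrm{LE}}$ be the event that the leader election of \Thm{}~\ref{thm:leader-election} succeeds in every cluster; by a union bound over the at most $n$ clusters this holds whp, and given it, the bit broadcast on cluster $S$'s circuit in round $t$ is exactly the $t$-th coin toss of the unique leader $\ell_S$ (for a singleton cluster, $\ell_S$ is the node itself and forwards its own tosses). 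Let $\mathcal{E}_{\mathrm{len}}$ be the event that the $\mathtt{CountingToLogn}$-based signal makes the bit-tossing phase last $\tau$ rounds with $(1-\rho)\log n \le \tau \le (1+\rho)\log n$ for a fixed constant $\rho \in (0,1)$; by \Lem{}~\ref{lem:counting} this holds with probability at least $1 - 2n^{-\rho r}$, which is driven below any prescribed $n^{-c}$ by taking the number $2r-1$ of parallel executions large. The point to stress is that $\tau$ is a single \emph{global} quantity (determined by coins disjoint from the leaders' tosses), so all cluster leaders toss bitstrings of the common length $\tau$ and all nodes of a cluster observe the same length-$\tau$ string.

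Given $\mathcal{E}_{\mathrm{LE}} \cap \mathcal{E}_{\mathrm{len}}$ the two cases are now short. If $u,v$ share a cluster $S$, then $b^u_t = b^v_t = (\ell_S\text{'s }t\text{-th toss})$ for every $t$, so neither endpoint classifies $e$ as outgoing, which is correct (indeed deterministic under this conditioning). If $u,v$ lie in distinct clusters $S \ne S'$, then $b^u$ is the length-$\tau$ string tossed by $\ell_S$ and $b^v$ the one tossed by $\ell_{S'}$, and since $\ell_S \ne \ell_{S'}$ these strings are independent and uniform on $\{0,1\}^\tau$; the common classification errs on $e$ only if $b^u = b^v$, an event of probability $2^{-\tau} \le n^{-(1-\rho)}$. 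A union bound over the at most $\binom{n}{2}$ edges, together with $\Pr[\overline{\mathcal{E}_{\mathrm{LE}}}]$ and $\Pr[\overline{\mathcal{E}_{\mathrm{len}}}]$, yields that whp every edge is classified correctly by both endpoints; to beat an arbitrary exponent $c$ one simply enlarges $\tau$ by running $\mathtt{CountingToLogn}$ more times (equivalently, has the leaders toss a constant factor more bits).

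\textbf{Main obstacle.} There is no deep difficulty; the two points requiring care are (i) justifying that the bit-tossing length $\tau$ is a genuinely global value, so that the two leaders' strings are compared over a common window and are jointly uniform, and (ii) the bookkeeping of the conditioning, so that the $O(1)$ infrastructure-failure events and the per-edge collision events are all simultaneously controlled within the target $1 - n^{-c}$ — including the harmless observation that when leader election fails the argument merely ceases to apply rather than producing a wrong answer.
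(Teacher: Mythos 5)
Your proof is correct and takes essentially the same approach as the paper: condition on leader election succeeding in every cluster (union bound over clusters), observe that intra-cluster edges are then always classified correctly, bound the collision probability of two distinct leaders' bit strings by $2^{-\tau}$ where $\tau = \Theta(\log n)$ is the common bit-tossing length, and union-bound over all edges. The only minor discrepancy is that in the procedure the $c$ iterations of $\mathtt{CountingToLogn}$ are run sequentially (so $\tau$ is their total duration, roughly $c\log n$), not a median of parallel executions as you invoke Lemma~\ref{lem:counting} to suggest; either reading yields the needed $\tau = c'\log n$ with $c'$ tunable, so this does not affect the argument.
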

	
	\begin{lemma}
		\label{lem:outgoing-runtime}
		The outgoing edge detection procedure takes $\Theta (\log n)$ rounds whp.
	\end{lemma}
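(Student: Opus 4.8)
The plan is to decompose the outgoing edge detection procedure into its consecutive stages, bound the runtime of each stage, and then obtain the upper bound by summation and the lower bound from a single stage. Recall that the procedure, as described (with $1$-bit messages between neighbors), proceeds as follows:
(i) every node fixes its local pin partition so as to form one circuit per cluster;
(ii) each cluster runs the leader election algorithm on its circuit;
(iii) each cluster leader repeatedly beeps a fresh random bit through its circuit, each node forwards every such bit to its neighbors and compares it against the bit it sent, and this \emph{bit-tossing phase} lasts as long as dictated by a constant number of parallel copies of \Proc{}~$\mathtt{CountingToLogn}$ running on a separate global circuit;
and (iv) each node classifies its incident edges based on the comparisons from stage (iii).

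For the upper bound, stages (i) and (iv) cost $O(1)$ rounds each. Stage (ii) costs $O(\log n)$ rounds whp by \Thm{}~\ref{thm:leader-election}: each cluster uses its own circuit in the role of the global circuit, so the leader elections of all clusters proceed in parallel. The length of stage (iii) is the median runtime of the parallel \Proc{}~$\mathtt{CountingToLogn}$ executions, which by \Lem{}~\ref{lem:counting}, invoked with a sufficiently large constant number $2r-1$ of executions and any fixed $0 < \rho < 1$, is at most $(1+\rho)\log n = O(\log n)$ whp. Summing the stages, the message-passing version of the procedure runs for $O(\log n)$ rounds whp; by \Thm{}~\ref{thm:message-passing-simulation} it is realized in the GRC model with runtime $O(\log n) + 4 \cdot O(\log n) = O(\log n)$ whp. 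The overall whp guarantee follows from a union bound over the constantly many high-probability events involved (leader election, the counting executions, the message-passing simulation).

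For the lower bound, the procedure cannot terminate before stage (iii) ends, and the length of stage (iii) equals the median runtime of the parallel \Proc{}~$\mathtt{CountingToLogn}$ executions, which by \Lem{}~\ref{lem:counting} is at least $(1-\rho)\log n = \Omega(\log n)$ whp. Hence the procedure runs for $\Omega(\log n)$ rounds whp, and combining the two bounds yields $\Theta(\log n)$.

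The main point requiring care --- rather than a genuine obstacle --- is the composition of the probabilistic guarantees under bounded uniformity: the number $2r-1$ of parallel \Proc{}~$\mathtt{CountingToLogn}$ executions and the constant $r$ must be fixed independently of $n$ and chosen large enough that \Lem{}~\ref{lem:counting} yields failure probability $n^{-c}$ for the target constant $c$, and one must verify that the global circuit of the counting executions and the cluster circuits of stages (ii)--(iii) coexist, which holds since only $O(1)$ pin indices are needed and $k$ is a designer-chosen constant. Everything else is routine.
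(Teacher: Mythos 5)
Your proof is correct and takes essentially the same route as the paper, which simply notes that the bound ``follows directly from Lemma~\ref{lem:counting} and Theorem~\ref{thm:leader-election}''; you have merely unpacked that one-line argument into its constituent stages and made the upper/lower-bound reasoning explicit.
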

	
	\section{A Fast Minimum Spanning Tree Algorithm}
	\label{sec:mst}
	In this section, we present a randomized MST algorithm that operates in the GRC model.
	As common in the distributed setting, we assume the edge-weights are integers from the set $\{ 1, \dots, W \}$ for some positive integer $W$.
	Each node $v\in V$ initially knows only the weights of edges in $E(v)$.
	In particular, as dictated by the GRC model, node $v$ does not know the value of $W$ or any other information about $W$.
	
	Our algorithm can be seen as an adaptation of Boruvka's classical MST algorithm~\cite{boruvka1926} to the GRC model.
	Throughout its execution, Boruvka's algorithm maintains an edge set $T$ and a cluster partition defined such that each cluster is a connected component of $(V, T)$.
	Initially, $T = \emptyset$ (and each node is a cluster).
	At each iteration of the algorithm, each cluster $S$ adds a lightest outgoing edge
	$e^{*} = \arg\min_{e \in \partial_{S}} \{ w(e) \}$
	to $T$.
	This means that $S$ merges with the neighboring cluster $S'$ that is incident on $e^{*}$.
	It is well-known that if the edge weights are unique, then Boruvka's algorithm computes an MST of $G$.
	Notice that in our case, edge weights are not necessarily unique, so we construct a symmetry-breaking mechanism based on a total order of the lightest outgoing edges as explained later on. 
	
	The algorithm begins with an empty set of \emph{tree edges} and operates in phases.
	The goal of each phase is to add tree edges similarly to Boruvka's algorithm.
	Let
	$T_{i} \subseteq E$
	denote the tree edges at the end of phase $i\geq 0$.
	As in Boruvka's algorithm, the connected components of $(V, T_{i})$ are defined to be the \emph{clusters} at the beginning of phase $i + 1$.
	The nodes construct a designated circuit for each cluster formed during the algorithm. Additionally, the nodes communicate through a global circuit and exchange messages with their neighbors using the methods described in \Sect~\ref{sec:preliminaries}.
	The operation of each phase is divided into the following stages.
	
	
	\subparagraph{Outgoing Edge Detection.}
	The purpose of this stage is to allow the nodes to identify which of their incident edges is an outgoing edge.
	To that end, the nodes execute the outgoing edge detection procedure described in \Sect{}~\ref{sec:auxiliary}.
	When the procedure terminates, each node detecting an outgoing edge beeps through the global circuit.
	The algorithm terminates if no node beeps in this round through the global circuit.
	Otherwise, the nodes advance to the next stage.
	Denote by $\mathtt{Out}(v)$ the set of edges classified as outgoing by node $v\in V$.
	
	\subparagraph{Lightest Edge Detection.}
	In this stage, each cluster searches for its lightest outgoing edges.
	Fix some cluster $S$.
	At the beginning of this stage, every node $v \in S$ such that
	$\mathtt{Out}(v) \neq \emptyset$
	marks a single edge
	$e \in \mathtt{Out}(v)$
	with weight
	$w(e) = \min_{e'\in \mathtt{Out}(v)} w(e')$
	as a candidate.
	The comparison between weights of the candidate edges incident on the nodes of $S$ is done in two steps. 
	
	First, the nodes compare the lengths of the candidate edge weights (i.e., the number of bits in the edge-weight representation).
	Consider a node $v \in S$ incident on a candidate edge $e$, and let $\ell_{v}=\lfloor \log w(e)\rfloor+1$ be the length of $w(e)$.
	Node $v$ counts $\ell_{v}-1$ rounds.
	If $v$ hears a beep on the cluster's circuit during those $\ell_{v}-1$ rounds, then $v$ unmarks $e$ as a candidate.
	Otherwise, $v$ beeps through the cluster's circuit in round $\ell_{v}$ and keeps $e$ as a candidate edge.
	Following the first step, all remaining candidate edges of $S$ have weights of the same length.
	In the second step, the weights of the candidate edges of $S$ are compared bit by bit, starting from the most significant bit.
	Let $v \in S$ be a node that still has an incident candidate edge $e$.
	The second step runs for $\ell_{v}$ rounds indexed by
	$j = 1, \dots, \ell_{v}$.
	In round $j$, if $e$ is still a candidate, then $v$ beeps through the cluster's circuit if and only if the $j$-th most significant bit of $w(e)$ is $0$.
	If $v$ did not beep but heard a beep through the cluster's circuit, it unmarks $e$ as a candidate edge.
	Notice that at the end of the second step, only the lightest edges that were classified as outgoing remain candidates. 

	In parallel, $v$ beeps through the global circuit at every round of the stage in which $e$ is still a candidate.
	Once $v$ finishes the stage (either because $e$ was marked as a lightest outgoing edge or $e$ was unmarked as a candidate), it stops beeping through the global circuit.
	The stage terminates when no beep is transmitted through the global circuit.   
	
	\subparagraph{Single Edge Selection.}
	At this point, only the edges marked as lightest outgoing edges of each cluster remained candidates.
	However, there may be more than one candidate edge for some clusters.
	The goal of this stage is to select a single edge for each cluster while avoiding the formation of a cycle in the output edge set (as we will show in the analysis).
	To that end, every node $v \in V$ with an incident candidate edge $(u, v)$ informs $u$ that $(u, v)$ is still a candidate.
	Then, each of $u$ and $v$ draws a random bit denoted by $u.bit$ and $v.bit$, respectively.
	Node $u$ sends $u.bit$ to $v$ and $v$ calculates the bitwise XOR of $u.bit$ and $v.bit$.
	Node $v$ beeps through the cluster's circuit if the XOR result is $1$.
	If node $v$ does not beep for edge $e$ but hears a beep through the cluster's circuit, it unmarks $e$ as a candidate. 
	Notice that if $(u, v)$ is lightest with regard to $u$'s cluster as well, then the same operation is performed also by $u$ using the same drawn bits.
	This edge selection process is done in parallel to \Proc{}~$\mathtt{CountingToLogn}$ over the global circuit, executed (a sufficiently large) $c >1$ times.
	The nodes continue to draw bits for their incident candidate edges as long as \Proc{}~$\mathtt{CountingToLogn}$ continues.
	If a node $v\in V$ has an incident candidate edge $e=(u,v)$ at the end of this stage, then it informs $u$, and both endpoints mark $e$ as a tree edge. 
	
	\subparagraph{Updating the Local Pin Partition.}
	Every node $v\in V$ sets its local pin partition to include the pin subset
	$T(v) \times \{ j \}$ for some $j\in [k]$,
	where $T(v)$ is the set of edges incident on $v$ that were marked as tree edges (either in the current or a prior phase).
	Observe that this local pin partition by the nodes constructs a circuit for every cluster.
	
	{\vskip 1pc}
	The output of the algorithm is the set of all tree edges.
	
	\subsection{Analysis}
	\label{sec:mst-analysis}
	In this section, we prove the correctness and analyze the runtime of the MST algorithm presented above, establishing the following theorem.
	
	\begin{theorem}
		\label{thm:mst}
		The algorithm constructs an MST of $G$ whp and runs in $O (\log n \cdot \log (n + W))$
		rounds whp.
	\end{theorem}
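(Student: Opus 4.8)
The plan is to prove the two halves of the statement separately, both along the skeleton of Boruvka's algorithm: (i) the $O(\log n\cdot\log(n+W))$ runtime bound, and (ii) correctness, i.e.\ that the output edge set $T$ is an MST of $G$. Throughout the argument I would treat the $1$-bit neighbour messages used by the algorithm as free, charging them at the very end to \Thm{}~\ref{thm:message-passing-simulation}, which costs only a multiplicative constant plus an additive $O(\log n)$ and hence does not affect the bound.

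For the runtime I would first bound the number of phases: as long as at least two clusters remain, each cluster has an outgoing edge (since $G$ is connected) and therefore merges with a neighbour, so the number of clusters at least halves every phase; hence after $O(\log n)$ phases the outgoing-edge-detection stage reports over the global circuit that no outgoing edge is left and the algorithm halts. Within a phase I would bound the stages separately. Outgoing edge detection costs $\Theta(\log n)$ rounds whp by \Lem{}~\ref{lem:outgoing-runtime}. Lightest edge detection costs $O(\log W)$ rounds: a node incident on a surviving candidate stays active for at most $2(\lfloor\log W\rfloor+1)$ rounds --- a length comparison followed by a most-significant-to-least-significant bitwise comparison, both timed by the locally known quantity $\ell_{v}$ --- and the stage ends once the global circuit falls silent. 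Single edge selection costs $\Theta(\log n)$ rounds whp, its duration being governed by a constant number of invocations of $\mathtt{CountingToLogn}$, whose lengths concentrate around $\log n$ by \Lem{}~\ref{lem:counting}. Summing, a phase lasts $\Theta(\log n+\log W)=\Theta(\log(n+W))$ rounds whp, and over $O(\log n)$ phases this gives $O(\log n\cdot\log(n+W))$. Since each phase invokes only $O(1)$ randomized subroutines, a union bound over the $O(\log n)$ phases --- after inflating the constants hidden in the subroutines' whp guarantees --- yields the overall whp statement.

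For correctness I would argue as for Boruvka's algorithm, with the random symmetry-breaking playing the role of the usual ``distinct weights'' assumption. When the algorithm halts no cluster has an outgoing edge, so $(V,T)$ spans $V$. Assuming outgoing edges are detected correctly each phase (which holds whp by \Lem{}~\ref{lem:outgoing}), the lightest-edge-detection stage is easily checked to leave, for each cluster $S$, only candidate edges of weight $\min_{e\in\partial_{S}}w(e)$ --- the length comparison keeps the minimum-bit-length candidates and the ensuing bitwise comparison keeps the minimum-value ones among those --- so every edge that $S$ ever marks as a tree edge is a minimum-weight outgoing edge of $S$. The heart of the argument is that whp $(V,T)$ never contains a cycle. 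I would analyse single edge selection by noting that, over the $\Theta(\log n)$ rounds it runs, each candidate edge $e$ accumulates a uniformly random bitstring $\beta_{e}$ formed round by round as the XOR of its two endpoints' coin flips (so both endpoints agree on it), that in each round the surviving candidates of a cluster are exactly those attaining the lexicographically largest prefix of $\beta$ seen so far, and that whp the strings $\beta_{e}$ are pairwise distinct and each cluster ends with a single survivor; hence each cluster selects the edge that is minimal, with respect to the global total order $\mathcal{T}$ induced by $e\mapsto\beta_{e}$, among its surviving minimum-weight outgoing edges. If the selected edges closed a cycle of clusters $S_{1},\dots,S_{k}$, then --- exactly as in the classical analysis --- all of these edges would carry the same weight (each is a minimum-weight outgoing edge of a cluster it bridges, which pins the bridged clusters' minima to a common value around the cycle), and the global consistency and antisymmetry of $\mathcal{T}$ would then force consecutive clusters to pick the very same edge: impossible for $k\geq 3$ and harmless (one shared edge, no cycle) for $k=2$. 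Thus $(V,T)$ stays a forest and is a spanning tree at termination. Minimality then follows from the standard Boruvka invariant: inductively $T_{i}$ lies in some MST because each $e_{S}$ added in phase $i$ is a minimum-weight edge of the cut $(S,V\setminus S)$, a cut respected by $T_{i-1}$, and because $T_{i-1}\cup F_{i}$ is acyclic the per-cut exchange arguments can be applied one after another to all edges added in the phase; a spanning tree contained in an MST is an MST.

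The step I expect to be the main obstacle is this cycle-freeness claim, i.e.\ showing that the independently executed single-edge-selection processes are all governed by one global order $\mathcal{T}$ and that $\mathcal{T}$-minimal selection rules out cycles. The delicate points are: (a) that $\beta_{e}$ is well defined and identical at both endpoints of $e$ although only pin-level beeps are communicated and the two clusters containing $e$ run separate selection processes; (b) the probabilistic claims that the halving process leaves exactly one survivor per cluster whp and that the strings $\beta_{e}$ are distinct whp --- both of which require the selection window to be long enough, which is precisely why $\mathtt{CountingToLogn}$ is repeated a sufficiently large constant number of times; and (c) reconciling the rule that each node marks only a single lightest outgoing edge with the cycle argument, which I would handle by examining the $\mathcal{T}$-minimal edge along a hypothetical cluster cycle together with the edge marked at its shared endpoint. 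The remaining ingredients --- the Boruvka bookkeeping for minimality, the stage-by-stage round counts, and the union bound over phases --- are routine.
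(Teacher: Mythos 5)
Your proposal tracks the paper's own proof essentially step for step. The paper splits the theorem into exactly the same pieces: Lemma~\ref{lem:mst-connected-components} and Corollary~\ref{cor:mst-termination-phases} give the $O(\log n)$ phase bound via the halving argument you describe; Lemma~\ref{lem:mst-runtime} charges each phase to an $O(\log n)$ bill (leader election, outgoing-edge detection, single-edge selection, all via $\mathtt{CountingToLogn}$) plus $O(\log W)$ for the two-step weight comparison; and Lemma~\ref{lem:mst-spanning-tree} handles cycle-freeness by introducing a per-phase total order $\prec_{i}$ on $D_{i}$ given by $(w(e), B_{i}(e))$ lexicographically, where $B_{i}(e)$ is the XOR bitstring drawn by $e$'s endpoints --- exactly your $\mathcal{T}$ and $\beta_{e}$. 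The only real difference is cosmetic: to kill a hypothetical cycle $Y$, the paper takes the $\prec_{i}$-largest edge $e$ of $Y\cap D_{i}$, notes the selecting cluster $S$ has another cycle edge $e'\prec_{i}e$ outgoing from $S$, and reads that as contradicting $S$'s choice; you instead first observe that all cycle edges share a weight and then close a cycle of $\mathcal{T}$-inequalities. These are the same argument in different clothing. One small remark in your favour: your point (c) --- that each node marks only \emph{one} lightest outgoing edge, so the cluster's candidate set may omit some minimum-weight outgoing edges and the ``contradicts $S$'s selection'' step therefore needs a sentence of justification --- is a genuine subtlety that the paper's proof simply passes over; flagging it, and the plan to examine the marked edge at the shared endpoint, is a reasonable way to patch both your and the paper's write-up.
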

	
	Recall that
	$T_{i} \subseteq E$
	is the set of tree edges at the end of phase
	$i = 0, 1, \dots$
	and let $i^{*}$ be the last phase of the algorithm.
	Let $q_{i}$ be the number of clusters maintained by the algorithm at the
	beginning of phase $i$, that is, the number of connected components in
	$(V, T_{i})$.
	
	\begin{lemma}
		\label{lem:mst-connected-components}
		Consider a phase $0 \leq i \leq i^{*}$. If $q_{i} = 1$, then the algorithm terminates in phase $i$ whp; otherwise,
		$q_{i + 1} \leq \frac{1}{2} q_{i}$
		whp.
	\end{lemma}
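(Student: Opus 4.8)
\subparagraph*{Proof plan.}
The plan is to treat the two cases separately, the first being immediate and the second following a Bor\r{u}vka-style halving argument once the three randomized stages of the phase are shown to behave correctly whp. For the case $q_i = 1$: since $(V, T_i)$ is connected, the unique cluster at the beginning of phase $i$ is all of $V$ and therefore has no outgoing edges. Conditioning on the whp event of \Lem{}~\ref{lem:outgoing} that every edge is classified correctly, no node classifies any incident edge as outgoing, so $\mathtt{Out}(v) = \emptyset$ for all $v$, no node beeps through the global circuit at the end of the Outgoing Edge Detection stage, and the algorithm terminates in phase $i$ whp.

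For the case $q_i > 1$, first note that since $G$ is connected, every cluster $S$ at the beginning of phase $i$ has at least one outgoing edge; otherwise $S$ would be a connected component of $G$, forcing $q_i = 1$. Condition again on the whp event of \Lem{}~\ref{lem:outgoing}, so each node knows the sets $\mathtt{Out}(v)$ exactly; in particular, in each cluster $S$ at least one node $v$ has $\mathtt{Out}(v) \neq \emptyset$ and marks a candidate edge. Given the $\mathtt{Out}(v)$'s, the Lightest Edge Detection stage is deterministic, and a routine check shows that for each cluster $S$ it leaves precisely the set of minimum-weight outgoing edges of $S$ that were marked (hence at least one candidate edge): the first step equalizes the bit-lengths of the surviving candidates through the $\ell_v$-round silence/beep mechanism, and the second step performs a most-significant-first bitwise elimination over the equalized weights. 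Also, whp at least one node detects an outgoing edge and beeps through the global circuit, so the algorithm does not terminate in phase $i$ and proceeds to form $T_{i+1}$.

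The crux is the Single Edge Selection stage: I will show that whp, when it ends, every cluster is left with exactly one candidate edge. Fix a cluster $S$ and let $c_t$ be the number of candidate edges of $S$ after $t$ rounds of this stage, so $c_0 \le |S| \le n$. Within a round, the bits $b_e = u.bit \oplus v.bit$ attached to the candidate edges $e = (u,v)$ of $S$ (with $v \in S$, $u \notin S$) are mutually independent and uniform: since each node marks at most one candidate, the distinct candidates of $S$ are handled by distinct nodes of $S$, so the inner bits $v.bit$ are independent fresh uniform bits, and as $b_e$ is the XOR of such a bit with a quantity not involving it, the $b_e$'s are mutually independent. Consequently, in a round with $c \ge 2$ candidates: if the $b_e$'s are not all equal, every bit-$0$ candidate is unmarked (its endpoint hears a beep but did not beep) while every bit-$1$ candidate survives, so $c_{t+1}$ is the number of bit-$1$ candidates, lying in $\{1, \dots, c-1\}$; if all bits are equal — probability $2^{1-c} \le 1/2$ — then $c_{t+1} = c_t$; and once $c_t = 1$ it stays $1$ forever. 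A short computation with $c_{t+1}$ distributed as above gives $\mathbb{E}[c_{t+1} - 1 \mid c_t] \le \tfrac{3}{4}(c_t - 1)$ (trivially when $c_t = 1$), hence $\mathbb{E}[c_t - 1] \le (3/4)^t (c_0 - 1) \le (3/4)^t n$. By Markov's inequality, after $t = \Theta(\log n)$ rounds $\Pr[c_t \ge 2] \le n^{-c'}$ for an arbitrarily large constant $c'$, and a union bound over the at most $n$ clusters shows that whp all clusters reach a single candidate. Since the stage runs in parallel with $c$ sequential executions of \Proc{}~$\mathtt{CountingToLogn}$ over the global circuit, \Lem{}~\ref{lem:counting} ensures that, for a large enough constant $c$, the stage lasts at least this many rounds whp, so the process does reach its single-candidate state before the stage ends.

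Finally, conditioning on all of the above whp events, each cluster $S$ of phase $i$ contributes exactly one outgoing edge $e_S$ to $T_{i+1}$. Contracting each cluster to a vertex, the edges $\{e_S\}$ form a graph on $q_i$ vertices in which every vertex has degree at least one (each $e_S$ joins $S$ to a different cluster), so every connected component has at least two vertices and there are at most $q_i/2$ components. Since these components are exactly the clusters at the beginning of phase $i+1$, we get $q_{i+1} \le \tfrac12 q_i$ whp. The main obstacle is the Single Edge Selection analysis — the per-round independence of the edge bits, the $\tfrac34$-contraction of the potential $c_t - 1$, and coupling this with \Lem{}~\ref{lem:counting} so the stage is neither too short here nor (for the runtime claim of \Thm{}~\ref{thm:mst}) too long; the $q_i = 1$ case and the Bor\r{u}vka halving count are standard.
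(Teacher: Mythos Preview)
Your proof is correct, and at the top level it follows the same two-case structure as the paper: invoke \Lem{}~\ref{lem:outgoing} for correct outgoing-edge classification, then use the Bor\r{u}vka halving count. Where you diverge is in the middle: you treat the Single Edge Selection stage as the ``crux'' and carry out a full potential-function analysis (independence of the XOR bits, $\tfrac34$-contraction of $c_t-1$, coupling with $\mathtt{CountingToLogn}$) to show that each cluster ends with \emph{exactly one} candidate whp. The paper's proof of this lemma is two sentences and never analyzes that stage at all, because for the halving claim one only needs \emph{at least one} candidate per cluster, and that holds deterministically by design: a candidate is unmarked only when some other candidate in the same cluster beeps, so the candidate set of a cluster can never drop from nonempty to empty. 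Hence, once \Lem{}~\ref{lem:outgoing} guarantees that every cluster starts the phase with a genuine outgoing edge, each cluster adds at least one tree edge and merges, giving $q_{i+1}\le q_i/2$.

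So your extra work is not wrong, just misplaced: the ``exactly one candidate whp'' property (equivalently, uniqueness of the $B_i(\cdot)$ values) is what the paper uses later, in the proof of \Lem{}~\ref{lem:mst-spanning-tree}, to rule out cycles. For the present lemma you can delete the entire Single Edge Selection analysis and replace it with the one-line observation above.
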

	
	The proof of \Lem{}~\ref{lem:mst-connected-components} is deferred to
	Appendix~\ref{appendix:missing-proofs:mst}.
	Notice that since the algorithm starts with $n$ clusters,
	\Lem{}~\ref{lem:mst-connected-components} implies the following corollary.
	
	\begin{corollary}
		\label{cor:mst-termination-phases}
		The algorithm terminates after $i^{*} = O (\log n)$ phases whp. Moreover, the subgraph $(V, T_{i^{*}})$
		is connected whp.
	\end{corollary}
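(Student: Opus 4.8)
The plan is to bootstrap from \Lem{}~\ref{lem:mst-connected-components}: that lemma gives a geometric (factor-$\frac{1}{2}$) decay of the cluster count in every non-terminating phase, so after logarithmically many phases the cluster count must reach $1$, whereupon the lemma's other clause forces termination. Concretely, I would first fix an a-priori bound $N := \lceil \log_2 n \rceil + 1$ on the number of phases; this serves only to make the union bound below well defined, circumventing the fact that the index $i^*$ of the last phase is itself a random variable. For each $i \in \{0, 1, \dots, N\}$, let $\mathcal{G}_i$ be the high-probability event supplied by \Lem{}~\ref{lem:mst-connected-components} for phase $i$, read as: if $q_i = 1$ then the algorithm terminates in phase $i$, and if $q_i > 1$ then the algorithm proceeds to phase $i + 1$ with $q_{i+1} \le \frac{1}{2} q_i$. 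Each $\mathcal{G}_i$ fails with probability at most $n^{-c}$ for an arbitrarily large constant $c$, so a union bound over the $N + 1 = O(\log n)$ relevant phases shows that $\mathcal{G} := \bigcap_{i=0}^{N} \mathcal{G}_i$ holds whp (the extra $O(\log n)$ factor is absorbed by decreasing $c$ by one, still an arbitrarily large constant).

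Conditioned on $\mathcal{G}$, the rest is a deterministic argument. Starting from $q_0 = n$ and repeatedly invoking the ``$q_i > 1$'' clause, an easy induction shows $q_i \le n / 2^i$ as long as $q_0, \dots, q_{i-1}$ all exceed $1$. Since each $q_i$ is a positive integer, this bound forces the cluster count to equal $1$ no later than phase $\lceil \log_2 n \rceil \le N$; let $i_0$ be the first phase with $q_{i_0} = 1$. On $\mathcal{G}$ the algorithm does not terminate in any phase before $i_0$ (every such phase falls under the ``$q_i > 1$'' clause and hence proceeds), and it does terminate in phase $i_0$ (the ``$q_{i_0} = 1$'' clause); thus $i^* = i_0 = O(\log n)$, giving the first claim. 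For the second claim, note that $q_{i^*} = q_{i_0} = 1$, and by definition $q_{i^*}$ is the number of connected components of $(V, T_{i^*})$; hence $(V, T_{i^*})$ is connected.

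The only step demanding genuine care, rather than routine calculation, is the union bound: one must avoid conditioning on events indexed by the random stopping phase $i^*$, and the remedy is exactly the a-priori cap $N$, retroactively justified by the deterministic argument above (which shows termination occurs within $N$ phases on $\mathcal{G}$). Everything else --- geometric shrinkage, integer rounding, and reading off connectivity from $q_{i^*} = 1$ --- is immediate.
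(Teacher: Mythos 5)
Your proof is correct and follows essentially the same route the paper intends (the paper treats the corollary as an immediate consequence of Lemma~\ref{lem:mst-connected-components}, leaving the details implicit). The one thing you do that the paper does not spell out --- introducing the a-priori cap $N = \lceil \log_2 n \rceil + 1$ so that the union bound ranges over a deterministic set of phases, and only then deducing retroactively that $i^*$ falls within that cap --- is exactly the right technical fix for the fact that $i^*$ is itself a random stopping index, and it is a slightly more careful version of the argument the authors gloss over.
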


	Denote by $D_{i} \subseteq E$ the set of edges that are candidates for some (at least one) cluster at the end of the single edge selection stage of phase $i$ (to be marked as tree edges).
	\begin{lemma}
		\label{lem:mst-spanning-tree}
		The subgraph $(V, T_{i^{*}})$ is a spanning tree of $G$ whp.
	\end{lemma}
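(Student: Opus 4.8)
The plan is to obtain the lemma by combining \Cor{}~\ref{cor:mst-termination-phases} with a separate acyclicity statement. \Cor{}~\ref{cor:mst-termination-phases} already gives that $(V, T_{i^{*}})$ is connected whp, and its vertex set is all of $V$ by construction; so once we know that $(V, T_{i^{*}})$ is acyclic whp, the properties connected $+$ acyclic $+$ spanning combine to give that it is a spanning tree of $G$. Hence the entire task reduces to showing that $T_{i^{*}}$ is cycle-free whp. Since $T_{i} = T_{i-1} \cup D_{i}$ and every edge of $D_{i}$ joins, at the time of phase $i$, two distinct connected components of $(V, T_{i-1})$ (it is outgoing from its endpoints' clusters), it suffices to prove the per-phase statement: conditioned on $(V, T_{i-1})$ being acyclic with its components being exactly the phase-$i$ clusters, the graph $(V, T_{i-1} \cup D_{i})$ is acyclic whp. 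A union bound over the $O(\log n)$ phases guaranteed whp by \Cor{}~\ref{cor:mst-termination-phases} then completes the proof.

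For the per-phase statement I would contract each phase-$i$ cluster to a vertex and let $\mathcal{H}$ be the resulting multigraph, with edge set $D_{i}$; since no edge of $D_{i}$ lies inside a cluster, every cycle of $(V, T_{i-1} \cup D_{i})$ projects to a cycle of $\mathcal{H}$ (a pair of distinct parallel edges of $D_{i}$ between the same two clusters counting as a cycle of length $2$), so it is enough to show that $\mathcal{H}$ is a forest whp. I would establish two facts. First, whp every cluster $S$ with $\partial_{S} \neq \emptyset$ contributes exactly one edge $e_{S}$ to $D_{i}$: after the lightest-edge-detection stage such a cluster retains a nonempty set of candidate edges, all of weight $\min_{e \in \partial_{S}} w(e)$, and the single-edge-selection stage---which runs for $\Theta(\log n)$ rounds, as indicated by the repeated invocation of \Proc{}~$\mathtt{CountingToLogn}$ and quantified by \Lem{}~\ref{lem:counting}---whittles this set down to a single survivor whp, because whp all candidate edges have pairwise distinct $\Theta(\log n)$-bit random sequences (a union bound over the $O(n)$ candidate edges). (If some cluster has $\partial_{S} = \emptyset$, then $q_{i} = 1$ and there is nothing to prove.) Second, the random bit sequence of an edge $e = (u,v)$ is obtained by XORing, round by round, the bits drawn by $u$ and by $v$, so both endpoints---hence both clusters for which $e$ is a candidate---agree on it, and the survivor of each cluster is precisely its candidate of lexicographically-largest bit sequence. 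Thus these sequences induce a single \emph{global} total order $\mathcal{T}$ on the candidate edges, and each cluster selects its $\mathcal{T}$-minimal candidate.

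With these in hand I would run the classical cycle-freeness argument of Boruvka's algorithm, adapted to a global tie-break. Suppose, towards a contradiction, that $\mathcal{H}$ contains a cycle; following each cluster to the other endpoint of its selected edge yields a cyclic sequence of distinct clusters $S_{1}, S_{2}, \dots, S_{\ell}$ with $\ell \geq 2$, where $e_{S_{j}}$ joins $S_{j}$ and $S_{j+1}$ (indices mod $\ell$). Since $e_{S_{j}}$ is a minimum-weight outgoing edge of $S_{j}$ and is also outgoing from $S_{j+1}$, we get $w(e_{S_{j+1}}) = \min_{e \in \partial_{S_{j+1}}} w(e) \leq w(e_{S_{j}})$ for every $j$, so going around the cycle forces all the weights $w(e_{S_{j}})$ to be equal; in particular $e_{S_{j}}$ also has minimum outgoing weight for $S_{j+1}$, and---being an edge that survived to be selected by $S_{j}$, hence was marked by one of its endpoints---it is a candidate of $S_{j+1}$ in the single-edge-selection stage. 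Therefore $e_{S_{j+1}} \leq_{\mathcal{T}} e_{S_{j}}$ for all $j$, and going around the cycle in $\mathcal{T}$ forces $e_{S_{1}} = e_{S_{2}} = \dots = e_{S_{\ell}}$; but then $\ell = 2$ and this single edge is traversed twice, contradicting that it was a genuine cycle. Hence $\mathcal{H}$ is a forest whp.

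The step I expect to be the main obstacle is matching the cycle argument to the algorithm's bookkeeping of candidates: the lightest-edge-detection stage keeps only \emph{some} minimum-weight outgoing edge per node rather than a canonical one, so one must argue carefully---via the weight-equality computation above together with the observation that any edge actually selected by a cluster was marked by one of its endpoints---that the specific edge $e_{S_{j}}$ appearing in a putative cycle is indeed a candidate of the cluster $S_{j+1}$ it enters, and that the ``informed neighbor'' and ``lightest with regard to $u$'s cluster'' provisions of the single-edge-selection stage make the beep/unmark elimination on each cluster's circuit act on exactly this candidate set. The rest is routine: picking the constant number of repetitions of \Proc{}~$\mathtt{CountingToLogn}$ large enough (through \Lem{}~\ref{lem:counting}) that the $\Theta(\log n)$-round budget is simultaneously long enough to break all ties whp and short enough to keep the per-phase, and hence overall, runtime within the bound of \Thm{}~\ref{thm:mst}, and taking union bounds over the $O(n)$ candidate edges and the $O(\log n)$ phases.
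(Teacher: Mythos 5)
Your proof is correct and mirrors the paper's high-level structure: reduce to per-phase acyclicity of $(V,T_i\cup D_i)$ via the global total order induced by the XORed random bit sequences, then union-bound over the $O(\log n)$ phases. The only real divergence is in the per-phase cycle-elimination step. The paper fixes a simple cycle $Y$ in $(V, T_i\cup D_i)$, picks the $\prec_i$-largest edge $e\in Y\cap D_i$, and exhibits a second edge $e'\in Y\cap D_i$ outgoing from the cluster $S$ that selected $e$ with $e'\prec_i e$, contradicting the selection of $e$ by $S$. You instead contract clusters and walk the selection functional graph around its directed cycle $S_1\to\cdots\to S_\ell\to S_1$, propagate $w(e_{S_1})\geq\cdots\geq w(e_{S_\ell})\geq w(e_{S_1})$ to get weight equality, and then invoke the tie-break. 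The two buy the same result; the paper's extremal-edge argument is a bit shorter because it does not need the weight-equality propagation, while yours makes more explicit which cluster's choice is being contradicted. Both versions hinge on the same delicate point you yourself flag --- that the competing edge really does participate in the relevant cluster's tournament --- which the algorithm ensures through the ``informs the other endpoint'' step and the provision that the informed endpoint $u$ also beeps when the edge is lightest with regard to $u$'s cluster. One small step you leave implicit and should state: the passage from ``$\mathcal{H}$ has a cycle'' to ``the selection functional graph has a directed cycle of length $\geq 3$, or of length $2$ over two distinct underlying edges.'' This is the standard fact that each weakly connected component of a one-out-arc functional graph contains exactly one directed cycle, and the component's undirected edge count equals its vertex count unless that cycle is a $2$-cycle reusing a single undirected edge; a one-line remark would close the argument cleanly.
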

	\begin{proof}
		By \Cor{}~\ref{cor:mst-termination-phases}, $(V, T_{i^{*}})$ is connected whp. So, it is left to show that $(V, T_{i^{*}})$ is a forest whp. We prove by induction over the phases that $(V, T_{i})$
		is a forest whp for all $0 \leq i \leq i^{*}$.
		\Cor{}~\ref{cor:mst-termination-phases} also guarantees that there are $O (\log n)$ phases whp; hence the statement follows by applying union bound over the phases.
		
		For the base of the induction, notice that $T_{0} = \emptyset$, and thus $(V, T_{0})$ is a forest. Now, suppose that $(V, T_{i})$ is a forest for some $0 \leq i < i^{*}$.
		We show that
		$(V, T_{i + 1})
		=
		(V, T_{i} \cup D_{i})$
		is a forest whp. For every edge $e \in D_{i}$, let $B_{i}(e)$ be the integer obtained from the binary representation of the bit sequence drawn for $e$ by its endpoints (i.e., the sequence of XORed bits) in the single edge selection stage of phase $i$.
		Define the binary relation $\prec_{i}$ for every two edges $e, e' \in D_{i}$ as: 
		\[
		e \prec_{i}	e'
		\iff
		w(e) < w(e')
		\lor
		\left( w(e) = w(e') \land B_{i}(e) > B_{i}(e') \right) \, .
		\]
		Notice that by repeating the
		$\mathtt{CountingToLogn}$
		for a sufficiently large number of times, we get that the $B_{i}(\cdot)$ values are unique whp. By the construction of the single edge selection stage, this means that each cluster selects exactly one outgoing edge whp --- the lightest outgoing edge which is minimal with respect to $\prec_{i}$.
		To complete our proof, we show that if the $B_{i}(\cdot)$ values are unique and $(V, T_{i})$ is a forest, then
		$(V, T_{i + 1}) = (V, T_{i}\cup D_{i})$
		is a forest. 
		
		Assume by contradiction that there exists at least one cycle in $(V, T_{i}\cup D_{i})$ and let $Y$ be a simple cycle in $(V, T_{i} \cup D_{i})$.
		By the induction hypothesis we know that $(V, T_{i})$ is a forest, therefore
		$Y \cap D_{i} \neq \emptyset$.
		Let $e\in Y\cap D_{i}$ be the (unique) largest edge (with respect to $\prec_{i}$) of $Y \cap D_{i}$, and let $S$ be the cluster that selected $e$.
		Observe that since $(V, T_{i})$ is a forest and $Y$ is a cycle, there exists another edge $e'\in Y\cap D_{i} - \{ e \}$
		which is an outgoing edge of $S$.
		However, by the choice of $e$, we know that $e'\prec_{i} e$,
		in contradiction to the selection of $e$ by $S$.
	\end{proof}
	
	The following lemma asserts the correctness of our MST algorithm.
	
	\begin{lemma}
		\label{lem:mst}
		The graph $(V, T_{i^{*}})$ is an MST of $G$ whp.
	\end{lemma}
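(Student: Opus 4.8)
By \Lem{}~\ref{lem:mst-spanning-tree}, $(V, T_{i^{*}})$ is a spanning tree of $G$ whp, and by \Cor{}~\ref{cor:mst-termination-phases} the algorithm runs for $i^{*} = O(\log n)$ phases whp; so it remains to establish, whp, that $(V, T_{i^{*}})$ has minimum weight. The plan is to prove by induction on the phase index $i$ that whp $T_{i}$ is contained in some MST of $G$. Since there are $O(\log n)$ phases whp, a union bound yields this simultaneously for all $0 \le i \le i^{*}$; applying it with $i = i^{*}$ and combining with the spanning-tree property of \Lem{}~\ref{lem:mst-spanning-tree} (so that the MST containing $T_{i^{*}}$ must equal $T_{i^{*}}$) gives the lemma. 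The base case $i = 0$ is trivial since $T_{0} = \emptyset$.

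For the inductive step, I would assume $T_{i}$ is contained in an MST $M$ and construct an MST $M'$ with $T_{i + 1} = T_{i} \cup D_{i} \subseteq M'$. The first ingredient is the observation that the edge $e_{S}$ selected by a cluster $S$ (a connected component of $(V, T_{i})$) satisfies $w(e_{S}) = \min_{g \in \partial_{S}} w(g)$: each node $v \in S$ marks a lightest edge of its own outgoing set $\mathtt{Out}(v)$, the lightest edge detection stage discards every marked edge whose weight exceeds the cluster minimum, and for every $g \in \partial_{S}$ the endpoint of $g$ lying in $S$ marks an edge of weight at most $w(g)$; hence the surviving candidates, and in particular $e_{S}$, attain $\min_{g \in \partial_{S}} w(g)$.

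Given this, I would transform $M$ into $M'$ by a sequence of edge swaps, one for each edge of $D_{i} \setminus M$, performed in non-decreasing order of weight with ties broken according to the total order $\prec_{i}$ from the proof of \Lem{}~\ref{lem:mst-spanning-tree}. When it is the turn of $e_{S}$ and $e_{S}$ is not yet in the current tree, the fundamental cycle of $e_{S}$ with respect to that tree crosses the cut $(S, V \setminus S)$ a second time, at some $S$-outgoing edge $e'$; I would remove the $\prec_{i}$-maximal such $e'$ on the cycle. Because $w(e') \ge \min_{g \in \partial_{S}} w(g) = w(e_{S})$, the swap does not increase the total weight, so the tree remains an MST; because $e'$ is an outgoing edge of $S$ it belongs to no $T_{i}$-component, so $T_{i}$ is preserved; and the combination of the weight ordering, the fact that $\prec_{i}$ is a single global order valid throughout phase $i$, and the acyclicity of $T_{i} \cup D_{i}$ established in \Lem{}~\ref{lem:mst-spanning-tree} ensures that $e'$ is neither $e_{S}$ itself nor any edge of $D_{i}$ inserted earlier in the sequence. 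Thus each swap makes genuine progress while maintaining the invariant that the current tree is an MST containing $T_{i}$ together with all already-processed edges of $D_{i}$; after all swaps we obtain the desired $M'$. This completes the induction, and hence the lemma.

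\textbf{Main obstacle.} The crux is not any single swap (a routine exchange argument) but making the per-cluster swaps mutually consistent, i.e.\ guaranteeing that no swap discards an edge that another cluster needs or that was just added. This is precisely the point where it is essential that $\prec_{i}$ is one global total order over \emph{all} the lightest-outgoing (candidate) edges of the phase, as stressed in the technical overview, and where the acyclicity of $T_{i} \cup D_{i}$ from \Lem{}~\ref{lem:mst-spanning-tree} carries the combinatorial weight; getting the bookkeeping of the swap order exactly right (so that removed edges always have weight at least $w(e_{S})$ and are never re-needed) is the delicate part of the argument.
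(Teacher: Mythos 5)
Your approach differs from the paper's: the paper's proof is a one-liner that establishes (i) $(V,T_{i^*})$ is a spanning tree whp (Lemma~\ref{lem:mst-spanning-tree}) and (ii) every cluster selects a single lightest outgoing edge whp, and then simply \emph{cites} the classical correctness of Boruvka's algorithm. You instead unpack Boruvka's correctness and reprove it from scratch via an exchange argument, maintaining by induction that $T_i$ lies in some MST. This is a more self-contained and more explicit route, and your first ingredient---that the edge $e_S$ picked by cluster $S$ attains $\min_{g\in\partial_S} w(g)$---is correct and cleanly argued.

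However, the exchange argument as written has a gap at precisely the step you flag as ``delicate.'' You claim that removing the $\prec_i$-maximal $S$-outgoing edge $e'$ on the fundamental cycle of $e_S$ in the current tree is safe, i.e., that $e'$ is not an already-inserted $D_i$ edge. But this is asserted, not proved, and it does not obviously follow from the ingredients you invoke. Concretely, suppose every $S$-outgoing edge $e'\neq e_S$ on the cycle happens to lie in $D_i$ (selected by a neighboring cluster) with $w(e')=w(e_S)$; then $e'\prec_i e_S$ is exactly the condition under which $e'$ was processed earlier, and ``$\prec_i$-maximal among $F$'' does not rule it out, because the $\prec_i$-minimum of the cluster's \emph{actual candidate pool} (which is what $S$ selects) need not coincide with the $\prec_i$-minimum over all lightest $S$-outgoing edges (each node marks only one lightest incident outgoing edge, so not every lightest outgoing edge of $S$ is a candidate). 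You also do not address edges in $D_i\cap M$, which must survive every swap but are never explicitly protected by your invariant. Neither the weight ordering, nor the globality of $\prec_i$, nor the acyclicity of $T_i\cup D_i$ (which concerns a different edge set than the cycle in $M_t+e_S$) closes this gap as stated. The paper avoids engaging with this subtlety by invoking Boruvka's correctness as a black box; if you want a self-contained exchange argument, the removal rule and its safety need a genuine proof, not an appeal to $\prec_i$-maximality.
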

	\begin{proof}
		By \Lem{}~\ref{lem:mst-spanning-tree}, the graph $(V, T_{i^{*}})$ is a spanning tree of $G$ whp.
		The proof of \Lem{}~\ref{lem:mst-spanning-tree} shows that every cluster selects a single lightest outgoing edge in each phase whp.
		The statement then follows from the correctness of Boruvka's algorithm~\cite{boruvka1926}.
	\end{proof}
	
	It remains to analyze the runtime of the algorithm.
	
	\begin{lemma}
		\label{lem:mst-runtime}
		The MST algorithm runs in
		$O (\log n \cdot \log (n + W))$
		rounds whp.
	\end{lemma}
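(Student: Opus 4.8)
The plan is to bound the number of rounds consumed by a single phase by $O(\log(n+W))$ whp, and then to multiply by the $O(\log n)$ bound on the number of phases guaranteed by \Cor{}~\ref{cor:mst-termination-phases}. I will carry out the accounting in the message-passing-augmented version of the GRC model (as per the convention fixed after \Thm{}~\ref{thm:message-passing-simulation}) and invoke \Thm{}~\ref{thm:message-passing-simulation} only at the very end; since that transformation multiplies the runtime by a constant and adds an $O(\log n)$ term, it does not affect the claimed asymptotics.

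Fix a phase and bound its four stages in turn. The \emph{outgoing edge detection} stage runs the procedure of \Sect{}~\ref{sec:auxiliary}, which by \Lem{}~\ref{lem:outgoing-runtime} terminates in $\Theta(\log n)$ rounds whp (plus one extra round for the global-circuit beep that decides whether the whole algorithm halts). The \emph{lightest edge detection} stage is deterministic given the edge weights: its first step runs for $\ell_v \le \lfloor \log W \rfloor + 1$ rounds (a node incident on candidate $e$ counts $\ell_v-1$ rounds and possibly beeps in round $\ell_v$), and its second step runs for at most $\ell_v$ further rounds of bitwise comparison; since all weights lie in $\{1,\dots,W\}$ and, by the paper's no-leading-zeros encoding convention, $\ell_v = O(\log W)$ for every $v$, the global-circuit beeping mechanism keeps the stage alive for only $O(\log W)$ rounds. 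The \emph{single edge selection} stage runs $\mathtt{CountingToLogn}$ a sufficiently large constant number of times over the global circuit while drawing random bits in parallel; by \Lem{}~\ref{lem:counting} this lasts $\Theta(\log n)$ rounds whp. Finally, \emph{updating the local pin partition} is a single round. Summing, a phase takes $O(\log n) + O(\log W) + O(\log n) + O(1) = O(\log n + \log W) = O(\log(n+W))$ rounds whp.

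To conclude, I take a union bound over the $O(\log n)$ phases, and within each phase over the $O(1)$ randomized sub-procedures (the outgoing-edge-detection procedure and the repeated invocations of $\mathtt{CountingToLogn}$). Each such ``bad'' event — a phase exceeding its round budget — has probability at most $n^{-c'}$ for a constant $c'$ that can be made arbitrarily large by choosing the repetition constants large enough; hence the union of all $O(\log n)$ bad events has probability at most $O(\log n)\cdot n^{-c'} \le n^{-c}$ for the desired constant $c$. Thus the runtime in the augmented model is $O(\log n \cdot \log(n+W))$ whp, and applying \Thm{}~\ref{thm:message-passing-simulation} gives $O(\log n) + 4\cdot O(\log n \cdot \log(n+W)) = O(\log n \cdot \log(n+W))$ in the pure GRC model.

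Most of this is routine per-stage bookkeeping; the only points that need care are (i) arguing that the \emph{lightest edge detection} stage is governed by $\log W$ alone and does not smuggle in any dependence on a global graph parameter (this is exactly where the no-leading-zeros encoding and the $[1,W]$ weight range are used), and (ii) ensuring the whp quality survives the union bound over $\Theta(\log n)$ phases, i.e., fixing the repetition constants in $\mathtt{CountingToLogn}$ and in the outgoing-edge-detection procedure so that each phase's failure probability beats $n^{-c}$ with enough slack.
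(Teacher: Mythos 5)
Your proof is correct and follows essentially the same decomposition as the paper's: $O(\log n)$ phases via \Cor{}~\ref{cor:mst-termination-phases}, each costing $O(\log n)$ whp for the two randomized stages (outgoing edge detection, single edge selection) plus $O(\log W)$ for the deterministic lightest edge detection, giving $O(\log n + \log W) = O(\log(n+W))$ per phase. Your proposal is somewhat more explicit about the union bound over phases and the final invocation of \Thm{}~\ref{thm:message-passing-simulation}, but these are routine points the paper leaves implicit.
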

	\begin{proof}
		By Corollary \ref{cor:mst-termination-phases}, the algorithm runs for $O (\log n)$ phases whp.
		We are left to bound the runtime of each phase.
		Every execution of the leader election algorithm and \Proc{}~$\mathtt{CountingToLogn}$ takes $O (\log n)$ rounds whp.
		Hence, the outgoing edge detection and single edge selection stages each take $O (\log n)$ rounds whp.
		The lightest edge detection stage completes in $O(\log W)$ rounds, and updating the local pin partition does not require any communication.
		Therefore, every phase of the algorithm completes in $O (\log n + \log W)$ rounds whp. Overall, we get a runtime bound of $O (\log n (\log n + \log W))=O (\log n \cdot \log (n \cdot W))=O(\log n \cdot \log (n + W))$
		rounds whp, where the last equality hods because $\log (n \cdot W)=\log n+\log W=O(\log (n+W) )$.
	\end{proof}
	\section{A Sparse Spanner Algorithm}
	\label{sec:spanner}
	In this section, we present a randomized spanner algorithm that operates in
	the GRC model.
	Given a parameter $\kappa\in \mathbb{Z}_{>0}$ and a constant $0 < \varepsilon
	< 1$, the algorithm constructs a spanner with a stretch of $(2 \kappa - 1)$
	whp and $O (n^{1 + (1 + \varepsilon) / \kappa})$ edges in expectation.
	More concretely, we prove the following theorem.
	
	\begin{theorem}\label{thm:spanner}
		There exists an algorithm in the GRC model that computes a set $H\subseteq E$
		of edges such that $H$ is a $(2 \kappa - 1)$-spanner whp, and $\mathbb{E}[|H|]
		= O (n^{1 + \frac{1 + \varepsilon}{\kappa}})$.
		The runtime of the algorithm is $O (\kappa+\log n)$ rounds whp, and the memory
		space used by each node $v \in V$ is $O (\deg (v) + \kappa)$.
	\end{theorem}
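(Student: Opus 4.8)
The plan is to transport the random-shift clustering of \cite{MillerPX13}, in the sparse-spanner form of \cite{ForsterGV21}, to the GRC model, replacing the single globally-informed step of that algorithm --- each node sampling $\delta_{v}\sim GeomCap(1-n^{-1/\kappa},\kappa-1)$ --- by a boundedly uniform \emph{sampling procedure} whose output is close enough to lose only a $(1+\varepsilon)$ factor in the exponent of the size bound. In the sampling procedure, each node $v$ runs $\kappa-1$ independent \emph{survival trials} in parallel: in one trial $v$ tosses a fair coin each round and is eliminated from that trial the first time a coin comes up tails, and $v$ sets $\delta_{v}$ to the length of the initial run of trials that survive to the end (so $\delta_{v}\in\{0,\dots,\kappa-1\}$ deterministically, and the $\kappa-1$ trial flags fit in $O(\kappa)$ bits). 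The trials run for a number $D$ of rounds dictated by \Proc{}~$\mathtt{CountingToLogn}$ on a global circuit --- e.g.\ by letting every node act on every $\kappa$-th round of that procedure and taking, as in \Lem{}~\ref{lem:counting}, the median runtime of $2r-1$ parallel copies for a suitable constant $r=r(\varepsilon)$ --- so that $D\in\bigl[(1-\rho)(\log n)/\kappa,\,(1+\rho)(\log n)/\kappa\bigr]$ whp for a constant $\rho=\rho(\varepsilon)$ of our choosing. Conditioned on $D$, the $\delta_{v}$'s are mutually independent and each $\delta_{v}\sim GeomCap(1-2^{-D},\kappa-1)$ --- a genuine capped geometric, only with the parameter $1-2^{-D}\in\bigl[1-n^{-(1-\rho)/\kappa},\,1-n^{-(1+\rho)/\kappa}\bigr]$ perturbed from the target $1-n^{-1/\kappa}$.

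Given the shifts, the nodes build the shifted-distance clustering. Write $\mu_{v}=(\kappa-1)+\min_{u\in V}\bigl(d_{G}(u,v)-\delta_{u}\bigr)$; since $u=v$ is always a candidate, $\mu_{v}\in\{0,\dots,\kappa-1\}$. Each $\mu_{v}$, together with a parent pointer, is computed by a \emph{timed} breadth-first search of depth $O(\kappa)$: using $1$-bit messages (invoking \Thm{}~\ref{thm:message-passing-simulation}), node $v$ \emph{settles} in round $\mu_{v}$, either because it is a cluster center (no neighbor has yet signalled and $\delta_{v}=(\kappa-1)-\mu_{v}$) or because a neighbor signalled in round $\mu_{v}-1$, in which case $v$ points to that neighbor, ties broken by port number. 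The parent pointers form a spanning forest whose trees are the clusters; each tree has radius at most $\kappa-1$ from its center (in fact $=\delta_{\text{center}}$). The nodes then install a circuit per cluster out of the parent edges, elect a cluster leader via \Thm{}~\ref{thm:leader-election}, let each leader broadcast a fresh $\Theta(\log n)$-bit identifier on its cluster's circuit, and --- as in the outgoing-edge-detection procedure of \Sect{}~\ref{sec:auxiliary}, whose forwarding step each node reuses --- learn the identifier of the cluster across each incident inter-cluster edge (the identifiers being processed bit by bit, keeping the per-node space $O(\deg(v))$). The output $H$ is the set of parent edges together with, for each node $v$ and each neighbouring-cluster identifier distinct from $v$'s own, the minimum-port incident edge of $v$ that reaches that identifier.

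For the \emph{stretch}, two nodes of the same cluster lie within $2(\kappa-1)<2\kappa-1$ of each other in $H$ via their common center, and for an edge $(u,v)$ with $u\in C_{u}$, $v\in C_{v}$, $C_{u}\neq C_{v}$, node $u$ has placed in $H$ some edge $(u,v')$ with $v'\in C_{v}$ --- using that no two clusters share an identifier, which holds whp by a birthday bound over the at most $n$ clusters --- so $d_{H}(u,v)\leq 1+d_{H}(v',v)\leq 1+2(\kappa-1)=2\kappa-1$; by the edge characterization of spanners recalled in \Sect{}~\ref{sec:preliminaries}, $H$ is a $(2\kappa-1)$-spanner whp, the ``whp'' absorbing the correctness of \Proc{}~$\mathtt{CountingToLogn}$, \Thm{}~\ref{thm:leader-election}, \Lem{}~\ref{lem:outgoing}, and the distinctness of identifiers. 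For the \emph{size}, the parent edges form a forest (fewer than $n$ edges) and the remaining edges number at most $\sum_{v\in V}b_{v}$, where $b_{v}$ is the number of distinct clusters adjacent to $v$; the main technical step is to rerun the expectation bound of \cite{ForsterGV21} on $\sum_{v}b_{v}$ with our shift distribution in place of the exact capped geometric. That bound uses the distribution only through the geometric decay of $\Pr[\delta\geq i]$, and our conditional distribution (given $D$) is a capped geometric whose parameter differs from $1-n^{-1/\kappa}$ only by a $(1\pm\rho)$ factor in the exponent; choosing the constant $\rho$ small enough as a function of $\varepsilon$, the resulting bound is $\mathbb{E}[\,|H|\mid D\,]=O(n^{1+(1+\varepsilon)/\kappa})$ for every in-range $D$, and hence --- the out-of-range event contributing at most $n^{-c}\cdot n^{2}$ --- also $\mathbb{E}[|H|]=O(n^{1+(1+\varepsilon)/\kappa})$.

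Finally, sampling the shifts costs $O(\log n)$ rounds (the running time of \Proc{}~$\mathtt{CountingToLogn}$), the timed BFS costs $O(\kappa)$ message-passing rounds, hence $O(\kappa+\log n)$ GRC rounds by \Thm{}~\ref{thm:message-passing-simulation}, and the leader-election / identifier / edge-selection phase costs $O(\log n)$ rounds by \Thm{}~\ref{thm:leader-election} and \Lem{}~\ref{lem:outgoing-runtime}, for $O(\kappa+\log n)$ whp in total; each node stores $O(\kappa)$ bits for its trials and $\delta_{v}$, $O(\log\deg(v))$ bits for its parent port, $O(\deg(v))$ bits for its incident $H$-edges and the edge-selection bucketing, and the $O(\deg(v)+\kappa)$-space state of the auxiliary procedures (whose $\Theta(\log n)$ random bits are consumed online), for $O(\deg(v)+\kappa)$ in total. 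The main obstacle is the interface between the sampling procedure and the size analysis: manufacturing, under bounded uniformity, a shift distribution provably within a $(1\pm\rho)$ exponent of $GeomCap(1-n^{-1/\kappa},\kappa-1)$ with that slack controlled whp, and then verifying that the \cite{ForsterGV21} clustering bounds degrade by no more than the advertised $(1+\varepsilon)$ factor; by contrast, the GRC realization of the clustering reduces --- with some care about anonymity and per-node memory --- to the procedures already in place in \Sect{}~\ref{sec:auxiliary}, and the stretch bound is essentially deterministic once cluster radii are capped at $\kappa-1$.
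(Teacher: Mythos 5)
Your construction follows the paper's blueprint closely --- random-shift clustering with $\delta_v$ approximately $GeomCap$-distributed via a $\mathtt{CountingToLogn}$-driven sampler, a timed BFS to form the shifted shortest-path clusters, and a random-ID scheme to pick the bridging edges --- but there is a genuine gap in the sampling step that breaks the expected-size bound when $\kappa$ is not a constant.

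Your sampler tosses a single fair coin per trial per active round, so conditioned on $D$ one has $\delta_v \sim GeomCap(1-2^{-D},\kappa-1)$ with $D$ ranging (whp) over $\bigl[(1-\rho)\tfrac{\log n}{\kappa},\,(1+\rho)\tfrac{\log n}{\kappa}\bigr]$. In particular the success parameter $\phi=1-2^{-D}$ can be as small as $1-n^{-(1-\rho)/\kappa}$, i.e.\ it can \emph{undershoot} the target $1-n^{-1/\kappa}$, and this one-sided undershoot is fatal for the capped mass at $\delta=\kappa-1$. Per node $v$, the expected number of ``close'' cluster centers contains a term $n(1-\phi)^{\kappa-1}$ coming from nodes with $\delta=\kappa-1$ (this is exactly the $R_1(v)$ term in the paper's \Lem{}~\ref{lem:spanner-edges}), and with $\phi=1-n^{-(1-\rho)/\kappa}$ this equals $n^{(1+\rho(\kappa-1))/\kappa}$, which is $O\bigl(n^{(1+\varepsilon)/\kappa}\bigr)$ only if $\rho\le\varepsilon/(\kappa-1)$. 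Thus $\rho$ cannot be ``a constant $\rho(\varepsilon)$'' as you claim; letting $\rho$ shrink with $\kappa$ would require $\Omega(\kappa/\varepsilon)$ interleaved copies of $\mathtt{CountingToLogn}$ in \Lem{}~\ref{lem:counting}, inflating the sampling stage to $\Theta(\kappa\log n)$ rounds and contradicting the claimed $O(\kappa+\log n)$ runtime.

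The paper sidesteps this by making the deviation one-sided: in the basic scheme, each experiment draws $1/(1-\varepsilon')$ random bits \emph{per} active round rather than one, so that even if the (median) $\mathtt{CountingToLogn}$ runtime is only $(1-\varepsilon')\log n$, the total per-experiment bit budget $\sigma$ still satisfies $\sigma\ge\log n/\kappa$, guaranteeing $\phi\ge 1-n^{-1/\kappa}$; the upward deviation is harmless and is controlled by choosing $\varepsilon'$ with $(1+\varepsilon')/(1-\varepsilon')\le 1+\varepsilon$, which bounds $\phi\le 1-n^{-(1+\varepsilon)/\kappa}$ and hence $\mathbb{E}[|R_2(v)|]\le 2n^{(1+\varepsilon)/\kappa}$ via \Lem{}~\ref{lem:spanner-capped}. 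Apart from this, your cluster construction, bridging-edge selection, stretch argument, and round/space accounting are essentially the paper's; the missing idea is the inflation of the per-round coin budget so that the lower bound on the exponent is pinned at $1/\kappa$ rather than $(1-\rho)/\kappa$.
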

	
	In \Sect{}~\ref{sec:spanner-adaptation-to-small-memory}, we present a
	modification of our algorithm to accommodate a memory space of only
	$O (\deg(v) + \log \kappa)$
	for each node $v \in V$, at the cost of a slightly
	slower $O (\kappa \log n)$-round algorithm.
	We start by describing the algorithm stated in \Thm{}~\ref{thm:spanner}.
	
	The algorithm is based on the random shift concept introduced by Miller et al.\ in \cite{MillerPX13} and studied further in various works (see, 
	e.g.,~\cite{
		MillerPVX15,
		ElkinN19,
		ForsterGV21}).
	We now give a high-level overview of a spanner construction algorithm based on the random shift approach (see \cite{ForsterGV21} for the full details).
	
	The algorithm starts with each node $v \in V$ sampling a value $\delta_{v}\sim GeomCap(1 - n^{-1 / \kappa}, \kappa - 1)$ (see \Sect~\ref{sec:preliminaries} for the capped geometric distribution definition). Then, the nodes conceptually add a virtual node $s$. Each node $v \in V$ adds an edge $(s, v)$ of weight
	$w(s, v) = \kappa - \delta_{v}$ to form the graph $G'$, where all other edges are assigned a unit weight.
	Following that, the nodes construct a shortest path tree $T$ rooted at $s$.
	The nodes of $G$ are partitioned into clusters defined by the connected components of $T$ after removing $s$ and its incident edges.
	To construct the spanner $H$, the nodes first add the (non-virtual) edges of $T$.
	Then, the nodes add edges to $H$ such that for each edge $(u, v) \in E-T$, at least one of the following is satisfied: (1) $H$ contains exactly one edge between $u$ and a node in $v$'s cluster; or (2) $H$ contains exactly one edge between $v$ and a node in $u$'s cluster. As discussed in \cite{ForsterGV21}, the constructed edge-set $H$ is a $(2 \kappa - 1)$-spanner of expected size $O (n^{1+1/\kappa})$.
	
	Our algorithm works in three stages as described below.
	
	\subparagraph{Sampling Procedure.}
	Recall that the algorithm of \cite{ForsterGV21} begins with each node $v \in V$ sampling $\delta_{v}\sim GeomCap(1 - n^{-1 / \kappa}, \kappa - 1)$. Note that sampling from
	$GeomCap(1 - n^{-1 / \kappa}, \kappa - 1)$
	requires the nodes to know the value of $n$, which is not possible in the GRC model. Hence, we devise a designated sampling procedure for each node $v \in V$. 
	
	Let us first present the intuition behind the sampling procedure. The idea is for each node $v \in V$ to simulate $\kappa - 1$ \emph{experiments}, each with success probability close to $1 - n^{-1 / \kappa}$, and compute $\delta_{v}$ accordingly. To achieve such success probability without knowing $n$, \Proc{}~$\mathtt{CountingToLogn}$ is utilized.
	In order to enhance the proximity to $1 - n^{-1 / \kappa}$, \Proc{}~$\mathtt{CountingToLogn}$ is executed numerous times in parallel, and $\delta_{v}$ is computed based on the run with median runtime.
	
	For ease of presentation, we describe the sampling procedure in two stages. First, a sub-procedure referred to as the \emph{basic} scheme is described. We later explain how this basic scheme is used in the sampling procedure. The basic scheme runs during an execution of \Proc{}~$\mathtt{CountingToLogn}$. For each node $v \in V$, let
	$b_{v} = (b_{v}[0], \dots, b_{v}[\kappa - 2])$
	be a vector of $\kappa - 1$ bits initialized to $b_{v} =( 0, \dots, 0)$. The purpose of entry $b_{v}[j]$ is to represent the success/failure of the $i$-th experiment for each $0 \leq j \leq \kappa - 2$.
	Let $\varepsilon'$ be the largest value such that $1 / (1 - \varepsilon')$ is an integer and $\varepsilon'\leq \varepsilon/(2+\varepsilon)$. In each round $j$ such that $j\bmod \kappa\neq 0$, each node $v$ draws $1 / (1 - \varepsilon')$ bits uniformly at random and sets
	$b_{v}[(j-1) \bmod \kappa] = 1$
	if any of those bits are $1$.
	
	In the sampling procedure, the nodes perform
	$c' = 2\cdot \lceil c / \varepsilon '\rceil - 1$
	executions of the basic scheme, where $c>0$ is a constant.
	Let us index these executions by
	$i = 0, \dots, c' - 1$.
	Starting from the execution indexed $0$, the rounds of the executions are done alternately, i.e., a round of the run indexed by $i$ is followed by a round of the run indexed by $(i + 1) \bmod c'$. Accordingly, each node $v \in V$ maintains $c'$ vectors, $b_{v}^{0}, \dots, b_{v}^{c' - 1}$, each of size $\kappa - 1$ bits, such that $b_{v}^{i}$ is the vector maintained by $v$ during the $i$-th execution of the basic scheme. Additionally, $v$ maintains a counter initialized to $0$, whose goal is to count the executions that terminated.
	Whenever an execution terminates, the counter is increased by $1$.
	Following the termination, during the rounds that are associated with that execution, the nodes do nothing. The nodes halt the executions when the counter reaches $\lceil c / \varepsilon'\rceil$ (notice that the counter is updated in the same manner for all nodes, thus they halt at the same time). Let $\tilde{i}$ denote the index of the execution in which the counter reached $\lceil c / \varepsilon' \rceil$. Observe that this is the $\lceil c / \varepsilon' \rceil$-th fastest execution, i.e., the execution with median runtime.
	Each node $v \in V$ defines $\delta_{v}$ to be the smallest index $0 \leq j\leq \kappa - 2$ for which $b_{v}^{\tilde{i}}[j] = 1$ if such an index exists, or $\delta_{v} = \kappa - 1$ otherwise.
	
	\subparagraph{Partition Into Clusters.}
	Let $G'$ be the graph formed by adding a virtual node $s$ and edge $(s, v)$ of weight
	$w(s, v) = \kappa - \delta_{v}$
	for every $v \in V$.
	To compute the cluster partition, the nodes first construct a shortest path tree $T$ rooted at $s$. The idea is simple: If $w(s, v) = 1$, then $v$ sends a message to all its neighbors and marks itself as the center of its cluster. 
	Otherwise, assume first that $v$ receives a message in at least one of the rounds
	$2, \dots, w(s, v) - 1$
	and let
	$2 \leq i < w(s, v) - 1$
	be the first such round.
	After receiving a message in round $i$, node $v$ (arbitrarily) chooses a neighbor $u$ that sent $v$ a message in that round and adds the edge $(u, v)$ into $T$. Then, in round $i + 1$, node $v$ sends a message to all neighbors from which it did not receive a message in round $i$.
	Otherwise, if $v$ does not receive a message after $w(s, v) - 1$ rounds, then in round $w(s, v)$ node $v$ sends a message to all its neighbors and sets itself as the center of its cluster.
	Notice that after at most $\kappa$ rounds, $T$ is a shortest path tree rooted at $s$.
	The edges of $T$ are added to the spanner $H$. The clusters are defined to be the connected components of $(V, T)$ (i.e., the connected components formed by removing $s$ and its incident edges).
	The nodes then construct a circuit for each cluster (similarly to the MST algorithm of \Sect{}~\ref{sec:mst}). Observe that by design, each cluster has exactly one center. 
	Note that every message sent in each round of this stage is of size one bit.
	
	\subparagraph{Addition of Bridging Edges.}
	The construction of $H$ is completed by the following procedure whose goal is to augment $H$ with some of the edges that bridge between clusters.
	This is done by each cluster randomly drawing an ID.
	Then, each node $v \in V$ identifies its neighboring clusters with smaller IDs and adds a single edge to each such cluster into $H$. 
	
	Formally, each node $v \in V$ maintains a set $S^{\mathrm{eq}}(v)$ initialized to be $N(v)$, and a set $S^{\mathrm{sml}}(v)$ initialized to be $\emptyset$. Additionally, throughout the execution, $v$ maintains a partition of $S^{\mathrm{sml}}(v)$ into subsets according to the (randomly drawn) cluster IDs.
	The nodes engage in a process that runs in parallel to $4c + 7$ iterations of \Proc{}~$\mathtt{CountingToLogn}$. In each round of this process, every cluster center tosses a coin and communicates the outcome through the cluster's circuit to all the nodes in its cluster. Then, every node $v \in V$ sends a message with the coin toss received from its cluster's center to all neighbors. Let $S^{\mathrm{eq}}_{i}(v)$ be the set $S^{\mathrm{eq}}(v)$ at the beginning of round $i$. For each $u \in S^{\mathrm{eq}}_{i}(v)$, if $u$ and $v$ sent the same bit, then $u$ stays in $S^{\mathrm{eq}}(v)$; otherwise, $u$ is removed. Additionally, if $u$'s bit is smaller than $v$'s, then $u$ is added to $S^{\mathrm{sml}}(v)$.
	The partition of the nodes in $S^{\mathrm{sml}}(v)$ is defined so that $u$ and $u'$ are in the same subset by the end of round $i$ if and only if they were in the same subset at the beginning of round $i$ and sent the same bit in round $i$.
	Let
	$s_{1}, \dots, s_{q}$
	be the partition of $S^{\mathrm{sml}}(v)$ at the end of the process.
	For each $j\in [q]$, node $v$ (arbitrarily) selects a single node $u \in s_{j}$ and adds the edge $(u, v)$ into $H$.
	
	This completes the construction of $H$. 
	We now turn to analyzing the algorithm.
	
	\subsection{Analysis}\label{sec:spanner-analysis}
	
	This section is dedicated to proving \Thm{}~\ref{thm:spanner}.
	To that end, we start with a structural lemma about the capped geometric
	distribution.
	(All proofs missing from this section are deferred to
	Appendix~\ref{appendix:missing-proofs:spanner}).
	
	\begin{lemma}\label{lem:spanner-capped}
		For arbitrary values
		$q_{1}, \dots, q_{n}$
		and for
		$X_{1}, \dots, X_{n}\sim GeomCap(\phi, \kappa - 1)$,
		define
		$M = \max _{i \in [n]} \{ X_{i} - q_{i} \}$.
		For the set
		$I = \{ i \mid X_{i} < \kappa - 1\ \land\ X_{i} - q_{i} \in \{ M - 1, M \} \}$,
		it holds that
		$\mathbb{E}[|I|] \leq \frac{2}{1 - \phi}$. 
	\end{lemma}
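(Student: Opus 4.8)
The plan is to bound $\mathbb{E}[|I|]$ via linearity of expectation, reducing the event $i\in I$ (for each fixed index $i$) to a one‑sided range condition on $X_i$ once all the other variables are exposed, and then to control the resulting sum using the memoryless property of $GeomCap(\phi,\kappa-1)$ below the cap $\kappa-1$.

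Concretely, I would first write $\mathbb{E}[|I|]=\sum_{i=1}^{n}\Pr[i\in I]$ and, for a fixed $i$, condition on the values $(X_j)_{j\neq i}$, which determine $m_{i}:=\max_{j\neq i}\{X_j-q_j\}$. Since $M=\max\{X_i-q_i,\,m_i\}$, a short case analysis (splitting on whether $X_i-q_i\geq m_i$, and using that the possible values of $X_i-q_i$ are consecutive integers shifted by $-q_i$) shows that if $i\in I$ then $q_i+m_i-1\le X_i\le\kappa-2$. I would then invoke the closed form $\Pr[X_i\geq t]=(1-\phi)^{t}$ for every integer $0\leq t\leq\kappa-1$ (which is exactly the memoryless identity recorded in \Sect{}~\ref{sec:preliminaries}, and is also why the cap value $\kappa-1$ is the natural threshold to single out) to get $\Pr[i\in I\mid (X_j)_{j\neq i}]\leq (1-\phi)^{\,\max\{\lceil q_i+m_i-1\rceil,\,0\}}$, and hence $\mathbb{E}[|I|]\leq\sum_{i}\mathbb{E}\big[(1-\phi)^{\,\max\{\lceil q_i+m_i-1\rceil,\,0\}}\big]$.

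It remains to show this sum is at most $\frac{2}{1-\phi}$, and I would do this by splitting $|I|$ exactly into the two layers $\{i:\, X_i<\kappa-1,\ X_i-q_i=M\}$ and $\{i:\, X_i<\kappa-1,\ X_i-q_i=M-1\}$ and bounding the expected size of each by $\frac{1}{1-\phi}$. For one layer I would condition on (the identity and value of) a maximizer $j^{*}$, which renders the remaining $X_j$ independent, and sum $\Pr[\,\cdot\in\text{layer}\mid j^{*}]$ over the other indices; by the memoryless identity each such probability is at most $(1-\phi)$ raised to roughly the gap $q_i+M$ that $X_i$ must close, and, organizing these contributions by the value of $q_i+M$, they form a geometric series bounded by $\frac{1}{1-\phi}$. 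I expect the delicate step to be the lower layer $X_i-q_i=M-1$: a priori arbitrarily many indices can sit exactly one below the maximum, and ruling this out forces one to exploit the coupling between the size of $M$ and the layout of the $q_i$'s — whenever a block of indices has a common small value $q_i+M$, that block contains with high probability some index whose $X_j$ is large enough to push $M$ up, which in turn suppresses each individual probability — and one must take care, when ranging over the possible values of $M$, not to double‑count indices.
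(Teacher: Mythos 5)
There is a genuine gap, and it is exactly at the place you flag as ``delicate.'' After reducing $\Pr[i\in I]$ to a survival probability, you are left with $\mathbb{E}[|I|]\le\sum_i\mathbb{E}\bigl[(1-\phi)^{\max\{\lceil q_i+m_i-1\rceil,\,0\}}\bigr]$, and your plan is to view this (per layer, after conditioning on a maximizer) as a geometric series in $(1-\phi)$ indexed by the value of $q_i+M$. But $\sum_{a\ge 0}(1-\phi)^{a}=\tfrac{1}{\phi}$, not $\tfrac{1}{1-\phi}$ as you assert; these differ dramatically when $\phi$ is small (which the lemma allows), where $\tfrac{1}{\phi}$ is unbounded while the target $\tfrac{2}{1-\phi}$ is close to $2$. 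On top of that, nothing forces the values $q_i+M$ to be distinct, so the ``geometric series'' can degenerate into many equal terms; you acknowledge this for the $M-1$ layer but do not supply a mechanism to absorb it. In short, the plan as written does not reach the claimed bound.

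The paper avoids this by never passing through a per-index first moment. It bounds the tail $\Pr[|I_M|\ge t]$ directly: writing $Y_t$ for the $t$-th largest of the $X_i-q_i$ and conditioning on $Y_t=a$, each of the $t-1$ larger order statistics equals $a$ \emph{given} that it is at least $a$ with probability at most $\phi$ (this is the conditional-hit form of the memoryless identity, $\Pr[X=j\mid X\ge j]=\phi$ for $j\le\kappa-2$, which is what the condition $X_i<\kappa-1$ buys), so by independence $\Pr[|I_M|\ge t]\le\phi^{\,t-1}$, and symmetrically for $|I_{M-1}|$. Summing the tail gives $\mathbb{E}[|I_M|]\le\sum_{t\ge 1}\phi^{\,t-1}=\tfrac{1}{1-\phi}$ — a geometric series in $\phi$, not in $1-\phi$. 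Your argument uses the wrong side of the memoryless identity: you invoke the tail $\Pr[X\ge j]=(1-\phi)^{j}$, whereas the bound $\tfrac{1}{1-\phi}$ comes from the exact-hit probability $\phi$ accumulating once per additional tie. That substitution is what simultaneously flips the base of the geometric series and makes the repeated-$q_i$ case uncontrollable; with the tail-bound formulation there is no separate delicate case.
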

	
	Recall that in the sampling procedure of our algorithm, the value $\delta_{v}$ is computed for each node $v \in V$ based on the $\tilde{i}$-th execution of the basic scheme, i.e., the execution that admits the median runtime.
	Particularly, within that execution, $\delta_{v}$ is defined as the first successful experiment out of $0, \dots, \kappa - 2$; or $\kappa - 1$ if all experiments failed. Let $\phi$ be the success probability of each such experiment and notice that $\phi$ itself is a random variable that depends on the execution's length. Define $A$ to be the event that $1 - n^{-1 / \kappa} \leq \phi \leq 1 - n^{-(1 + \varepsilon) / \kappa}$.
	We prove the following lemma.
	
	\begin{lemma}
		\label{lem:spanner-A}
		$\Pr[A] \geq 1 - 2n^{-c}$.
	\end{lemma}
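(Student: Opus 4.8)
The plan is to express the common experiment success probability $\phi$ of the median-runtime (i.e.\ $\tilde{i}$-th) basic-scheme execution as an explicit function of that execution's runtime $\tau$, recast the event $A$ as a ``good window'' for $\tau$, and then invoke \Lem{}~\ref{lem:counting}. First I would fix the execution indexed $\tilde{i}$ and let $\tau$ be its runtime, which equals the runtime of the $\mathtt{CountingToLogn}$ run it rides on. Over these $\tau$ rounds, experiment index $j \in \{0, \dots, \kappa - 2\}$ is updated exactly in the rounds congruent to $j + 1$ modulo $\kappa$, i.e.\ in $a_{j} \in \{\lfloor \tau / \kappa \rfloor, \lceil \tau / \kappa \rceil\}$ rounds, each of which contributes $1/(1 - \varepsilon')$ fresh fair coin flips toward setting $b_{v}^{\tilde{i}}[j] = 1$. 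Hence the failure probability of experiment $j$ is $1 - \phi = 2^{-a_{j}/(1 - \varepsilon')}$ (the exponent is an integer since $1/(1-\varepsilon') \in \Integers$), so $\delta_{v} \sim GeomCap(\phi, \kappa - 1)$; the $\pm 1$ rounding gap makes this ``$\phi$'' depend mildly on $j$, and I would carry all $\kappa - 1$ indices through the argument simultaneously by bounding the worst-case $a_{j}$.

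Next I would unravel $A$. Taking logarithms, $1 - n^{-1/\kappa} \le \phi \le 1 - n^{-(1 + \varepsilon)/\kappa}$ is equivalent to $\frac{1 - \varepsilon'}{\kappa} \log n \le a_{j} \le \frac{(1 - \varepsilon')(1 + \varepsilon)}{\kappa} \log n$. Since $a_{j}$ agrees with $\tau/\kappa$ up to a $\pm 1$ additive error, it therefore suffices that $\tau$ lie essentially in the window $[(1 - \varepsilon') \log n,\ (1 - \varepsilon')(1 + \varepsilon) \log n]$, and in particular that $(1 - \varepsilon') \log n \le \tau \le (1 + \varepsilon') \log n$: the inclusion of the latter interval in the former is exactly the inequality $(1 - \varepsilon')(1 + \varepsilon) = 1 + \varepsilon - \varepsilon' - \varepsilon\varepsilon' \ge 1 + \varepsilon'$, which rearranges to the defining relation $\varepsilon' \le \varepsilon/(2 + \varepsilon)$. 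In other words, $\varepsilon'$ is chosen precisely so that the multiplicative $(1 + \varepsilon)$ headroom at the top of the window covers both the $\varepsilon'$ slack conceded at the bottom and the lower-order rounding slack from discretizing $a_{j}$.

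Finally I would apply \Lem{}~\ref{lem:counting}. The sampling procedure performs $c' = 2 \lceil c / \varepsilon' \rceil - 1$ independent basic-scheme executions, each riding on its own $\mathtt{CountingToLogn}$ run, and $\tilde{i}$ indexes the one of median runtime; hence $\tau$ is precisely the median of $2r - 1$ i.i.d.\ copies of the $\mathtt{CountingToLogn}$ runtime with $r = \lceil c / \varepsilon' \rceil$. Plugging $\rho = \varepsilon'$ into \Lem{}~\ref{lem:counting} yields $\Pr[(1 - \varepsilon') \log n \le \tau \le (1 + \varepsilon') \log n] \ge 1 - 2 n^{-\varepsilon' r} \ge 1 - 2 n^{-c}$, the last step using $\varepsilon' r = \varepsilon' \lceil c / \varepsilon' \rceil \ge c$. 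Combining with the previous step gives $\Pr[A] \ge 1 - 2 n^{-c}$.

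I expect the hard part to be the middle step --- controlling the discretization so that a single application of \Lem{}~\ref{lem:counting} with $\rho = \varepsilon'$ already closes the argument. Here $a_{j}$ matches $\tau/\kappa$ only up to rounding and distinct indices $j$ may receive slightly different counts, producing residual $\Theta(\kappa)$ additive terms in the $\tau$-window; one verifies these are dominated in the range of $\kappa$ for which the target size $O(n^{1 + (1 + \varepsilon)/\kappa})$ is a meaningful bound (and, if one insists on the exact constant $c$, this costs nothing, since $c$ may be taken arbitrarily large from the outset). This is exactly where the quantitative choices $\varepsilon' \le \varepsilon/(2 + \varepsilon)$ and $c' = 2 \lceil c/\varepsilon' \rceil - 1$ are spent.
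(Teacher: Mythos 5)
Your proposal is correct and follows essentially the same route as the paper's proof: let $\tau$ be the median runtime of the $2\lceil c/\varepsilon'\rceil-1$ interleaved $\mathtt{CountingToLogn}$ runs, invoke \Lem{}~\ref{lem:counting} with $\rho=\varepsilon'$ and $r=\lceil c/\varepsilon'\rceil$, and then translate the $\tau$-window into the target $\phi$-window via $\phi = 1 - 2^{-\#\text{bits}}$ together with the inequality that the constraint $\varepsilon' \le \varepsilon/(2+\varepsilon)$ was chosen to enforce. The only (harmless) point of departure is that you track the per-experiment round count $a_j$ with its floor/ceiling discretization, whereas the paper works directly with the bit count $\sigma = (\tau/\kappa)\cdot 1/(1-\varepsilon')$ and treats it as exact; your extra care there is a refinement, not a different argument.
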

	\begin{proof}
		Let $\tau$ denote the length of execution $\tilde{i}$ in the sampling procedure.
		That is, the median runtime out of $2\cdot \lceil c / \varepsilon '\rceil - 1$ executions of \Proc{}~$\mathtt{CountingToLogn}$.
		By \Lem{}~\ref{lem:counting}, it follows that
		$\Pr[(1 - \varepsilon') \log n \leq \tau \leq (1 + \varepsilon') \log n] \geq 1 - 2n^{-\varepsilon' \cdot \lceil c / \varepsilon '\rceil} \geq 1 - 2n^{-c}$.
		Let $\sigma$ be the total number of random bits designated for each experiment of execution $\tilde{i}$. For each experiment, the number of rounds in which the nodes draw bits is $\tau/\kappa$.
		Since
		$1 / (1 - \varepsilon')$
		bits are drawn in every such round and since
		$\frac{1 + \varepsilon'}{1 - \varepsilon'}
		\leq
		\frac{1 + \varepsilon/(2+\varepsilon)}{1 - \varepsilon/(2+\varepsilon)} = 1 + \varepsilon$,
		we get that
		\begin{align*}
			&\Pr\left[ \frac{\log n}{\kappa} \leq \sigma \leq (1 + \varepsilon)\frac{\log n}{\kappa} \right]
			\geq 
			\Pr[(1 - \varepsilon') \log n \leq \tau \leq (1 + \varepsilon') \log n] \geq 1 - 2n^{-c}\ .\
		\end{align*}
		Observe that by design,
		$\phi = 1 - (1 / 2)^{\sigma}$
		and thus, it follows that
		$\Pr[A] \geq 1 - 2n^{-c}$. 
	\end{proof}
	
	We now consider the bridging edges addition stage of the algorithm.
	Let $B$ denote the event that for every edge $(u, v) \in E-T$, at least one of the following is satisfied:
	(1)
	$H$ contains exactly one edge between $u$ and a node in $v$'s cluster;
	or
	(2)
	$H$ contains exactly one edge between $v$ and a node in $u$'s cluster.
	We prove the following.
	
	\begin{lemma}
		\label{lem:spanner-B}
		$\Pr[B] \geq 1-3n^{-c}$.
	\end{lemma}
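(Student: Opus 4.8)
The plan is to isolate a couple of \emph{good events} whose joint occurrence forces $B$ deterministically, and to bound the probability that either fails by $O(n^{-c})$ (tuning the constant hidden in the number of parallel $\mathtt{CountingToLogn}$ executions), so that a union bound gives the claimed $3n^{-c}$. Throughout, write $C_v$ for the cluster of a node $v$ and identify each cluster with the bit string of coin tosses broadcast by its center during the bridging-edge stage; let $L$ denote the (random) number of rounds of that stage.

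\textbf{Step 1: long, distinct cluster identifiers.} Since the bridging-edge stage runs in parallel to $4c+7$ iterations of $\mathtt{CountingToLogn}$, \Lem{}~\ref{lem:counting} yields that, whp, $L = \Theta(\log n)$; tuning constants, $L \ge (c+2)\log n$ whp. Call this event $E_1$ (folding into it also the whp correctness of the neighbor-to-neighbor message simulation invoked in the stage, via \Thm{}~\ref{thm:message-passing-simulation}). Conditioned on $E_1$, any two fixed clusters receive independent uniform $L$-bit identifiers, hence coincide with probability $2^{-L} \le n^{-(c+2)}$; a union bound over the fewer than $\binom{n}{2}$ pairs of clusters shows that the event $E_2$ that all cluster identifiers are pairwise distinct obeys $\Pr[\neg E_2 \mid E_1] \le n^{-c}$.

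\textbf{Step 2: structure of the inserted bridging edges.} The core is a deterministic claim: conditioned on $E_1 \cap E_2$, for every node $v$ the final partition $s_1, \dots, s_q$ of $S^{\mathrm{sml}}(v)$ has exactly one part, namely $N(v) \cap S$, for each neighboring cluster $S \ne C_v$ whose identifier precedes that of $C_v$ in the bit-by-bit generation order. This follows by tracking the sets $S^{\mathrm{eq}}(v)$ and $S^{\mathrm{sml}}(v)$ round by round: a neighbor $u$ stays in $S^{\mathrm{eq}}(v)$ exactly while the identifiers of $C_u$ and $C_v$ agree on the bits seen so far; $u$ is moved into $S^{\mathrm{sml}}(v)$ precisely at the first round where $C_u$'s bit is $0$ while $C_v$'s is $1$ (equivalently, iff $C_u$'s identifier precedes $C_v$'s); and inside $S^{\mathrm{sml}}(v)$, $u$ is kept together only with those neighbors that transmit the same bit in every later round, which, since under $E_2$ all distinct identifiers have already diverged within the $L$ rounds, are exactly the neighbors lying in $C_u$. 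Consequently, under $E_1 \cap E_2$, node $v$ inserts into $H$ exactly one edge between $v$ and each preceding neighboring cluster.

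\textbf{Step 3: deducing $B$.} Fix $(u,v) \in E - T$; the case $C_u = C_v$ is handled by the cluster's internal tree structure, so assume $C_u \ne C_v$ and, without loss of generality, that $C_u$'s identifier precedes $C_v$'s (possible under $E_2$ since $C_u \ne C_v$). By Step 2, $v$ inserts exactly one edge between $v$ and $C_u$. No other edge of $H$ joins $v$ to $C_u$: $v$'s tree edges stay inside $C_v \ne C_u$, and any bridging edge touching $C_u$ is incident to a node of $C_u$, which inserts bridging edges only toward clusters whose identifier precedes $C_u$'s and hence not toward $C_v$. Therefore $H$ contains exactly one edge between $v$ and a node of $u$'s cluster, i.e., condition (2) holds and $B$ occurs. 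A union bound over the failure events of Steps 1 and 2 then gives $\Pr[\neg B] \le 3 n^{-c}$.

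The main obstacle is Step 2 — making the $S^{\mathrm{eq}}/S^{\mathrm{sml}}$ bookkeeping argument precise: in particular, pinning down how a neighbor is inserted into a part of $S^{\mathrm{sml}}(v)$ at the round it joins, and why two same-cluster neighbors are never separated while two different-cluster neighbors eventually are. This calls for a careful induction over the rounds of the stage and leans on $L$ being large enough (Step 1) for \emph{all} pairs of distinct identifiers to have diverged within the stage — precisely why the process is tied to several runs of $\mathtt{CountingToLogn}$ rather than one. Everything else is routine counting and union bounds.
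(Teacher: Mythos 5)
Your proof is correct and matches the paper's approach: both condition on the bridging stage running at least $(c+2)\log n$ rounds (failure probability at most $2n^{-c}$, via the median of the $4c+7$ runs of $\mathtt{CountingToLogn}$), then on the cluster identifiers being pairwise distinct (failure at most $n^{-c}$), and conclude via a union bound to obtain $3n^{-c}$. The extra detail in your Steps 2--3, tracking $S^{\mathrm{eq}}(v)$ and $S^{\mathrm{sml}}(v)$ round by round to show that distinct identifiers deterministically force $B$, correctly fills in what the paper treats as holding ``by design.''
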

	
	For each node $v \in V$, let
	$M_{v} = \max_{u \in V} \{ \delta_{u} - d_{G}(u, v) \}$
	and
	$R(v) = \{u \in V \mid M_{v} - 1 \leq \delta_{u} - d_{G}(u, v) \leq M_{v}\}$.
	We obtain the following observation.
	
	\begin{observation}\label{obs:spanner-R(v)}
		Consider an edge $(u, v) \in H$ such that $u$ and $v$ belong to clusters centered at nodes $u'$ and $v'$, respectively.
		Then, $u'\in R(v)$ or $v'\in R(u)$.
	\end{observation}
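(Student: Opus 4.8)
The plan is to exploit the fact that the cluster partition is induced by the shortest path tree $T$ of $G'$ rooted at the virtual node $s$, which lets us read off the quantities $M_{v}$ from $G'$-distances, and then to conclude with the triangle inequality.

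The key preliminary step is to show that \emph{the center $v'$ of $v$'s cluster attains the maximum defining $M_{v}$}, i.e., $\delta_{v'} - d_{G}(v', v) = M_{v}$ (and, symmetrically, $\delta_{u'} - d_{G}(u', u) = M_{u}$). Since $w(s, u) = \kappa - \delta_{u}$ for every $u \in V$ and all remaining edges of $G'$ have unit weight, $d_{G'}(s, v) = \min_{u \in V} \{ (\kappa - \delta_{u}) + d_{G}(u, v) \} = \kappa - M_{v}$. Unwinding the partition-into-clusters construction, node $v$ first sends messages (``fires'') exactly in round $d_{G'}(s, v)$; it becomes a center precisely when the minimum above is attained at $u = v$ (that is, when $\delta_{v} = M_{v}$); and otherwise it adopts as its parent a neighbor that fired one round earlier. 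Following parent pointers up to the center $v'$, each step decreases the firing round by exactly one, so the tree distance from $v'$ to $v$ equals $d_{G'}(s, v) - d_{G'}(s, v') = (\kappa - M_{v}) - (\kappa - \delta_{v'}) = \delta_{v'} - M_{v}$. Hence $d_{G}(v', v) \leq \delta_{v'} - M_{v}$, which gives $\delta_{v'} - d_{G}(v', v) \geq M_{v}$, while $\delta_{v'} - d_{G}(v', v) \leq M_{v}$ holds by the definition of $M_{v}$; equality follows.

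Given this, the observation follows by a short case analysis on whether $M_{u} \geq M_{v}$ or $M_{v} \geq M_{u}$ (at least one holds). Suppose $M_{u} \geq M_{v}$; we show $u' \in R(v)$. As $(u, v) \in E$ we have $d_{G}(u, v) = 1$, so the triangle inequality gives $d_{G}(u', v) \leq d_{G}(u', u) + 1$, and therefore $\delta_{u'} - d_{G}(u', v) \geq \bigl( \delta_{u'} - d_{G}(u', u) \bigr) - 1 = M_{u} - 1 \geq M_{v} - 1$. On the other hand, $\delta_{u'} - d_{G}(u', v) \leq M_{v}$ by the definition of $M_{v}$. Thus $M_{v} - 1 \leq \delta_{u'} - d_{G}(u', v) \leq M_{v}$, i.e., $u' \in R(v)$. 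The case $M_{v} \geq M_{u}$ is symmetric and yields $v' \in R(u)$.

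The only non-routine part is the preliminary step: rigorously establishing that the center $v'$ of $v$'s cluster satisfies $\delta_{v'} - d_{G}(v', v) = M_{v}$. This amounts to reading off, from the ``delayed-BFS'' rules of the partition-into-clusters stage, that firing rounds coincide with $G'$-distances from $s$ and that parent pointers decrease the firing round by exactly one; everything after that is triangle-inequality bookkeeping. I would likely factor this fact out as a separate lemma about the partition-into-clusters stage and invoke it here.
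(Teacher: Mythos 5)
Your proof is correct and takes essentially the same route as the paper: both argue via a case split on which of $u$, $v$ is closer to the virtual root $s$ in $G'$ (which, since $d_{G'}(s,v) = \kappa - M_v$, is precisely your split on $M_u$ versus $M_v$), and both rely on the key fact that the cluster center attains the maximum defining $M_\cdot$, combined with a triangle-inequality step. The only difference is one of exposition: the paper states its dichotomy ``(1) or (2)'' as an unproved ``observe that,'' whereas you make explicit the delayed-BFS argument that $\delta_{v'} - d_G(v',v) = M_v$, which is exactly what is needed to justify that observation.
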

	\begin{proof}
		First, observe that either 
		(1)
		$d_{G'}(s, v) \leq w(s, u') + d_{G}(u', v) \leq d_{G'}(s, v)+1$; or
		(2)
		$d_{G'}(s, u) \leq w(s, v') + d_{G}(v', u) \leq d_{G'}(s, u) + 1$ (or both).
		Assume w.l.o.g.\ that case (1) holds (the second case is analogous).
		Notice that
		\[
		w(s, u') + d_{G}(u', v)
		\, = \,
		\kappa-\delta_{u'} + d_{G}(u', v)
		\, = \,
		\kappa - (\delta_{u'} - d_{G}(u', v)) \,.
		\]
		Now, by the definition of distance, we have
		\begin{align*}
			&d_{G'}(s, v)
			\, = \,
			\min_{x\in V} \{w(s, x) + d_{G}(x, v)\}
			\, = \,\\
			&\min_{x \in V} \{\kappa - (\delta_{x} - d_{G}(x, v)) \}
			\, = \,
			\kappa - M_{v} \, .
		\end{align*}
		Overall, it follows that
		\begin{align*}
			& \kappa-M_{v}
			\, \leq \, \kappa - (\delta_{u'} - d_{G}(u', v))
			\, \leq \,
			\kappa - (M_{v} - 1) \\
			&\implies
			M_{v} - 1
			\, \leq \, \delta_{u'} - d_{G}(u', v)
			\, \leq \,
			M_{v}
			\, \implies \,
			u' \in R(v) \, .\qedhere
		\end{align*}
	\end{proof}
	
	We are now prepared to bound the expected number of edges in the spanner.
	\begin{lemma}\label{lem:spanner-edges}
		$\mathbb{E}[|H|] \leq 2n^{1 + (1 + \varepsilon) / \kappa}+n^{1+1/\kappa}+1$.
	\end{lemma}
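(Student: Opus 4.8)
The plan is to bound $\mathbb{E}[|H|]$ by partitioning the edges added to $H$ into three groups: (1) the tree edges of $T$; (2) the ``bridging'' edges added by the nodes in the last stage; and (3) none other, since these are the only two places where edges enter $H$. For group (1), since $T$ is a tree on the vertex set $V$ (together with the virtual root $s$, whose incident edges are not added to $H$), the number of non-virtual tree edges is at most $n - 1 < n$; actually I will be slightly more generous and account for it inside the $n^{1+1/\kappa}$ slack, so the clean statement is $|T \cap E| \le n - 1$. The bulk of the work is bounding the expected size of group (2).

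For group (2), I would charge each bridging edge $(u,v) \in H$ to the pair consisting of an endpoint and the \emph{center} of the other endpoint's cluster. By \Obs{}~\ref{obs:spanner-R(v)}, for every such edge, writing $u',v'$ for the centers of the clusters of $u,v$, we have $u' \in R(v)$ or $v' \in R(u)$. Moreover, the bridging-edge procedure guarantees that a fixed node $v$ adds at most one edge toward each neighboring cluster with a smaller drawn ID, so the edges incident on $v$ that $v$ itself contributes are indexed by distinct clusters; pairing $v$ with the center of each such cluster gives an injective charging from $v$'s contributed edges to pairs $(v, x)$ with $x$ a cluster center. Combining, the total number of bridging edges is at most $\sum_{v \in V} |\{ x \in V \mid x \text{ is a cluster center and } x \in R(v) \}| \le \sum_{v \in V} |R(v)|$, since cluster centers are in particular nodes of $V$ and every relevant one lies in $R(v)$. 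Hence $\mathbb{E}[|H| - |T \cap E|] \le \sum_{v \in V} \mathbb{E}[|R(v)|]$.

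It remains to bound $\mathbb{E}[|R(v)|]$ for a fixed $v$. Here I would invoke \Lem{}~\ref{lem:spanner-capped} with $q_i = d_G(i, v)$ (over $i \in V$), which gives, on the event $A$ that the effective sampling probability satisfies $\phi \ge 1 - n^{-1/\kappa}$, that $\mathbb{E}[\,|R(v)| \mid A\,] \le \frac{2}{1-\phi} \le 2 n^{1/\kappa}$. (A small subtlety: \Lem{}~\ref{lem:spanner-capped} counts indices with $X_i < \kappa - 1$, whereas $R(v)$ as defined also ranges over all of $V$; a node $u$ with $\delta_u = \kappa - 1$ and $d_G(u,v) = 0$, i.e.\ $u = v$ itself, can contribute at most one extra element, which I absorb into the additive slack. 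Alternatively one observes $M_v \ge \delta_v \ge 0$ and handles the boundary index directly.) On the complementary event, $|R(v)| \le n$ trivially and $\Pr[\neg A] \le 2n^{-c}$ by \Lem{}~\ref{lem:spanner-A}, so the contribution of $\neg A$ to $\sum_v \mathbb{E}[|R(v)|]$ is at most $n \cdot n \cdot 2n^{-c} = 2 n^{2-c}$, which is $o(1)$ for $c$ large enough and in any case swallowed by the $+1$. Summing $\mathbb{E}[|R(v)|] \le 2n^{1/\kappa} + (\text{boundary term})$ over the $n$ nodes and adding the $\le n$ tree edges yields $\mathbb{E}[|H|] \le 2 n^{1 + (1+\varepsilon)/\kappa} + n^{1 + 1/\kappa} + 1$; the replacement of $n^{1/\kappa}$ by $n^{(1+\varepsilon)/\kappa}$ comes precisely from the fact that $A$ only guarantees $\phi \ge 1 - n^{-1/\kappa}$ up to the slack built into the sampling procedure — wait, re-examining: $A$ gives $\phi \ge 1 - n^{-(1+\varepsilon)/\kappa}$ as its \emph{lower} tail is $1 - n^{-1/\kappa}$, so $\frac{2}{1-\phi} \le 2 n^{(1+\varepsilon)/\kappa}$, matching the target.

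The main obstacle I anticipate is the bookkeeping in the charging argument of group (2): one must be careful that the procedure's partition of $S^{\mathrm{sml}}(v)$ into subsets $s_1, \dots, s_q$ really does correspond to distinct neighboring clusters (two nodes of the same cluster broadcast the same center coin each round, hence never separate), so that $v$ contributes exactly $q \le |\{\text{clusters } x : x \in R(v)\}|$ edges rather than overcounting; and conversely one must ensure no bridging edge is missed by this accounting. Once the charging is set up cleanly, the probabilistic estimate is a direct application of \Lem{}~\ref{lem:spanner-capped} and \Lem{}~\ref{lem:spanner-A}, and the rest is arithmetic.
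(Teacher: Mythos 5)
Your decomposition of $H$ into tree edges and bridging edges, and the charging of each bridging edge to a pair $(v, \text{cluster center in } R(v))$ via \Obs{}~\ref{obs:spanner-R(v)}, is essentially the right approach and matches the paper's $\sum_{v} |R(v)| \geq |H|$ accounting (the paper does not separate tree edges, since \Obs{}~\ref{obs:spanner-R(v)} covers them too). The invocation of \Lem{}~\ref{lem:spanner-capped} with $q_i = d_G(i,v)$ and the bound $\frac{2}{1-\phi} \leq 2n^{(1+\varepsilon)/\kappa}$ under event $A$ is also correct.

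However, there is a genuine gap in how you handle the indices excluded by \Lem{}~\ref{lem:spanner-capped}. The lemma's set $I$ only counts indices with $X_i < \kappa - 1$, so nodes $u$ with $\delta_u = \kappa - 1$ that land in $R(v)$ are \emph{not} bounded by it. You claim this discrepancy is ``at most one extra element'' because such a $u$ would need $d_G(u,v) = 0$, i.e.\ $u = v$. That is false: there is no reason the distance must be zero. For a simple counterexample with $\kappa = 3$, take $\delta_v = 0$ and two neighbors $u_1, u_2$ at distance one with $\delta_{u_1} = \delta_{u_2} = 2$; then $M_v = 1$ and both $u_1, u_2 \in R(v)$ despite $\delta_{u_i} = \kappa - 1$ and $d_G(u_i, v) = 1$. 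In the worst case $|R(v) \setminus I|$ can be linear in $n$. The paper handles this by explicitly splitting $R(v) = R_1(v) \dot\cup R_2(v)$ where $R_1(v) = \{u \in R(v) : \delta_u = \kappa - 1\}$, and bounding $\mathbb{E}[|R_1(v)| \mid A] \leq n \cdot \Pr[\delta_u = \kappa - 1] = n(1-\phi)^{\kappa - 1} \leq n^{1/\kappa}$ using the \emph{lower} tail $\phi \geq 1 - n^{-1/\kappa}$ of event $A$. Summed over $v$, this is exactly where the $n^{1+1/\kappa}$ term in the statement comes from. You attribute that term to the $\leq n$ tree edges, but $n < n^{1+1/\kappa}$ would be needed merely as slack; meanwhile the actual $R_1$-contribution is unaccounted for. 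The final formula happens to match numerically, but the decomposition behind it does not hold up. You should add the $R_1/R_2$ split and the separate $(1-\phi)^{\kappa-1}$ bound to make the argument correct. (A minor additional note: the paper also conditions on event $B$ to ensure the bridging-edge bookkeeping is sound and to control $\mathbb{E}[|H|]$ on the bad event; your charging argument implicitly relies on uniqueness of cluster IDs, which is what $B$ certifies.)
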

	\begin{proof}
		By the law of total expectation,
		\[
		\mathbb{E}[|H|]
		\, = \,
		\mathbb{E}[|H| \mid A \land B]\cdot\Pr[A \land B] + \mathbb{E}[|H| \mid \lnot A \lor \lnot B]\cdot\Pr[\lnot A\lor \lnot B] \, .
		\]
		Combining \Lem{}~\ref{lem:spanner-A} with \Lem~\ref{lem:spanner-B}, we get
		$\Pr[\lnot A \lor \lnot B] \leq 5n^{-c}$,
		and since
		$\mathbb{E}[|H| \mid \lnot A \lor \lnot B] \leq m<n^2$, it follows that
		\[
		\mathbb{E}[|H|] \leq \mathbb{E}[|H| \mid A \land B] \cdot \Pr[A \land B]+n^{2} \cdot 5n^{-c}
		\leq
		\mathbb{E}[|H| \mid A \land B] + 1 \, ,
		\]
		where the final inequality holds for, e.g., $c\geq 3$.
		Therefore, we are left to bound the term
		$\mathbb{E}[|H| \mid A \land B]$. 
		
		\Obs{}~\ref{obs:spanner-R(v)} implies that the sum
		$\sum_{v \in V}|R(v)|$
		accounts for every edge in $H$ at least once, i.e., $\sum_{v \in V}|R(v)| \geq |H|$. Fix some node $v \in V$, we seek to bound $\mathbb{E}[|R(v)|]$. Partition the set $R(v)$ into $R_{1}(v)
		=
		\{u \in R(v)\mid \delta_{u}
		=
		\kappa - 1\}$
		and
		$R_{2}(v) = R(v)-R_{1}(v)$. 
		Notice that the events $\delta_{u} = \kappa - 1$ and $B$ are independent.
		Thus, we get 
		\[
		\mathbb{E}[|R_{1}(v)| \mid A \land B] \leq n \cdot \Pr[ \delta_{u} = \kappa - 1 \mid A \land B]
		\, = \,
		n \cdot \Pr[ \delta_{u} = \kappa - 1 \mid A] \, .
		\]
		Observe that
		$ \mathbb{E}[|R_{1}(v)|] \leq n \cdot \Pr[ \delta_{u} = \kappa - 1] = n(1 - \phi)^{\kappa - 1}$,
		and recall that if event $A$ occurs, then
		$\phi\geq 1 - n^{-1 / \kappa}$.
		Hence, it follows that
		\[
		n \cdot \Pr[ \delta_{u} = \kappa - 1 \mid A]
		\, = \,
		n(1 - \phi)^{\kappa - 1} \leq n \cdot n^{(-1 / \kappa) \cdot (\kappa - 1)}
		\, = \, n^{1 / \kappa} \, .
		\]
		As for $R_{2}$, applying \Lem{}~\ref{lem:spanner-capped}, we get $\mathbb{E}[|R_{2}(v)|] \leq2 / (1 - \phi)$.
		Once again, we condition on $A$ and $B$ to get
		\[
		\mathbb{E}[|R_{2}(v)| \mid A \land B]
		\, = \,
		\mathbb{E}[|R_{2}(v)| \mid A]
		\, \leq \,
		2/n^{-(1 + \varepsilon) / \kappa}
		\, = \,
		2n^{(1 + \varepsilon) / \kappa} \, .
		\]
		Overall, we conclude that
		\begin{align*}
			&\mathbb{E}[|H|]
			\, \leq \,
			n \cdot \mathbb{E}[R(v)]
			\, \leq \,
			n \cdot \mathbb{E}[|R_{1}(v)| \mid A \land B] + n \cdot \mathbb{E}[|R_{2}(v)| \mid A \land B] + 1 \\
			&\, \leq \,
			2n^{1 + (1 + \varepsilon) / \kappa} + n^{1 + 1/\kappa} + 1 \, .\qedhere
		\end{align*}
	\end{proof}
	Next, we bound the stretch of $H$.
	\begin{lemma}\label{lem:spanner-stretch}
		$H$ is a $(2 \kappa - 1)$-spanner whp.
	\end{lemma}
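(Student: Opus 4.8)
The plan is to mimic the stretch analysis of Forster, Goranci, and Vassilevska W.\ \cite{ForsterGV21}, with the key structural fact being that the cluster partition produced by our sampling procedure and shortest-path-tree stage is exactly the random-shift clustering, just with a possibly different success parameter $\phi$. Since the stretch bound of $2\kappa-1$ does not depend on the precise value of $\phi$ --- it only relies on (i) each $\delta_v \le \kappa-1$, so every shortest path from $s$ in $G'$ has length at most $\kappa$ and hence every cluster has depth at most $\kappa-1$ in $T$, and (ii) the bridging-edge stage guaranteeing, for every edge $(u,v)\in E-T$, that $H$ contains an edge between one endpoint and the other endpoint's cluster --- we are free to condition on the event $B$ from \Lem{}~\ref{lem:spanner-B} and argue the stretch bound deterministically on top of it. Conditioning on $B$ costs us only a $3n^{-c}$ additive probability, which is absorbed into the ``whp'' guarantee.

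First I would fix an arbitrary edge $(u,v)\in E$; if $(u,v)\in T\subseteq H$ there is nothing to prove, so assume $(u,v)\in E-T$. On the event $B$, WLOG case (1) holds: $H$ contains an edge $(u,x)$ for some node $x$ in $v$'s cluster. Let $v'$ be the center of $v$'s cluster. Then $x$ and $v$ both lie in the subtree of $T$ rooted at $v'$, and since that subtree has depth at most $\kappa-1$ (because $w(s,v') = \kappa - \delta_{v'} \ge 1$ forces $d_{G'}(s,\cdot)\le \kappa$ within the cluster, equivalently $\delta_{v'}\le\kappa-1$ bounds the number of unit-weight tree edges below $v'$ by $\kappa-1$), the tree path $x \leadsto v'$ has length $\le \kappa-1$ and the tree path $v' \leadsto v$ has length $\le\kappa-1$. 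All these tree edges are in $H$. Concatenating $v \leadsto v'$ (in $T\subseteq H$), $v' \leadsto x$ (in $T\subseteq H$), and the single edge $(x,u)\in H$ yields a $u$--$v$ walk in $H$ of length at most $(\kappa-1)+(\kappa-1)+1 = 2\kappa-1$. Hence $d_H(u,v)\le 2\kappa-1$ for every edge $(u,v)\in E$, which by the equivalent characterization of $\sigma$-spanners recorded in \Sect{}~\ref{sec:preliminaries} shows that $H$ is a $(2\kappa-1)$-spanner.

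The one point that needs care --- and which I expect to be the main obstacle --- is establishing the cluster-depth bound rigorously, i.e.\ that on the event $B$ (or in fact unconditionally, since it depends only on the $\delta_v$'s and the tree-construction rule) every cluster's $T$-subtree has depth at most $\kappa-1$. This follows because the shortest-path-tree stage runs for at most $\kappa$ rounds and a node $v$ is attached to $T$ only upon first receiving a message, at a round equal to its distance from $s$ in $G'$; since $w(s,v')\ge 1$ for a center and each cluster edge has unit weight, a node at tree-depth $d$ below its center $v'$ has $d_{G'}(s,\cdot) = w(s,v') + d \le \kappa$, giving $d \le \kappa - \delta_{v'} \le \kappa - 1 + \delta_{v'} - \delta_{v'}$... more simply $d \le \kappa - 1$ whenever $\delta_{v'}\ge 1$, and when $\delta_{v'}=0$ the center has $w(s,v')=\kappa$ so it receives no message before round $\kappa$ and its subtree is trivial. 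I would phrase this as a short preliminary claim and then run the three-legged concatenation argument above, concluding by union-bounding the failure of $B$ into the overall whp statement.
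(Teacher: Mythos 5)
Your proof takes essentially the same approach as the paper: condition on event $B$, observe that every node lies within distance $\kappa-1$ of its cluster center (so the intra-cluster diameter is at most $2\kappa-2$), and concatenate a within-cluster path with the bridging edge guaranteed by $B$. One small slip in the cluster-depth aside: from $w(s,v')+d \le \kappa$ the correct consequence is $d \le \kappa - w(s,v') = \delta_{v'} \le \kappa-1$, not $d \le \kappa-\delta_{v'}$, which gives the bound directly and makes the $\delta_{v'}=0$ case split unnecessary.
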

	\begin{proof}
		We now argue that if event $B$ occurs, then $H$ has stretch $2 \kappa - 1$, which implies the stated claim due to \Lem{}~\ref{lem:spanner-B}.
		To see that, consider an edge $(u, v) \in E$.
		Observe that the diameter within each cluster is at most $2 \kappa - 2$.
		This is because every node is at distance at most $\kappa - 1$ from its cluster's center.
		Hence, if $u$ and $v$ belong to the same cluster, then
		$d_{H}(u, v) \leq 2 \kappa - 2$.
		Otherwise, if event $B$ occurs, then either there is an edge $(\tilde{u}, v) \in H$ between $v$ and a node $\tilde{u}$ in $u$'s cluster, or an edge $(u, \tilde{v}) \in H$ between $u$ and a node $\tilde{v}$ in $v$'s cluster.
		Assume w.l.o.g.\ that $(\tilde{u}, v) \in H$.
		It follows that
		$d_{H}(u, v) \leq d_{H}(u, \tilde{u}) + d_{H}(\tilde{u}, v) \leq 1 + 2 \kappa - 2 = 2 \kappa - 1$.
	\end{proof}
	We conclude the analysis of our algorithm with the proof of \Thm{}~\ref{thm:spanner}.
	\begin{proof}[Proof of \Thm{}~\ref{thm:spanner}]
		The correctness of the algorithm follows from \Lem{}~\ref{lem:spanner-edges} and \Lem{}~\ref{lem:spanner-stretch}.
		For the runtime, observe that the first and third stages take $O (\log n)$ whp, and the second stage takes $O (\kappa)$ rounds since the depth of the shortest path tree $T$ is at most $\kappa$.
		Regarding the memory space used by each node $v \in V$, the sampling procedure requires a constant number of $O (\kappa)$-sized vectors, along with a constant number of $O (\log \kappa)$-sized counters (to associate each round with the corresponding experiment).
		The partition into clusters requires a memory space of $O (\log \kappa)$ (maintaining a counter for the round number), and the addition of bridging edges requires $O (\deg(v))$ memory space.
		Overall, the memory space used is $O (\deg(v) + \kappa)$. 
	\end{proof}

	\subsection{Adaptation to Small Memory Space} 
	\label{sec:spanner-adaptation-to-small-memory}
	Recall that to sample the values $\delta_{v}$, each node $v \in V$ has to store $O (\kappa)$ bits in its memory. We now present a simple modification to the sampling procedure that accommodates a memory space of $O (\log \kappa)$ for every node.
	As a consequence, we get an algorithm where each node $v \in V$ uses $O (\deg(v) + \log \kappa)$ memory space.
	The idea is for each node $v \in V$ to run the sampling procedure for $\kappa - 1$ iterations denoted by $0, \dots, \kappa - 2$, where in the $i$-th iteration, $v$ executes only the $i$-th experiment (i.e., tosses only the coins associated with the $i$-th experiments of each of the basic scheme executions and determines its success/failure accordingly).
	Then, $v$ selects $\delta_{v}$ to be the index of the first successful experiment if one exists, or $\kappa - 1$ otherwise. We note that one can easily adapt the correctness arguments presented for \Thm{}~\ref{thm:spanner} to this modified version.
	The runtime becomes $O (\kappa \log n)$ whp due to the runtime of the sampling procedure.
	Hence, the following theorem is obtained. 
	
	\begin{theorem}\label{thm:spanner-low-memory}
		There exists an algorithm in the GRC model that computes a set $H\subseteq E$ of edges such that $H$ is a $(2 \kappa - 1)$-spanner whp, and
		$\mathbb{E}[|H|] = O (n^{1 + \frac{1 + \varepsilon}{\kappa}})$.
		The runtime of the algorithm is $O (\kappa \log n)$ rounds whp, and the memory space used by each node $v \in V$ is $O (\deg (v) + \log \kappa)$.
	\end{theorem}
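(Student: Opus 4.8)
The plan is to keep the partition-into-clusters and the bridging-edges stages of the \Thm{}~\ref{thm:spanner} algorithm untouched and to replace only the sampling procedure, shrinking its per-node state from $O(\kappa)$ bits to $O(\log\kappa)$ bits. Following the sketch preceding the theorem, each node $v$ runs the procedure in $\kappa - 1$ \emph{iterations} indexed by $i = 0, \dots, \kappa - 2$. In iteration $i$ the nodes carry out $c' = 2\lceil c/\varepsilon'\rceil - 1$ fresh, interleaved executions of \Proc{}~$\mathtt{CountingToLogn}$, exactly as in the original procedure, except that a node draws random bits only for its $i$-th experiment: keeping a round counter modulo $\kappa$, in one out of every $\kappa$ rounds of a given \Proc{}~$\mathtt{CountingToLogn}$ execution it draws $1/(1-\varepsilon')$ fair bits and declares the $i$-th experiment of that execution a success iff at least one of them is $1$. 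As before, the iteration halts once $\lceil c/\varepsilon'\rceil$ of its $c'$ executions have terminated (this counter is maintained identically at all nodes, so all nodes leave the iteration simultaneously), and $v$ retains only the success/failure bit of its $i$-th experiment in the median-runtime execution of that iteration. Finally $v$ sets $\delta_v$ to the smallest index $i$ whose retained bit is a success, or to $\kappa - 1$ if none is; once a success has been found $v$ simply idles through the remaining iterations while staying synchronized with the other nodes.

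For correctness I would observe that $\delta_v$ is now the index of the first success among $\kappa - 1$ \emph{independent} experiments with (in general) distinct success probabilities $\phi_0, \dots, \phi_{\kappa - 2}$, where $\phi_i$ is the success probability inside the median-runtime \Proc{}~$\mathtt{CountingToLogn}$ execution of iteration $i$. By the computation in the proof of \Lem{}~\ref{lem:spanner-A}, each $\phi_i$ lies in $[1 - n^{-1/\kappa},\, 1 - n^{-(1+\varepsilon)/\kappa}]$ with probability at least $1 - 2n^{-c}$, so a union bound over the $\kappa - 1 < n$ iterations shows that the event $A'$ that \emph{all} $\phi_i$ lie in this range holds with probability at least $1 - 2n^{-(c-1)}$. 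Conditioned on $A'$, the estimate $\Pr[\delta_v = \kappa - 1] = \prod_i (1 - \phi_i) \leq n^{-(\kappa-1)/\kappa}$ used to bound $R_1(v)$ in \Lem{}~\ref{lem:spanner-edges} is unchanged, and I would re-prove \Lem{}~\ref{lem:spanner-capped} in the non-identical-parameter setting, replacing the quantity $1 - \phi$ by $\min_i (1 - \phi_i) \geq n^{-(1+\varepsilon)/\kappa}$, which is all that the $R_2(v)$ bound needs. Feeding these into the proof of \Lem{}~\ref{lem:spanner-edges} still yields $\mathbb{E}[|H|] = O(n^{1 + (1+\varepsilon)/\kappa})$ after taking $c$ large enough to swallow the extra $n^{-(c-1)}$ tail; and since \Lem{}~\ref{lem:spanner-B} and \Lem{}~\ref{lem:spanner-stretch} only involve the (unaltered) clustering and bridging stages, $H$ is still a $(2\kappa - 1)$-spanner whp.

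The runtime and memory bounds are then routine. Each iteration is $c' = O(1)$ interleaved runs of \Proc{}~$\mathtt{CountingToLogn}$ and hence lasts $O(\log n)$ rounds whp; a union bound over the $\kappa - 1$ iterations gives $O(\kappa \log n)$ rounds whp for the sampling procedure, which dominates the $O(\kappa)$-round clustering stage and the $O(\log n)$-round bridging stage. For memory, within an iteration a node stores only the $O(1)$ current experiment bits of the (constantly many) executions, the $O(1)$ counter of terminated executions, a round counter modulo $\kappa$, the iteration index, and the current value of $\delta_v$ — the last three taking $O(\log\kappa)$ bits — in place of the $c'$ vectors of $\kappa - 1$ bits used by the original procedure; together with the $O(\Degree(v))$ space of the bridging stage this gives the claimed $O(\Degree(v) + \log\kappa)$ bound.

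I expect the correctness step to be the only real obstacle: because each iteration rides on an independent batch of \Proc{}~$\mathtt{CountingToLogn}$ executions, the step-wise success probabilities $\phi_i$ need no longer coincide, so I cannot invoke \Lem{}~\ref{lem:spanner-capped} and the $R_1(v)/R_2(v)$ accounting built on it verbatim and must instead verify that these estimates degrade gracefully when $\phi$ is allowed to vary — within the $A'$ window — from experiment to experiment. Everything else is a direct re-run of the \Sect{}~\ref{sec:spanner-analysis} analysis with one additional union bound over the $\kappa - 1$ iterations.
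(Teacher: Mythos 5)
Your proposal is correct and follows the paper's own (very terse) route: the paper states the $\kappa-1$-iteration modification and asserts that "one can easily adapt the correctness arguments" of \Thm{}~\ref{thm:spanner}, and you supply exactly the missing adaptation — a union bound over iterations to define the event $A'$ that all per-iteration success probabilities $\phi_i$ fall in the target range, and a re-derivation of \Lem{}~\ref{lem:spanner-capped} for a non-homogeneous capped geometric (using the memoryless identity $\Pr[X=i\mid X\geq i]=\phi_i\leq\max_j\phi_j$, so $1-\phi$ is replaced by $\min_i(1-\phi_i)\geq n^{-(1+\varepsilon)/\kappa}$), after which the $R_1(v)$, $R_2(v)$, stretch, runtime, and memory accounting go through unchanged.
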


	\section{Verification Tasks} 
	\label{sec:verification-tasks}
	In this section, we provide GRC algorithms for various \emph{verification tasks}. We note that all of these tasks were previously studied by Das Sarma et al.~\cite{dasHKKNPPW2011} in the context of lower bounds in the CONGEST model. In verification tasks, the goal is to decide whether a connected graph $G=(V,E)$ and an input assignment $\mathcal{I}:V\rightarrow\{0,1\}^{*}$ satisfy a certain property. Formally, we represent verification tasks as a predicate $\Pi$ such that $\Pi(G,\mathcal{I})=1$ if the property in question is satisfied by $G$ and $\mathcal{I}$; and $\Pi(G,\mathcal{I})=0$ otherwise. For all of the verification tasks in this section, the input assignment $\mathcal{I}$ encodes a subgraph $H = (V_{H}, E_{H})$ in a distributed manner (possibly among other input components relevant to the task at hand). The correctness requirement for an algorithm that decides a predicate $\Pi$ is that all nodes output a correct answer.\footnote{Notice that this is a stronger requirement than the standard where for `no' instances, it suffices that some of the nodes output a negative answer. However, in the GRC model, this stronger correctness requirement can be obtained at the cost of at most one additional round. This is because any node that outputs a negative answer can inform all other nodes via a global circuit.} 
	
	\subsection{Minimum Spanning Tree Verification}
	\label{sec:mst-verification}
	The MST predicate $\Pi$ is defined as follows.
	For every graph $G = (V, E)$, an edge-weight function
	$w : E \rightarrow \{ 1, \dots, W \}$
	for some integer $W>0$,
	and a subgraph $H = (V_{H}, E_{H})$ of $G$, the predicate satisfies $\Pi(G, w, H) = 1$ if and only if $H$ is an MST of $G$ with respect to $w$.
	
	Recall that
	$W = \max_{e \in E} \{ w(e) \}$.
	The MST verification algorithm $\Alg$ runs in
	$O (\log n \cdot \log (n + W))$
	rounds whp and works as follows.
	The nodes compute a new edge-weight function $w': E \rightarrow \{1,\dots,2W\}$ defined as
	\[
	\begin{cases}
		w'(e) = 2w(e)-1 ,
		&
		e \in E_{H}
		\\
		w'(e) = 2w(e)
		\, ,
		&
		\text{otherwise}\ 
	\end{cases}
	\, .
	\]
	Notice that $w'$ satisfies
	\[
	w(e_{1}) > w(e_{2}) \Rightarrow w'(e_{1}) > w'(e_{2})
	\]
	for every two edges $e_{1}, e_{2} \in E$.
	Next, the nodes run the MST algorithm described in \Sect{}~\ref{sec:mst} on $G$ and $w'$.
	Every node $v \in V$ then checks if $T(v) = E_{H}(v)$ where $T$ is the output of the MST algorithm.
	If not, $v$ beeps through the global circuit, and all nodes output a negative answer.
	If the global circuit is silent during this last round, all nodes output a positive answer.
	
	From the definition of $w'$, we obtain the following observation.
	\begin{observation}
		\label{obs:mst-verification}
		For every graph $G$ and an edge-weight function $w$, if $T$ is an MST of $G$ with respect to $w'$, then $T$ is an MST of $G$ with respect to $w$.
	\end{observation}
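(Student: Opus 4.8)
The plan is to show that the edge-weight transformation $w \mapsto w'$ preserves the family of MSTs, in the sense that every MST with respect to $w'$ is also an MST with respect to $w$. The key structural fact I would rely on is the implication already recorded in the text: $w(e_1) > w(e_2) \Rightarrow w'(e_1) > w'(e_2)$ for all $e_1, e_2 \in E$. Contrapositively, $w'(e_1) \leq w'(e_2) \Rightarrow w(e_1) \leq w(e_2)$, so $w'$ \emph{refines} the weak order induced by $w$; in particular any ordering of the edges that is non-decreasing in $w'$ is also non-decreasing in $w$.

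First I would invoke the standard characterization of MSTs via Kruskal's algorithm (or, equivalently, the cut/cycle optimality conditions): a spanning tree $T$ is minimum with respect to a weight function if and only if it can be produced by some run of Kruskal's algorithm that processes the edges in non-decreasing weight order, with ties broken arbitrarily. So suppose $T$ is an MST with respect to $w'$. Then $T$ is the output of a run of Kruskal's algorithm on $(G, w')$ that examines the edges in some order $e_1, e_2, \dots, e_m$ with $w'(e_1) \leq w'(e_2) \leq \cdots \leq w'(e_m)$. By the refinement observation above, this same order satisfies $w(e_1) \leq w(e_2) \leq \cdots \leq w(e_m)$, i.e., it is a valid non-decreasing order for $w$ as well. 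Running Kruskal on $(G, w)$ with this edge order and the identical tie-breaking produces exactly the same tree $T$, since Kruskal's decision at each step (add $e_j$ iff it does not close a cycle among the edges accepted so far) depends only on the order in which edges are presented, not on the numeric weights themselves. Hence $T$ is also an output of Kruskal's algorithm on $(G, w)$, and therefore $T$ is an MST with respect to $w$.

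Alternatively, and perhaps more cleanly, I would argue via the cycle property: $T$ is an MST with respect to a weight function $w''$ iff for every non-tree edge $e$ and every edge $f$ on the unique $T$-path between $e$'s endpoints, $w''(f) \leq w''(e)$. If $T$ is an MST for $w'$, this holds with $w'' = w'$; by the refinement implication $w'(f) \leq w'(e) \Rightarrow w(f) \leq w(e)$, so the same inequalities hold for $w$, and thus $T$ is an MST for $w$. I expect the main (and only real) obstacle to be purely expository: making precise which equivalent MST characterization we lean on and checking that the "$w'$ refines $w$" implication is exactly the hypothesis needed to transport it — there is no genuine difficulty, since the transformation was engineered precisely so that this monotonicity holds. (The strict-tie-breaking byproduct — that distinct $E_H$-membership forces $w'(e_1) \neq w'(e_2)$ whenever $w(e_1) = w(e_2)$ — is not needed for this observation itself, only later when arguing the MST algorithm run on $w'$ returns a tree that can be compared against $E_H$ edge-by-edge.)
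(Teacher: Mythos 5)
The paper states Observation~\ref{obs:mst-verification} without an explicit proof, treating it as an immediate consequence of the recorded monotonicity $w(e_1) > w(e_2) \Rightarrow w'(e_1) > w'(e_2)$. Your proof is correct and fills in exactly the intended argument: both the Kruskal-ordering version and the cycle-property version are valid formalizations, with the cycle-property route being the cleaner of the two.
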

	We now prove the correctness of $\Alg$.
	\begin{lemma}
		\label{lem:mst-verification}
		Given a graph $G = (V, E)$, an edge-weight function
		$w : E \rightarrow \Reals$,
		and a subgraph $H = (V_{H}, E_{H})$ of $G$,
		the MST verification algorithm verifies that $H$ is an MST of $G$ whp.
	\end{lemma}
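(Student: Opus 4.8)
The plan is to prove that, whp, every node outputs the correct value $\Pi(G, w, H)$, i.e.\ $1$ exactly when $(V, E_{H})$ is a spanning tree of $G$ of minimum weight with respect to $w$. The last step of $\Alg$ is a single round in which every node $v$ with $T(v) \neq E_{H}(v)$ beeps through the global circuit and all nodes output ``no'' iff a beep is heard, so this step faithfully aggregates the local comparisons in $O(1)$ rounds. Hence it suffices to show that, whp, the output $T$ of the MST subroutine of \Sect{}~\ref{sec:mst} run on $(G, w')$ satisfies $T = E_{H}$ (meaning $T(v) = E_{H}(v)$ for all $v \in V$) if and only if $H$ is an MST of $G$ with respect to $w$. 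Throughout I would invoke \Thm{}~\ref{thm:mst}: this single invocation of the subroutine returns an MST of $G$ with respect to $w'$ --- in particular a spanning tree of $G$ --- whp, so only one application of the high-probability guarantee is consumed.

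For the direction ``$H$ is an MST of $G$ w.r.t.\ $w$ $\Rightarrow$ $T = E_{H}$ whp'', the key observation is that $E_{H}$ is the \emph{unique} MST of $G$ with respect to the perturbed weights $w'$. Indeed, for any spanning tree $T'$ of $G$ one has $w'(E_{H}) = 2 w(E_{H}) - (n - 1)$ and $w'(T') = 2 w(T') - |T' \cap E_{H}|$, hence
\[
w'(T') - w'(E_{H}) \, = \, 2 \bigl( w(T') - w(E_{H}) \bigr) + \bigl( n - 1 - |T' \cap E_{H}| \bigr) \, .
\]
The first summand is nonnegative since $E_{H}$ has minimum $w$-weight among spanning trees, and the second is nonnegative since $|E_{H}| = n - 1$; moreover, if $T' \neq E_{H}$ then $|T' \cap E_{H}| \leq n - 2$, making the second summand strictly positive. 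Thus $w'(T') > w'(E_{H})$ for every spanning tree $T' \neq E_{H}$, so $E_{H}$ is the unique $w'$-MST. By \Thm{}~\ref{thm:mst} the subroutine outputs a $w'$-MST whp, and by uniqueness this output is $E_{H}$; therefore $T = E_{H}$ whp and all nodes output ``yes''.

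For the converse, suppose $H$ is \emph{not} an MST of $G$ w.r.t.\ $w$, i.e.\ $(V, E_{H})$ is either not a spanning tree of $G$ or is a spanning tree of non-minimum $w$-weight. By \Thm{}~\ref{thm:mst}, $T$ is a $w'$-MST whp, and then \Obs{}~\ref{obs:mst-verification} upgrades this to: $T$ is also an MST of $G$ with respect to $w$ whp, so $(V, T)$ is a minimum-$w$-weight spanning tree whp. Since $E_{H}$ is not such a tree, $T \neq E_{H}$ as edge subsets whp, so the symmetric difference $T \triangle E_{H}$ contains some edge $e = (u, v)$, giving $T(v) \neq E_{H}(v)$; that node beeps in the final round and all nodes output ``no'', which is correct.

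The hard part is the uniqueness claim in the second paragraph: the MST subroutine only guarantees that it returns \emph{some} $w'$-MST, and its internal symmetry breaking could in principle return an MST other than $E_{H}$ if $w'$ admitted several --- the whole purpose of the perturbation (subtracting $1$ from $2w(e)$ precisely on the edges of $E_{H}$) is to eliminate this possibility exactly when $H$ is optimal. Two routine loose ends remain: the edge weights in the statement should be read as the integral weights $w : E \to \{1, \dots, W\}$ of the task (so that $w(e_{1}) > w(e_{2}) \Rightarrow w'(e_{1}) > w'(e_{2})$, the property underpinning \Obs{}~\ref{obs:mst-verification}), and the total runtime is $O(\log n \cdot \log(n + 2W)) + O(1) = O(\log n \cdot \log(n + W))$ whp, inherited from \Thm{}~\ref{thm:mst} together with the $O(1)$-round comparison step (computing $w'$ from $w$ needs no communication).
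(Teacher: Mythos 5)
Your proof is correct and takes essentially the same route as the paper: the paper also splits into the two directions, uses \Obs{}~\ref{obs:mst-verification} together with \Thm{}~\ref{thm:mst} for the ``$H$ not an MST $\Rightarrow$ negative answer'' direction, and for the other direction invokes the same algebraic identity about $w'$ to derive $w(T)<w(E_H)$ when $T\neq E_H$, which is exactly the inequality behind your uniqueness claim. Your version is somewhat more explicit --- casting that inequality as ``$E_H$ is the unique $w'$-MST'' makes the symmetry-breaking role of the perturbation transparent, whereas the paper compresses it into a one-line contradiction --- but the key computation and the logical structure are the same.
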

	\begin{proof}
		From \Obs{}~\ref{obs:mst-verification} combined with \Thm{}~\ref{thm:mst}, it follows that the nodes output a positive answer only if $H$ is an MST of $G$.
		In the converse direction, assume by contradiction that the nodes output a negative answer while $H$ is an MST of $G$.
		This means that the MST algorithm outputs a tree $T \neq E_{H}$.
		From the definition of $w'$, it follows that $w(T) < w(E_{H})$, in contradiction to the assumption.
	\end{proof}
	
	Recalling that the run time of the MST algorithm is
	$O (\log n \cdot \log (n + W))$
	(see \Lem{}~\ref{lem:mst-runtime}), we establish the following theorem.
	\begin{theorem}
		\label{thm:mst-verification}
		There exists an algorithm in the GRC model whose runtime is $O (\log n \cdot \log (n + W))$ rounds whp that verifies the MST predicate whp.
	\end{theorem}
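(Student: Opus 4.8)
The plan is to assemble the theorem from the three ingredients already in place: the algorithm $\Alg{}$ described above, its correctness (\Lem{}~\ref{lem:mst-verification}), and the runtime of the MST algorithm that $\Alg{}$ invokes as a subroutine (\Lem{}~\ref{lem:mst-runtime}). Correctness is essentially immediate: \Lem{}~\ref{lem:mst-verification} already establishes that every node outputs the correct answer whp, so the only thing left to argue is the runtime bound.

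For the runtime, I would break the execution of $\Alg{}$ into its three conceptual segments. First, computing the modified edge-weight function $w'$ from $w$ requires no communication: each node $v$ knows the weights of its incident edges together with the indicator of membership in $E_{H}$, and $w'(e)$ depends only on $w(e)$ and on whether $e \in E_{H}$, so this segment costs $0$ rounds. Second, the nodes run the MST algorithm of \Sect{}~\ref{sec:mst} on $G$ with the weight function $w'$; since $w'$ is integral with values in $\{1, \dots, 2W\}$, \Lem{}~\ref{lem:mst-runtime} guarantees termination within $O(\log n \cdot \log(n + 2W))$ rounds whp, and $\log(n + 2W) = O(\log(n + W))$, so this is $O(\log n \cdot \log(n + W))$ rounds whp. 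Third, each node $v$ compares $T(v)$ with $E_{H}(v)$ (a purely local check) and beeps on the global circuit if they differ; reading the global circuit in that round lets every node decide its output, so this segment (including the global-circuit construction of \Sect{}~\ref{sec:auxiliary}) costs $O(1)$ rounds. Summing the three segments and taking a union bound over the whp events of the MST subroutine and of \Lem{}~\ref{lem:mst-verification} yields an algorithm that runs in $O(\log n \cdot \log(n + W))$ rounds whp and is correct whp.

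There is no genuine obstacle here; the only points that deserve a moment of care are (i) checking that replacing $w$ by $w'$ does not inflate the runtime — it does not, since doubling $W$ changes $\log W$ by only an additive constant — and (ii) observing that the final verification round is genuinely $O(1)$ and hides no dependence on a global parameter, which holds because the global-circuit construction is boundedly uniform. With these remarks, \Thm{}~\ref{thm:mst-verification} follows directly from \Lem{}~\ref{lem:mst-verification} and \Lem{}~\ref{lem:mst-runtime}.
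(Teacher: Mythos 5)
Your proof is correct and follows the same route as the paper: correctness from \Lem{}~\ref{lem:mst-verification} and the runtime from \Lem{}~\ref{lem:mst-runtime}, with the observation that replacing $W$ by $2W$ is absorbed in the asymptotics. The paper states this conclusion almost verbatim; your version just makes the round accounting for the local preprocessing and the final global-circuit check explicit, which is a harmless elaboration.
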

	
	\subsection{Additional Verification Tasks}
	\label{sec:additional-verification-tasks}
	\subparagraph{Connected Spanning Subgraph.}
	The \emph{connected spanning subgraph} predicate $\Pi$ is defined as follows.
	For every graph $G = (V, E)$ and a subgraph $H = (V_{H}, E_{H})$ of $G$, the predicate satisfies $\Pi(G, H) = 1$ if and only if
	(1)
	$V_{H} = V$; and
	(2)
	$H$ is connected.
	
	The connected spanning subgraph verification algorithm $\Alg$ runs in $\Theta (\log n)$ rounds whp and works as follows.
	First, the nodes construct a global circuit.
	Then, each node $v \in V$ checks if at least one of its incident edges is in $E_{H}$.
	If not, $v$ beeps through the global circuit, and all nodes output a negative answer.
	Otherwise, the nodes execute the outgoing edge detection procedure described in \Sect{}~\ref{sec:auxiliary} on $G$ and $H$.
	Upon termination of the procedure, if an outgoing edge was detected by some node $v \in V$, then $v$ beeps through the global circuit, and all nodes output a negative answer.
	If the global circuit is silent during this last round, all nodes output a positive answer.
	
	The correctness and runtime of this algorithm follow directly from the design of $\Alg$ and from \Lem{}~\ref{lem:outgoing}-\ref{lem:outgoing-runtime}.
	
	\subparagraph{$e$-Cycle Containment.}
	The \emph{$e$-cycle containment} predicate $\Pi$ is defined as follows.
	For every graph $G = (V, E)$, a subgraph $H = (V_{H}, E_{H})$ of $G$, and an edge $e \in E$, the predicate satisfies $\Pi(G, H, e) = 1$ if and only if
	$H$ contains a cycle containing $e$.
	\begin{observation}
		\label{obs:e-cycle-containment}
		If edge $e \in E$ is an outgoing edge for some cluster induced by $E_{H} - \{ e \}$ on $G$, then $e$ is not contained in a cycle of $H$.
	\end{observation}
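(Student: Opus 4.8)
The plan is to establish the contrapositive in essentially one step, reducing the combinatorial claim to the defining property of an outgoing edge recalled in \Sect{}~\ref{sec:preliminaries}. Write $e = (u, v)$, and note that the clusters in question are the connected components of $(V, E_{H} - \{ e \})$, which form a partition of the vertex set. First I would observe that if $e$ is an outgoing edge of some cluster $S$, then by definition $u \in S$ and $v \notin S$; since the clusters partition the vertices, $v$ lies in a distinct cluster $S' \neq S$, and hence $u$ and $v$ belong to different connected components of $(V, E_{H} - \{ e \})$. In particular, there is no $u$--$v$ path whose edges all lie in $E_{H} - \{ e \}$.

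Next I would suppose, towards a contradiction, that $H$ contains a (simple) cycle $Y$ through $e$. Then $e \in E_{H}$ and $Y - \{ e \}$ is a $u$--$v$ path all of whose edges belong to $E_{H} - \{ e \}$; this places $u$ and $v$ in the same connected component of $(V, E_{H} - \{ e \})$, i.e., in the same cluster, contradicting the previous paragraph. Therefore no cycle of $H$ contains $e$, which is exactly the asserted statement. I do not expect a genuine obstacle here: the only point requiring (minor) care is to align the ``cluster / outgoing edge'' vocabulary of the preliminaries with the connected-component structure of $E_{H} - \{ e \}$, and to note in passing that a cycle of $H$ through $e$ forces $e \in E_{H}$, so the argument covers the degenerate case $e \notin E_{H}$ as well.
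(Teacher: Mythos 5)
Your argument is correct and, indeed, the paper states this as an \emph{observation} without supplying a proof, treating it as immediate from the definitions; your write-up is exactly the straightforward justification one would expect. The key step --- that $e$ being outgoing from a cluster of $(V, E_{H} - \{e\})$ places its endpoints in distinct connected components, while a simple cycle of $H$ through $e$ would yield a $u$--$v$ path inside $E_{H} - \{e\}$ --- is precisely the intended reasoning, and your remark that the degenerate case $e \notin E_{H}$ is subsumed (since a cycle of $H$ through $e$ forces $e \in E_{H}$) is a nice touch.
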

	The $e$-cycle containment verification algorithm $\Alg$ runs in $\Theta (\log n)$ rounds whp and works as follows.
	First, the nodes construct a global circuit.
	In the first round, both endpoints of $e$ check if $e \in E_{H}$.
	If not, they beep through the global circuit, and all nodes output a negative answer.
	Otherwise, the nodes execute the outgoing edge detection procedure described in \Sect{}~\ref{sec:auxiliary} on $G$ and $E_{H} - \{ e \}$.
	Upon termination of the procedure, if $e$ is determined as an outgoing edge, then both its endpoints beep through the global circuit, and all nodes output a negative answer. 
	If the global circuit is silent during this last round, all nodes output a positive answer.
	The correctness and runtime of this algorithm follow directly from
	\Obs{}~\ref{obs:e-cycle-containment} and from
	\Lem{}~\ref{lem:outgoing}-\ref{lem:outgoing-runtime}.
	
	\subparagraph{$(s, t)$-connectivity.}
	The \emph{$(s, t)$-connectivity} predicate $\Pi$ is defined as follows.
	For every graph $G = (V, E)$, a subgraph $H = (V_{H}, E_{H})$ of $G$, and two nodes $s, t \in V$, the predicate satisfies $\Pi(G, H, s, t) = 1$ if and only if $s$ and $t$ are in the same connected component of $H$.
	
	The $(s, t)$-connectivity verification algorithm $\Alg$ runs in $\Theta (\log n)$ rounds whp and works as follows.
	First, the nodes construct two global circuits and a circuit for each connected component of $H$.
	Through the first global circuit, the nodes execute \Proc{}~$\mathtt{CountingToLogn}$ described in \Sect{}~\ref{sec:auxiliary} for (a sufficiently large) $c > 1$ times.
	While \Proc{}~$\mathtt{CountingToLogn}$ is being executed, nodes $s$ and $t$ toss random bits and beep them through the circuit of their connected component.
	If in some round one of $s$ and $t$ is silent and hears a beep through its component's circuit, it beeps through the second global circuit, and all nodes output a positive answer.
	Upon termination of \Proc{}~$\mathtt{CountingToLogn}$, if the second global circuit was silent throughout the execution, all nodes output a negative answer.
	
	The correctness and runtime of this algorithm follow directly from the design of $\Alg$ and from \Lem{}~\ref{lem:counting}.
	
	\subparagraph{Connectivity.}
	The \emph{connectivity} predicate $\Pi$ is defined as follows.
	For every graph $G = (V, E)$ and a subgraph $H = (V_{H}, E_{H})$ of $G$, the predicate satisfies $\Pi(G, H) = 1$ if and only if $H$ is connected.
	
	The connectivity verification algorithm $\Alg$ runs in $\Theta (\log n)$ rounds whp and works as follows.
	First, the nodes construct a global circuit.
	Then, the nodes execute the outgoing edge detection procedure described in \Sect{}~\ref{sec:auxiliary} on $G$ and $H$.
	Note that a node $v \in V - V_{H}$ does not participate in this stage.
	Upon termination of the procedure, if an outgoing edge was detected by some node $v \in V_{H}$, then $v$ beeps through the global circuit, and all nodes output a negative answer.
	If the global circuit is silent during this last round, all nodes output a positive answer.
	
	The correctness and runtime of this algorithm follow directly from the design of $\Alg$ and from \Lem{}~\ref{lem:outgoing}-\ref{lem:outgoing-runtime}.
	
	\subparagraph{Cut Verification.}
	The \emph{cut verification} predicate $\Pi$ is defined as follows.
	For every graph $G = (V, E)$ and a subgraph $H = (V_{H}, E_{H})$ of $G$, the predicate satisfies $\Pi(G, H) = 1$ if and only if the graph $G' = (V, E - E_{H})$ is not connected.
	
	The cut verification algorithm runs in $\Theta (\log n)$ rounds whp and works as follows.
	We solve this task by a reduction from the connectivity verification task described above.
	The nodes construct the graph
	$G' = (V, E - E_{H})$
	and run the connectivity verification algorithm.
	If the answer is positive, the nodes output a negative answer, and vice versa.
	
	The correctness and runtime of this algorithm follow directly from the correctness and runtime of the connectivity verification algorithm.

	\subparagraph{Edge on all $(s, t)$-Paths.}
	The \emph{edge on all $(s, t)$-paths} predicate $\Pi$ is defined as follows.
	For every graph $G = (V, E)$, a subgraph $H = (V_{H}, E_{H})$ of $G$, and an edge $e = (s, t) \in E_{H}$, the predicate satisfies $\Pi(G, H, e) = 1$ if and only if
	$e$ is an $(s, t)$-cut in $H$, namely, the edge $e$ lies on every path between
	$s$ and $t$ in $H$.
	
	\begin{observation}
		\label{obs:edge-on-all-paths}
		For two nodes $s, t \in V_{H}$, the edge $e = (s, t)$ lies on every path
		between $s$ and $t$ in $H$ if and only if $e$ is not contained in a cycle of
		$H$.
	\end{observation}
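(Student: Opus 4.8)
The plan is to prove the two directions of the biconditional separately, each by contraposition, relying only on the elementary fact that a simple path between the two endpoints of an edge closes into a cycle through that edge.

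For the direction ``if $e$ lies on every path between $s$ and $t$ in $H$, then $e$ is not contained in a cycle of $H$'', I would argue the contrapositive. Suppose $e=(s,t)$ belongs to some cycle $Y$ of $H$. Deleting $e$ from $Y$ leaves a simple path from $s$ to $t$ all of whose edges and internal vertices lie in $H$ and which does not use $e$. This is an $(s,t)$-path in $H$ avoiding $e$, contradicting the premise.

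For the converse, again by contraposition: suppose $e$ does not lie on every $(s,t)$-path in $H$, i.e., there is an $(s,t)$-path $P$ in $H$ with $e\notin P$. Passing to a simple subpath if necessary (a simple path extracted from $P$ uses only edges of $P$, hence still avoids $e$), we may assume $P$ is simple; since $P$ joins $s$ and $t$ and does not use the edge $e=(s,t)$, adding $e$ to $P$ yields a cycle of $H$ containing $e$. Combining the two contrapositives gives the stated equivalence. I do not anticipate any real obstacle; the only point needing a word of care is the reduction from an arbitrary $(s,t)$-walk avoiding $e$ to a \emph{simple} $(s,t)$-path avoiding $e$, which is immediate.
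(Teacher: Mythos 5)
Your argument is correct and complete; the two contrapositions (cycle $\Rightarrow$ alternative $(s,t)$-path, and alternative simple $(s,t)$-path $\Rightarrow$ cycle) are exactly the standard decomposition, and your caveat about extracting a simple path from a possibly non-simple one is the right thing to flag. The paper states this observation without proof, treating it as self-evident, so your write-up simply makes explicit the routine argument the authors had in mind.
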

	
	The edge on all paths verification algorithm runs in $\Theta (\log n)$ rounds whp and works as follows.
	The nodes run the $e$-cycle containment verification algorithm described above on $G$, $H$, and $e$.
	If the answer is positive, the nodes output a negative answer, and vice versa.
	
	The correctness and runtime of this algorithm follow directly from the correctness of the $e$-cycle containment verification algorithm, combined with \Obs{}~\ref{obs:edge-on-all-paths}.
	
	\subparagraph{$(s, t)$-Cut.}
	The \emph{$(s, t)$-cut} predicate $\Pi$ is defined as follows.
	For every graph $G = (V, E)$, a subgraph $H = (V_{H}, E_{H})$ of $G$, and two nodes $s, t \in V$, the predicate satisfies $\Pi(G, H, s, t) = 1$ if and only if $E_{H}$ is an $(s, t)$-cut in $G$.
	
	The $(s, t)$-cut verification algorithm runs in $\Theta (\log n)$ rounds whp and works as follows.
	We solve this task by a reduction from the $(s, t)$-connectivity verification task described above.
	The nodes construct the graph
	$G' = (V, E - E_{H})$
	and run the $(s, t)$-connectivity verification algorithm on $G$, $G'$, $s$, and $t$.
	If the answer is positive, the nodes output a negative answer, and vice versa.
	
	The correctness and runtime of this algorithm follow directly from the correctness and runtime of the $(s, t)$-connectivity verification algorithm.
	
	\subparagraph{Hamiltonian Cycle.}
	The \emph{Hamiltonian cycle} predicate $\Pi$ is defined as follows.
	For every graph $G = (V, E)$ and a subgraph $H = (V_{H}, E_{H})$ of $G$, the predicate satisfies $\Pi(G, H) = 1$ if and only if
	(1)
	$H$ is a simple cycle; and
	(2)
	$V_{H} = V$.
	
	The Hamiltonian cycle verification algorithm $\Alg$ runs in $\Theta (\log n)$ rounds whp and works as follows.
	First, the nodes construct a global circuit.
	Then, each node $v \in V$ checks if $|E_{H}(v)| = 2$.
	If not, $v$ beeps through the global circuit, and all nodes output a negative answer.
	Otherwise, the nodes execute the outgoing edge detection procedure described in \Sect{}~\ref{sec:auxiliary} on $G$ and $H$.
	Upon termination of the procedure, if an outgoing edge was detected by some node $v \in V$, then $v$ beeps through the global circuit, and all nodes output a negative answer.
	If the global circuit is silent during this last round, all nodes output a positive answer.
	
	The correctness and runtime of $\Alg$ follow directly from the design of $\Alg$ and from \Lem{}~\ref{lem:outgoing}-\ref{lem:outgoing-runtime}.
	
	\subparagraph{Simple Path.}
	The \emph{simple path} predicate $\Pi$ is defined as follows.
	For every graph $G = (V, E)$ and a subgraph $H = (V_{H}, E_{H})$ of $G$, the predicate satisfies $\Pi(G, H) = 1$ if and only if
	$H$ is a simple path.
	\begin{observation}
		\label{obs:simple-path}
		A graph $F = (V_{F}, E_{F})$ is a simple path if all of the following three conditions are satisfied:
		(1)
		for every node $v \in V$ is holds $|E_{F}(v)| \leq 2$;
		(2)
		there exists at least one node $v \in V$ such that $|E_{F}(v)| = 1$; and
		(3)
		$F$ is connected
	\end{observation}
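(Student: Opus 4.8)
The plan is to invoke the elementary structural fact that a connected graph of maximum degree at most $2$ is either a simple path or a simple cycle, and then to exclude the cycle case using condition~(2).

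First I would show that $F$ is acyclic. Suppose towards contradiction that $F$ contains a simple cycle $C$. Every node on $C$ already has two incident edges within $C$, so by condition~(1) it has no further incident edges; in particular, no node of $C$ has a neighbor outside $V(C)$. Since $F$ is connected by condition~(3), this forces $V_{F} = V(C)$ and $E_{F} = E(C)$, i.e., $F = C$. But then $|E_{F}(v)| = 2$ for every node $v$, contradicting condition~(2). Hence $F$ is acyclic; being connected as well, $F$ is a tree, and condition~(2) guarantees a node of positive degree, so $|E_{F}| \geq 1$.

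Next I would show that this tree is a path. Let $P = v_{0}, v_{1}, \dots, v_{k}$ be a longest simple path in $F$ (with $k \geq 1$ since $|E_{F}| \geq 1$). Any neighbor of $v_{0}$ other than $v_{1}$ would lie either off $P$, yielding a strictly longer path, or on $P$, yielding a cycle in the tree $F$; both are impossible, so $|E_{F}(v_{0})| = 1$, and symmetrically $|E_{F}(v_{k})| = 1$. If some node $u \notin V(P)$ existed, then by connectivity a shortest path from $u$ to $V(P)$ would reach $P$ at a node $v_{i}$ that thereby gains a neighbor off $P$; for an internal index $0 < i < k$ this makes $|E_{F}(v_{i})| \geq 3$, contradicting condition~(1), and for $i \in \{0, k\}$ it contradicts the degree-one conclusion just established. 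Hence $V(P) = V_{F}$, and since a tree on $k + 1$ nodes has exactly $k$ edges, $E_{F}$ consists precisely of the $k$ edges of $P$. Therefore $F = P$ is a simple path.

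The only mild subtlety, and the step I would treat most carefully, is the cycle case: one has to notice that condition~(1) prevents a cycle inside $F$ from carrying any pendant structure, so that connectivity pins $F$ down to be exactly that cycle, which condition~(2) then rules out. Everything else is the standard longest-path argument for trees of maximum degree at most $2$.
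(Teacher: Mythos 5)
Your proof is correct. The paper states this observation without any proof, treating it as an immediate structural fact about connected graphs of maximum degree two; your argument — excluding cycles (and the isolated vertex) via condition (2), then showing a connected acyclic graph of maximum degree two is a path via the longest-path argument — is a complete and rigorous justification of exactly that fact.
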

	
	The simple path verification algorithm $\Alg$ runs in $\Theta (\log n)$ rounds whp and works as follows.
	First, the nodes construct a global circuit.
	Then, each node $v \in V$ checks if $|E_{H}(v)| \leq 2$.
	If not, $v$ beeps through the global circuit, and all nodes output a negative answer.
	Next, a node whose degree is $1$ beeps through the global circuit.
	If the global circuit is silent during this round, all nodes output a negative answer.
	Otherwise, the nodes execute the connectivity verification algorithm described above.
	If the answer is positive, all nodes output a positive answer, and vice versa.
	
	The correctness and runtime of $\Alg$ follow directly from the design of $\Alg$ combined with \Obs{}~\ref{obs:simple-path}.
	
	{\vskip 1pc}
	We therefore establish the following theorem.
	\begin{theorem}
		\label{thm:verification-tasks}
		There exist algorithms in the GRC model whose runtime is $\Theta (\log n)$ rounds whp that verify the following predicates whp: connected spanning subgraph, $e$-cycle containment, $(s, t)$-connectivity, connectivity, cut verification, edge on all paths, $(s, t)$-cut, Hamiltonian cycle, and simple path.
	\end{theorem}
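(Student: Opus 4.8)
The plan is to establish the theorem by treating each of the nine predicates separately, leaning entirely on the algorithms and structural observations already laid out in this section. For each predicate I would check two things: that the described algorithm decides the predicate correctly whp, and that its runtime is $\Theta(\log n)$ rounds whp. Since every one of these algorithms is assembled from a constant number of invocations of the auxiliary primitives --- global circuit construction (deterministic, $O(1)$ rounds), \Proc{}~$\mathtt{CountingToLogn}$ (\Lem{}~\ref{lem:counting}), the outgoing edge detection procedure (\Lem{}~\ref{lem:outgoing} and \Lem{}~\ref{lem:outgoing-runtime}), leader election (\Thm{}~\ref{thm:leader-election}), and the message-passing simulation (\Thm{}~\ref{thm:message-passing-simulation}) --- a union bound over these constantly many randomized components yields the desired whp guarantees for both correctness and runtime, provided each component's failure probability is polynomially small in $n$; this is arranged in each algorithm by running \Proc{}~$\mathtt{CountingToLogn}$ a sufficiently large constant number of times. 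The matching $\Omega(\log n)$ lower bound on the runtime is inherited directly from the lower-bound halves of \Lem{}~\ref{lem:counting} and \Lem{}~\ref{lem:outgoing-runtime}, so it suffices to argue the $O(\log n)$ upper bound and correctness.

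First I would dispatch the ``base'' cases that invoke outgoing edge detection directly: connected spanning subgraph, connectivity, and Hamiltonian cycle. In each, the local degree/membership conditions ($V_H = V$; $|E_H(v)| \le 2$, resp.\ $= 2$) are checked in a single round and any violation is reported through the global circuit, while the ``$H$ is connected'' part is certified by the absence of any outgoing edge of the clusters induced by $E_H$: here $H$ (restricted to $V_H$) is connected iff it forms a single cluster, so correctness follows from \Lem{}~\ref{lem:outgoing}, with nodes of $V - V_H$ correctly excluded from the connectivity algorithm. The $e$-cycle containment algorithm is handled via \Obs{}~\ref{obs:e-cycle-containment}: $e$ lies on a cycle of $H$ iff $e$ is not an outgoing edge of a cluster of $E_H - \{e\}$, again decided by outgoing edge detection after the single-round check that $e \in E_H$. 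The $(s,t)$-connectivity algorithm is the lone base case not using outgoing edge detection: the point is that if $s$ and $t$ lie in the same component of $H$ then the shared component circuit lets one of them hear the other's random beeps within $O(\log n)$ rounds whp, and if they lie in distinct components no such signal is ever transmitted; \Lem{}~\ref{lem:counting} guarantees that enough rounds elapse for this detection to succeed whp.

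Next I would handle the ``reduction'' cases, each of which runs one of the base algorithms on a locally computable modified instance and possibly negates the output: cut verification reduces to connectivity on $G' = (V, E - E_H)$; $(s,t)$-cut reduces to $(s,t)$-connectivity on the same $G'$; edge on all $(s,t)$-paths reduces to $e$-cycle containment via \Obs{}~\ref{obs:edge-on-all-paths}; and simple path reduces, after the local checks $|E_H(v)| \le 2$ for all $v$ and $|E_H(v)| = 1$ for some $v$, to connectivity, as justified by \Obs{}~\ref{obs:simple-path}. Constructing $G'$ (and, for $e$-cycle containment, $E_H - \{e\}$) requires no communication since each node knows $E_H(v)$, so these reductions add only $O(1)$ rounds and preserve the asymptotic runtime and the whp guarantees. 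Collecting the nine cases and taking a final union bound over the constantly many randomized sub-procedures across all of them proves the theorem.

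The main obstacle here is bookkeeping rather than mathematical depth: one must verify that in each reduction the instance passed to the sub-algorithm genuinely matches the semantics assumed by that sub-algorithm's correctness lemma --- e.g.\ that negating the sub-algorithm's verdict yields the correct predicate value on every instance, and that the local-condition checks are placed before the sub-routine so that a ``no'' verdict is never overridden --- and that the per-component error probabilities, each of the form $n^{-\Omega(c)}$ for a tunable constant $c$, remain polynomially small after the union bound. Choosing $c$ large enough relative to the (constant) number of sub-procedures invoked across all the algorithms resolves the latter concern.
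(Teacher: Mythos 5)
Your proposal is correct and follows essentially the same route as the paper: the theorem is established by the per-predicate algorithm descriptions in Section~6.2, with correctness and $\Theta(\log n)$ runtime in each case following from \Lem{}~\ref{lem:counting}, \Lem{}~\ref{lem:outgoing}--\ref{lem:outgoing-runtime}, and \Obs{}~\ref{obs:e-cycle-containment}, \ref{obs:edge-on-all-paths}, \ref{obs:simple-path}, combined with a union bound over the constantly many randomized components. Your decomposition into ``base'' cases (those calling outgoing edge detection or the $(s,t)$-beeping directly) and ``reduction'' cases ($G'$ constructions, negations) matches the paper's organization, and the observation that the $\Omega(\log n)$ side of the $\Theta$ is inherited from the lower-bound halves of \Lem{}~\ref{lem:counting} and \Lem{}~\ref{lem:outgoing-runtime} is exactly the intended reading.
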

	
	\section{Lower Bounds} 
	\label{sec:lower-bounds}
	In this section, we present a generic lower bound for the GRC model. The lower bound relies on a reduction from functions in the (two-party) \emph{communication complexity} setting \cite{Kushilevitz-Nisan}. In the communication complexity setting, two players, namely Alice and Bob, each receive an input string $x\in \{0,1\}^{\lambda}$ and $y\in \{0,1\}^{\lambda}$, respectively. Their goal is to jointly compute some function of $x$ and $y$ by exchanging messages. A notoriously hard (and thus, useful in the context of hardness results) function in communication complexity is \emph{set-disjointness}, which is denoted by $\mathtt{DISJ}(x,y)$ and defined so that $\mathtt{DISJ}(x,y)=1$ if $\langle x, y\rangle=0$ (where $\langle \cdot, \cdot\rangle$ denotes the inner-product of two vectors); and $\mathtt{DISJ}(x,y)=0$ otherwise. A well-known result is that solving set-disjointness requires $\Omega(\lambda)$ bits of communication between Alice and Bob even if the parties have access to a shared random bit-string of unbounded size \cite{Kushilevitz-Nisan}. For ease of presentation, the generic lower bound shown in this section is given based on reductions from set-disjointness. Extending the lower bound to other communication complexity functions can be done in a straightforward manner.
	
	Towards presenting the lower bound, we define the following notion for graph
	decision problems.
	Let $f,h:\mathbb{Z}_{> 0}\rightarrow \mathbb{Z}_{>0}$ be a pair of functions. A graph decision problem $\Pi$ is said to be \emph{$(f,h)$-hard} if for any $\lambda_{0}>0$, there exists an integer $\lambda>\lambda_{0}$ such that for any pair of bit-strings $x,y\in\{0,1\}^{\lambda}$, there exist an $n$-node graph $G(x,y)=(V,E)$, a partition of $V$ into two disjoint non-empty sets $A,B\subset V$, and a bit $b\in\{0,1\}$ such that: (1) the edges of $E$ with both endpoints in $A$ (resp., $B$) are fully determined by $x$ (resp., $y$); (2) the edges that cross the $(A,B)$-cut do not depend on $x$ or $y$; (3) $\lambda\geq f(n)$; (4) $h(n)\geq |\partial_{A}|$;
	and (5) $\Pi(G(x,y))=1\iff\mathtt{DISJ}(x,y)=b$.\footnote{The definition of $(f,h)$-hardness can be naturally extended to graph decision problems that include an input assignment to the nodes.}
	\begin{theorem}\label{thm:lower-bound-reduction}
		If a graph decision problem $\Pi$ is $(f,h)$-hard for functions $f,h:\mathbb{Z}_{> 0}\rightarrow \mathbb{R}_{>0}$, then the runtime of any (randomized) algorithm for $\Pi$ in the GRC model is bounded by $\Omega\left(\frac{f(n)}{h(n)\cdot k\cdot (\log h(n)+\log k)}\right)$, where $k$ is the number of pins assigned to every edge $e\in E$.
	\end{theorem}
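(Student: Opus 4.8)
The plan is to prove the bound by a reduction from two-party set-disjointness. Fix an $(f,h)$-hard problem $\Pi$ and suppose, towards a contradiction, that some randomized GRC algorithm $\mathcal{A}$ solves $\Pi$ with runtime $T$; given an integer $\lambda$ witnessing $(f,h)$-hardness (with associated $n$-node graph template, partition $\{A,B\}$ of $V$, cut-edge set $\partial_A$, and bit $b$), Alice and Bob will simulate $\mathcal{A}$ on $G(x,y)$ to decide $\mathtt{DISJ}(x,y)$: Alice simulates the nodes in $A$ and Bob those in $B$. This is feasible because the edges internal to $A$ are determined by $x$, the cut edges are independent of $x,y$, and the state machine of each node depends only on its degree, its incident port numbers and its local input --- all available to the owning player; the private random bits of the $A$-nodes (resp.\ $B$-nodes) are drawn by Alice (resp.\ Bob). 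After simulating at most $T$ rounds, Alice reads the output of one of her nodes, which equals $\Pi(G(x,y))$, and the players output $\mathtt{DISJ}(x,y)$ via the identity $\Pi(G(x,y)) = 1 \iff \mathtt{DISJ}(x,y) = b$ (both players know $n$ and $b$, which are fixed by $\lambda$). Since the guarantees of $\mathcal{A}$ are ``whp'', we take $T$ large enough that $\mathcal{A}$ both halts within $T$ rounds and is correct with probability $\ge 5/6$, so the truncated simulation is a constant-error protocol; as $\mathtt{DISJ}$ on $\lambda$-bit inputs requires $\Omega(\lambda)$ bits of communication even with shared randomness, it then remains to show that one round of $\mathcal{A}$ can be simulated with $O\!\left(k|\partial_A|(\log k + \log|\partial_A|)\right)$ bits of communication, whence $T \cdot k|\partial_A|(\log k + \log|\partial_A|) = \Omega(\lambda) = \Omega(f(n))$ and the theorem follows from $|\partial_A| \le h(n)$.

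The core of the proof is this per-round scheme. Write $s = k|\partial_A|$ for the number of pins on cut edges; these pins are known to both players and share a common naming $(e,i)$. At the start of a round each player knows the beep decisions of its own nodes and the local pin partitions of its own nodes, so Alice can compute the partition of $\Pins(A) := \bigcup_{v\in A}\Pins(v)$ into \emph{$A$-blocks}, namely the transitive closure of the local pin partitions of the nodes in $A$; Bob computes the $B$-blocks of $\Pins(B)$ analogously. The structural claim is that every global circuit is either (i) a single block containing no cut pin --- in which case its owner knows the circuit and the beep on it entirely --- or (ii) a union of blocks each containing at least one cut pin, which is exactly a connected component of the auxiliary graph $\mathcal{H}$ whose vertices are the $s$ cut pins and whose edges join two cut pins lying in a common $A$-block or in a common $B$-block. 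This holds because distinct $A$-blocks (resp.\ $B$-blocks) are disjoint and an $A$-block can meet a $B$-block only at a pin incident on a node of $A$ and a node of $B$, i.e.\ a cut pin, so a block with no cut pin cannot merge with anything in the global transitive closure. Alice now sends Bob, for each cut pin, the identifier (in $[s]$) of its $A$-block together with one bit indicating whether that block carries a beep from an $A$-node; Bob sends the symmetric information. This costs $O(s\log s)$ bits in each direction, and it lets both players reconstruct $\mathcal{H}$, its connected components, and --- for each component --- the disjunction of the beep bits over the ($A$- and $B$-) blocks it meets. That disjunction is precisely the ``some node beeped on this circuit'' signal for every cut-pin circuit; combined with each player's locally known signal for cut-free blocks, every node's step~(2) input is determined, so each player can carry out steps (2) and (3) of the round for all of its nodes. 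Since $s\log s \le k\,h(n)(\log k + \log h(n))$, the per-round cost is as claimed, and summing over the $T$ rounds gives the total $O(T\,k\,h(n)(\log k + \log h(n)))$.

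The step I expect to be the main obstacle is the rigorous justification of this per-round scheme: proving that the auxiliary graph $\mathcal{H}$ faithfully encodes the cut-crossing circuits --- in particular handling the boundary cases of circuits confined to one side and of circuits that cross the cut but on which no $A$-node (or no $B$-node) beeps --- and verifying that a partition of $s$ elements plus one flag per element really does fit in $O(s\log s)$ bits, so that the overhead is genuinely only logarithmic in the cut size rather than linear. By contrast, the outer reduction --- the bookkeeping around the ``whp'' guarantees via truncation, and checking that Alice and Bob need no global information about $G$ beyond the publicly fixed cut and the publicly fixed values $n$ and $b$ --- should be routine.
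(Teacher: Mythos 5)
Your proposal is correct and takes essentially the same approach as the paper: both simulate the algorithm with Alice owning $A$ and Bob owning $B$, have each side compute the projection/transitive-closure of its own local pin partitions, exchange for each cut pin an $O(\log|Q|)$-bit class identifier plus one beep flag, and recover the cut-crossing circuits by chaining the shared identifiers (your auxiliary graph $\mathcal{H}$ is exactly the paper's chain condition, just named). Your write-up is in fact a bit more explicit than the paper's on two points it leaves implicit — the structural dichotomy that a circuit with any cut pin is a union of blocks all containing cut pins, and the truncation/error bookkeeping needed to turn a whp GRC algorithm into a constant-error $\mathtt{DISJ}$ protocol.
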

	
	\begin{table}
		\begin{center}
			\resizebox{\columnwidth}{!}{%
				\begin{tabular}{| l| l| l| l| l|}
					\hline
					task & \textbf{$f(n)$} & \textbf{$h(n)$} & runtime & paper \\
					\hline
					$W$-weighted $8$-cycle freeness & $\Omega(n^{2})$ & $O(1)$ & $\Omega(n^{2})$ & \cite{AbboudCKP21}\\
					\hline
					$3$-colorability & &&&\\
					minimum vertex cover & $\Omega(n^{2})$ & $O(\log n)$ & $\Omega\left(\frac{n^{2}}{\log n\log \log n}\right)$ & \cite{AbboudCKP21}\\
					maximum independent set &&&&\\
					\hline
					minimum dominating set & $\Omega(n^{2})$ & $O(\log n)$ & $\Omega\left(\frac{n^{2}}{\log n\log \log n}\right)$ & \cite{BachrachCDELP19}\\
					\hline
					diameter $>2$ & $\Omega(n^{2})$ & $O(n)$ & $\Omega(n/\log n)$ & \cite{frischknechtHW2012,Censor-HillelPP20}\\
					\hline
					$C_{5}$-freeness& $\Omega(n^{2})$& $O(n)$ & $\Omega(n/\log n)$ & \cite{DruckerKO13}\\
					\hline
					radius $>3$ & $\Omega(n)$ & $O(\log n)$ & $\Omega\left(\frac{n}{\log n\log \log n}\right)$ & \cite{AbboudCKP21}\\
					\hline
					$C_{4}$-freeness & $\Omega(n^{3/2})$& $O(n)$ & $\Omega(\sqrt{n}/\log n)$ & \cite{DruckerKO13}\\
					\hline
					$K_{4}$-freeness& $\Omega(n^{2})$& $O(n^{3/2})$ & $\Omega(\sqrt{n}/\log n)$ & \cite{CzumajK20}\\
					\hline
				\end{tabular}%
			}
		\end{center}
		\caption{\label{table:lower-bounds}%
			A sample of existing
			$(f, h)$-hardness
			results and the GRC runtime lower bounds derived from them (assuming that each
			edge is associated with
			$k = O (1)$
			pins).}
	\end{table}
	
	Before proving \Thm{}~\ref{thm:lower-bound-reduction}, let us present its
	applicability.
	Reductions from communication complexity functions to graph problems have been
	explored thoroughly in the context of lower bounds for CONGEST algorithms (see, e.g., \cite{censorD2018,frischknechtHW2012,BachrachCDELP19,AbboudCKP21,GrossmanKP20,Censor-HillelKP17,AbboudCK16,DruckerKO13,CzumajK20}).
	Consequently, many natural graph problems admit non-trivial $(f,h)$-hardness
	results, and thus,  \Thm{}~\ref{thm:lower-bound-reduction} establishes a lower bound for these problems in the context of the GRC model.
	Refer to Table \ref{table:lower-bounds} for a sample of concrete GRC runtime
	lower bounds that follow
	from \Thm{}~\ref{thm:lower-bound-reduction}.
	
	We go on to prove \Thm{}~\ref{thm:lower-bound-reduction}.
	\begin{proof}[Proof of \Thm{}~\ref{thm:lower-bound-reduction}]
		Let \Alg{}~be an algorithm for $\Pi$ in the GRC model. Given inputs $x,y\in \{0,1\}^{\lambda}$ and a shared random bit-string, we show that Alice and Bob can simulate \Alg{}~on the graph $G(x,y)=(V=A\dot{\cup}B,E)$ to decide $\Pi(G(x,y))$ and thus solve set-disjointness. As standard, the shared random bit-string is utilized to simulate the random coins tossed by the nodes during \Alg{}. Let $Q=\partial_{A}\times [k]$ be the set of pins associated with edges crossing the $(A,B)$-cut. We henceforth assume that the pins of $Q$ are ordered in a manner agreed upon by Alice and Bob. In our simulation, Alice and Bob exchange
		$O(|Q|\log |Q|)$
		bits to simulate a single round. Since set-disjointness requires $\Omega(\lambda)$ bits of communication, this implies that \Alg{}~terminates after $\Omega\left(\frac{\lambda}{|Q|\log |Q|}\right) $ rounds. Because $f(n)\leq \lambda$ and $h(n)\cdot k\geq |\partial_{A}|\cdot k=|Q|$, this leads to the desired bound of $\Omega\left(\frac{f(n)}{h(n)\cdot k\cdot (\log h(n)+\log k)}\right)$ rounds.   
		
		We describe the simulation of round $t$. Alice simulates the nodes of $A$, and Bob simulates the nodes of $B$. The simulation is presented from Alice's side as Bob's side is analogous. Let $\Pins_{A}=\bigcup_{v\in A}\Pins(v)$ be the pins incident on the nodes of $A$. Suppose that Alice knows the states of all nodes in $A$ (including their pin partition) at the beginning of round $t$. The goal of the simulation is for Alice to know for each pin $p\in P_{A}$ if a beep occurred on its circuit in round $t$ (which would allow Alice to compute the states of all nodes in $A$ in round $t+1$). 
		
		Define $\mathcal{L}_{A}^{t}$ to be the projection of the relation $\mathcal{L}^{t}$ over the nodes of $A$. That is, $\mathcal{L}_{A}^{t}$ is the symmetric binary relation over $\Pins_{A}$ such that pins $p = (e, i)$ and $p' = (e', i')$ belong to $\mathcal{L}_{A}^{t}$ if and only if there exists a node $v \in A$ (incident on both $e$ and $e'$) such that $p$ and $p'$ belong
		to the same set $R\subseteq \mathcal{R}^{t}(v)$. Let $\RefTrnsCl(\mathcal{L}^{t}_{A})$ be the reflexive transitive closure of $\mathcal{L}_{A}^{t}$. For a pin $p\in P_{A}$, let $\Circuits_{A}^{t}(p)$ denote the equivalence class of $\RefTrnsCl(\mathcal{L}^{t}_{A})$ to which $p$ belongs. Observe that Alice can compute the relations $\mathcal{L}_{A}^{t}$ and $\RefTrnsCl(\mathcal{L}^{t}_{A})$ by herself. Furthermore, if an equivalence class of $\RefTrnsCl(\mathcal{L}^{t}_{A})$ contains only pins from $P_{A}-Q$, then it forms a circuit $C$ in round $t$ where all the nodes partaking in $C$ are from $A$. Therefore, in this case, Alice can compute whether a beep was transmitted on the circuit $C$ by herself. In the case of circuits containing the pins of $Q$, communication with Bob is needed (since nodes from $B$ partake in these circuits).  
		
		
		The communication scheme is defined as follows. Alice starts by giving a unique $O(\log |Q|)$-bit \emph{name} to each equivalence class in $\{\Circuits_{A}^{t}(p)\mid p\in Q\}$. Then, Alice sends Bob a message (and vice versa) where for each pin $p_{i}$ (indexed according to the predetermined order), she writes the name of $\Circuits_{A}^{t}(p_{i})$ and a bit indicating if a beep was sent through $p_{i}$ from Alice's side (i.e., on any of the pins in $\Circuits_{A}^{t}(p_{i})$). Notice that by definition, two pins $p,p'\in Q$ belong to the same circuit in round $t$ if and only if there exists a sequence $p_{1}=p,p_{2},\dots,p_{\ell}=p'\in Q$ of pins such that $p_{i}$ and $p_{i+1}$ were given the same name in either Alice or Bob's message for all $i\in [\ell-1]$. Therefore, given Bob's message, Alice can partition the pins of $Q$ according to their circuits in round $t$ and determine whether a beep was transmitted on these circuits. Furthermore, since Alice knows the local pin partition of all nodes in $A$, she can partition all the pins in $P_{A}$ according to their circuits in round $t$ and determine whether a beep was transmitted on these circuits. This means that Alice successfully simulates round $t$. The number of bits exchanged between Alice and Bob to obtain the simulation of a round is $O(|Q|\log |Q|)$. As discussed above, this directly implies an $\Omega\left(\frac{f(n)}{h(n)\cdot k\cdot (\log h(n)+\log k)}\right)$ lower bound on the runtime of \Alg{}.  
	\end{proof}
	
	\subsection{Inapplicable Reductions}\label{sec:inapplicable-reductions}
	As mentioned before, reductions from communication complexity functions to graph problems have been widely studied, particularly in the context of lower bounds for CONGEST algorithms. While most of the reductions in the literature give meaningful $(f,h)$-hardness results according to the notion presented above, some do not. An example of reductions that do not yield meaningful $(f,h)$-hardness results are the reductions presented in \cite{dasHKKNPPW2011}. These reductions are used to show lower bounds for the runtime of CONGEST algorithms for various problems, including minimum spanning tree, connected spanning subgraph verification, and cut verification. However, as we will discuss soon, these lower bounds do not apply to the GRC model. Moreover, in \Sect{}~\ref{sec:mst} and \Sect{}~\ref{sec:verification-tasks} we show runtime upper bounds in the GRC model of $O(\log ^{2}n)$ and sometimes even $O(\log n)$ for many of the problems discussed in \cite{dasHKKNPPW2011}.

	We now go over the construction of \cite{dasHKKNPPW2011} and explain why it does not apply to algorithms in the GRC model. Let us use the connected spanning subgraph verification as an example (see \Sect{}~\ref{sec:additional-verification-tasks} for a definition). Given inputs $x,y\in \{0,1\}^{\lambda}$, Alice and Bob construct graph $G(x,y)$ with subgraph $H(x,y)$ such that $H(x,y)$ is a connected spanning subgraph of $G(x,y)$ if and only if $\mathtt{DISJ}(x,y)=1$. As part of the graph construction, $G(x,y)$ contains $\lambda$ simple paths $\mathcal{P}_{1},\dots ,\mathcal{P}_{\lambda}$, each of length $\ell$, where initially, both Alice and Bob know the states of all nodes on every path (refer to \cite{dasHKKNPPW2011} for the full construction). For convenience, suppose that each path forms a horizontal line from the leftmost to the rightmost node. An invariant maintained throughout the simulation is that for every path $\mathcal{P}_{i}$, following round $t$'s simulation, Alice (resp., Bob) knows the states of all $\ell-t$ leftmost (resp., rightmost) nodes at the end of round $t$ (in contrast to the simulation that appears in \Thm{}~\ref{thm:lower-bound-reduction}, where each party simulates a static set of nodes). An important observation that enables succinct messages between Alice and Bob is that when simulating round $t+1$, Alice can compute the message from the $\ell-t$ leftmost node to the $\ell-t-1$ leftmost node on each path by herself (since she knows the state of the $\ell-t$ leftmost node at the beginning of the round). Similarly, Bob can compute the message from the $\ell-t$ rightmost node to the $\ell-t-1$ rightmost node on each path by himself. That is, throughout the simulation, the communication does not need to account for the messages sent along the paths $\mathcal{P}_{1},\dots ,\mathcal{P}_{\lambda}$ during the simulated CONGEST algorithm. 
	
	We note that the simulation described cannot truthfully depict an algorithm in the GRC model. Generally speaking, this is because the GRC model allows for non-neighbors to communicate. For example, suppose that at the beginning of round $t\geq 2$ of the GRC algorithm, the nodes of some path $\mathcal{P}_{i}$ all partake in some circuit $C$. In the simulation, Alice does not know the state of the rightmost node (as it is too far away from the $\ell-t\leq \ell-2$ leftmost node) and thus cannot compute by herself whether the rightmost node sent a beep through the circuit $C$ in round $t$. Extending this observation, any attempt to simulate an algorithm in the GRC model on the graph $G(x,y)$ requires at least $\lambda$ bits of communication per round (one bit per path). Since $\lambda$ bits are sufficient to solve set-disjointness, a non-trivial lower bound cannot be derived using the construction of \cite{dasHKKNPPW2011}.
	
	\section{Additional Related Work}
	\label{sec:related-work}
	The geometric amoebot model provides an abstraction for distributed computing
	by (finitely many) computationally restricted particles that can move in the
	hexagonal grid by means of expansions and contractions.
	The model was introduced by Derakhshandeh et al.~\cite{derakhshandehDGRSS2014}
	and gained considerable popularity since then, see, e.g.,
	\cite{derakhshandehGSBRS2015,
		derakhshandehGRSS2015,
		cannonDRR2016,
		daymudeGRSS2017,
		DerakhshandehGRST2017universal-coating,
		DaymudeHRS2019programmable,
		DaymudeRS2023canonical}.
	The notion of reconfigurable circuits, from which our GRC model is derived,
	was introduced by Feldmann et al.~\cite{feldmannPSD2022}, and studied further
	by Padalkin et al.\
	\cite{padalkinSW2022,
		PadalkinS2024shortest-path},
	as an augmentation of the geometric amoebot model with the capability to form
	long-range (reconfigurable) beeping channels.
	Since geometric amoebot algorithms are confined to the hexagonal grid, so are
	the algorithms presented in
	\cite{feldmannPSD2022,
		padalkinSW2022,
		PadalkinS2024shortest-path},
	only that unlike the former algorithms, that often exploit the particles'
	mobility, the latter algorithms are static.
	As such, the reconfigurable circuits algorithms of 
	\cite{feldmannPSD2022,
		padalkinSW2022,
		PadalkinS2024shortest-path}
	operate under a special case of our GRC model, restricted to (finite subgraphs
	of) the hexagonal grid.\footnote{%
		The geometric amoebot model supports a different type of communication scheme
		for short-range interactions, where a particle observes the full state of each
		of its neighboring particles.
		We did not include this type of communication scheme in the GRC model that
		uses the same mechanism for long-range and short-range communication.}
	In contrast to the distributed tasks studied in the current paper which are
	``purely combinatorial'', most of the tasks studied in
	\cite{feldmannPSD2022,
		padalkinSW2022,
		PadalkinS2024shortest-path}
	admit a ``geometric flavor'' corresponding to an underlying planar embedding
	of the hexagonal grid;
	these tasks include
	compass alignment,
	chirality agreement,
	shape recognition,
	stripe computation,
	and
	identifying the northern-most node.
	Non-geometric exceptions are leader election and the construction of shortest
	paths, however the algorithms developed in
	\cite{feldmannPSD2022,
		padalkinSW2022,
		PadalkinS2024shortest-path}
	for these tasks are tailored to the hexagonal grid and strongly rely on its
	unique features.
	
	Another aspect in which the general GRC model deviates from the ``special
	case'' considered in \cite{feldmannPSD2022} is the exact meaning of
	uniformity:
	Since port numbering is inherent to the communication scheme of the GRC model,
	the state machine associated with a node $v$ must be adjusted to the degree
	$\Degree(v)$ of $v$.
	The degrees in subgraphs of the hexagonal grid are at most $6$, which means
	that the description of algorithms operating under the model of
	\cite{feldmannPSD2022} (or the state machines thereof) can be bounded by a
	universal constant.
	In contrast, the degrees in general graphs may obviously grow asymptotically,
	hence we cannot hope to bound the descriptions of our state machines in the
	same way.
	Nevertheless, these descriptions are still bounded independently of
	any global parameter of the graph $G$.
	
	A computational model for distributed graph algorithms that supports arbitrary
	graph topologies and does admit (universally) fixed state machines is (what
	came to be known as) the \emph{stone age} model that was introduced by Emek
	and Wattenhofer in \cite{EmekW2013stone-age} and studied further, e.g., in
	\cite{AfekEK2018leader-selection,
		AfekEK2018synergy,
		EmekU2020dynamic-sa,
		EmekK2021au,
		GiakkoupisZ2023two-state-process}.
	In this model, however, the nodes have no direct access to their incident
	edges, hence stone age algorithms are not suitable for \emph{edge sensitive
		tasks}, namely, tasks whose input and/or output may distinguish between the
	graph's edges (such as the tasks addressed in the current paper).
	
	The beeping communication scheme for distributed graph algorithms was
	introduced by Cornejo and Kuhn in \cite{CornejoK2010deploying} and studied
	extensively since then, see, e.g.,
	\cite{AfekABCHK2013beeping,
		forsterSW2014,
		gilbertN2015,
		yuJYLC2015,
		czumajD2019,
		Dufoulon2019thesis,
		DufoulonBB2020uncoordinated}.
	These papers assume that each node shares a (static) beeping channel with its
	graph neighbors, in contrast to the GRC model, where the beeping channels are
	reconfigurable and may include nodes from different regions of the graph.
	For the most part, the existing beeping literature does not cover edge
	sensitive tasks (as defined in the previous paragraph).
	The one exception (we are aware of) in this regard is the recent work of
	Dufoulon et al.~\cite{DufoulonEG2023shortest-paths} that designs beeping
	algorithms for the construction of shortest paths, using locally unique node
	identifiers to mark the edges along the constructed paths.
	Notice that the computation/communication of (locally or globally) unique node
	identifiers is inherently impossible when it comes to boundedly uniform
	distributed algorithms, justifying our choice to adopt the port numbering
	convention.
	
	\clearpage
	\bibliographystyle{plainurl}
	\bibliography{references}
	
	\clearpage
	{\centering
		\Large{APPENDIX}
		\par}
	
	\appendix
	
	\section{Missing Proofs}
	\label{appendix:missing-proofs}
	
	\subsection{Proofs Missing from \Sect{}~\ref{sec:preliminaries}}
	\label{appendix:missing-proofs:preliminaries}
	
	\begin{proof}[Proof of \Lem{}~\ref{lem:counting}]
		Consider an execution of $\mathtt{CountingToLogn}$ and integer $t\geq 0$, and let $X_{t}$ be a random variable counting the number of competitors remaining after round $t$, where
		$X_{0} = n$.
		Observe that $\mathbb{E}[X_{t + 1}\mid X_{t}]=X_{t}/2$. By the law of total expectation, it follows that $\mathbb{E}[X_{t + 1}] = \mathbb{E}[\mathbb{E}[X_{t + 1}\mid X_{t}]]= \mathbb{E}[X_{t}/2] = \mathbb{E}[X_{t}]/2$. By a simple inductive argument, we deduce that $\mathbb{E}[X_{t}] = n / 2^t$.
		Thus, for
		$t \geq (1 + \rho ) \log n$,
		it follows that
		$\mathbb{E}[X_{t}] \leq n / 2^{(1 + \rho) \log n} = n^{-\rho}$.
		By Markov's inequality, we get that
		$\Pr[X_{t} \geq 1] \leq n^{-\rho}$
		for every
		$t\geq(1 + \rho) \log n$.
		Since $\tau$ is the median runtime, the probability that
		$\tau\geq(1 + \rho) \log n$
		is equal to the probability that $r$ independent executions of $\mathtt{CountingToLogn}$ took at least $(1 + \rho) \log n$ rounds. Due to the independence of the executions, this probability is at most $n^{-\rho r}$.
		
		We go on to bound the probability that
		$\tau \leq(1 - \rho) \log n$.
		Let $0<\delta<1$ be a constant that satisfies
		$2 / (1 - \delta) < 2^{1 / (1 - \rho)}$
		(notice that such constant exists since $2^{1 / (1 - \rho)}>2$).
		Now, suppose that $X_{t} = x$.
		Notice that $X_{t + 1}$ is a sum of $n$ independent Bernoulli variables with $\mathbb{E}[X_{t + 1}]=x/2$.
		Thus, Chernoff bound implies that
		\[
		\Pr[X_{t + 1} \leq (1 - \delta) \cdot x/2] \leq e^{-\delta^{2} x / 4} \, .
		\]
		Hence, if $X_{t}=x \geq (8 \rho \ln n) / \delta^{2}$, then with probability at most $n^{-2 \rho}$, 
		$X_{t + 1} < (1 - \delta) x/2$.
		Furthermore, there are $O (\log n)$ rounds in which this event can happen (since in each such round the number of competitors is reduced by at least a constant fraction).
		Therefore, by a union bound argument, it holds that with probability greater than $1 - n^{-\rho}$, for every round $t$ such that $X_{t}\geq (8 \rho \ln n )/ \delta^{2}$, a $(1 - \delta)/2$ fraction of the competitors remain in round $t + 1$.
		This implies that with probability greater than $1 - n^{-\rho}$, the procedure runs for over
		$\log_{2 / (1 - \delta)} (n - (8 \rho \ln n) / \delta^{2})$
		rounds.
		Since
		$2 / (1 - \delta) < 2^{1 / (1 - \rho)}$,
		there exists a value $n'>0$ such that
		\[
		\log_{2 / (1 - \delta)} (n - (8 \rho \ln n) / \delta^{2})
		\, \geq \,
		\log _{2^{1 / (1 - \rho)}}n = (1 - \rho) \log n 
		\]
		for all $n\geq n'$. Thus, the probability of an execution of \Proc{}~$\mathtt{CountingToLogn}$ finishing in less than $(1 - \rho) \log n$ rounds is bounded by $n^{-\rho}$.
		Since $\tau$ is the median runtime of $2r - 1$ executions, we get that $\Pr[\tau \leq (1 - \rho) \log n] \leq n^{-\rho r}$.
		Overall, we conclude that
		\begin{align*}
			&\Pr[(1 - \rho) \log n \leq \tau \leq (1 + \rho) \log n]
			\, = \, 
			\\ 
			&1-\Pr[\tau< (1 - \rho) \log n]-\Pr[\tau> (1 + \rho) \log n]\geq 1-2n^{-\rho r} \, .\qedhere
		\end{align*}
	\end{proof}
	
	\begin{proof}[Proof of \Thm~\ref{thm:message-passing-simulation}]
		First, observe that if $\{(e, i)\}\in \PinPart^{t}(u)$ and $\{(e, i)\}\in
		\PinPart^{t}(v)$ for some $e = (u, v) \in E$ and $i \in [k]$ in round $t$,
		then $u$ and $v$ partake in a singleton circuit $C_{e}=\{(e, i)\}$ in round $t$. We start by describing the message-passing simulation assuming that each edge $(u, v) \in E$ is oriented and both endpoints know its orientation.
		If an edge $e = (u, v)$ is oriented from $u$ to $v$, then for any integer
		$t\in\mathbb{Z}_{> 0}$,
		only $u$ can beep through $C_{e}$ in rounds $4t-3$ and $4t-2$, while rounds $4t - 1$ and $4t$ are reserved for $v$.
		Each of the endpoints uses its designated rounds to convey a message (or lack thereof) through the circuit $C_{e}$, where two consecutive beeps stand for the message $1$; two consecutive silences stand for the message $0$; and a silence followed by a beep stand for not sending a message. 
		
		To obtain the edge orientation, the nodes engage in the following symmetry-breaking mechanism executed once in a preprocessing stage. For each node $v \in V$, all its incident edges $e \in E(v)$ are initially unoriented.
		The execution proceeds in phases of two rounds. In the first round of each phase, each node $v \in V$ tosses a fair coin.
		If the coin lands heads, then $v$ beeps through $C_{e}$ for every unoriented edge $e \in E(v)$.
		If $v$ did not beep through a circuit $C_{e}$ in the first round of the phase and heard a beep from the other endpoint, then $v$ orients the edge $e$ away from itself and beeps through $C_{e}$ in the second round of the phase.
		If $v$ beeped through $C_{e}$ in the first round and heard a beep from the other endpoint in the second round, then $v$ orients $e$ towards itself.
		If $v$ still has unoriented incident edges at the end of the phase, then it beeps through a global circuit to inform all nodes. The preprocessing stage ends when all nodes have oriented all their incident edges.
		Observe that by standard Chernoff and union bound arguments, this preprocessing stage terminates after
		$O (\log m) = O (\log n)$ rounds whp.
	\end{proof}
	
	\begin{proof}[Proof of \Lem{}~\ref{lem:outgoing}]
		Consider some edge $e = (u, v) \in E$.
		If $u$ and $v$ belong to the same cluster $S$ (i.e., $e$ is not an outgoing edge) and a single cluster leader in $S$ is selected by the leader election algorithm, then by the construction of the outgoing edge detection procedure, $u$ and $v$ will both classify $e$ as a non-outgoing edge.
		Since the leader election algorithm succeeds whp and there are at most $n$ clusters, by applying union bound over the clusters, it follows that every non-outgoing edge is classified correctly whp.
		
		Now, suppose that $(u, v)$ is an outgoing edge.
		This means that whp, $u$ and $v$ have different cluster leaders $\ell_{u}$ and $\ell_{v}$.
		For any integer $b > 0$, the probability of $\ell_{u}$ and $\ell_{v}$ drawing the same $b$ bit sequence is $2^{-b}$.
		Observe that for a desirably large constant $c'$, it holds that $\ell_{u}$ and $\ell_{v}$ draw $c' \log n$ bits whp simply by repeating the $\mathtt{CountingToLogn}$ procedure $c$ times for a sufficiently large constant $c$.
		Hence, $(u, v)$ is classified as outgoing whp by both $u$ and $v$.
		Since there are less than $n^{2}$ outgoing edges in total, we deduce that all outgoing edges are classified correctly whp by means of a union bound over all outgoing edges.
	\end{proof}
	
	\begin{proof}[Proof of \Lem{}~\ref{lem:outgoing-runtime}]
		Follows directly from \Lem{}~\ref{lem:counting} and \Thm{}~\ref{thm:leader-election}.
	\end{proof}
	
	\subsection{Proofs Missing from \Sect{}~\ref{sec:mst}}
	\label{appendix:missing-proofs:mst}
	
	\begin{proof}[Proof of \Lem{}~\ref{lem:mst-connected-components}]
		If $q_{i} = 1$, then by \Lem{}~\ref{lem:outgoing}, no outgoing edges are detected in phase $i$ whp, and thus the algorithm terminates.
		Otherwise, $(V, T_{i})$ has $q_{i} > 1$ connected components, and by \Lem{}~\ref{lem:outgoing}, each connected component detects all its outgoing edges whp.
		By the construction of the algorithm, this means that each cluster merges with at least one other cluster, and thus the number of clusters is reduced by at least half.
	\end{proof}

	\subsection{Proofs Missing from \Sect{}~\ref{sec:spanner}}
	\label{appendix:missing-proofs:spanner}
	
	\begin{proof}[Proof of \Lem{}~\ref{lem:spanner-capped}]
		Let $I_{M} =\{i \in I \mid X_{i} - q_{i} = M \}$
		and
		$I_{M - 1} =\{i \in I \mid X_{i} - q_{i} = M - 1 \}$.
		We start by showing that
		$\Pr[|I_{M}| \geq t] \leq \phi^{t - 1}$.
		Denote by $Y_{t}$ the $t$-th largest random variable from the values $\{X_{i} - q_{i}\}$.
		Conditioning on $Y_{t} = a$, the event $|I_{M}| \geq t$ is the event that the random variables $Y_{1}, \dots, Y_{t - 1}$ are all equal to $a$ given that they are at least $a$.
		By the memoryless property of the capped geometric distribution between $0$ and $\kappa - 2$, it follows that
		$\Pr[Y_{i} = a\mid Y_{i}\geq a] = \Pr[X_{i} = a + q_{i}\mid X_{i} \geq a + q_{i}] \leq \Pr[X_{i} =0] = \phi$.
		By the independence of the $X_{i}$ values, it follows that $\Pr[|I_{M}| \geq t\mid Y_{t} = a] \leq\phi^{t - 1}$.
		Using the law of total probability, we get
		$\Pr[|I_{M}| \geq t] \leq \phi^{t - 1}$.
		For similar reasoning, we can deduce that $\Pr[|I_{M - 1}| \geq t] \leq\phi^{t - 1}$.
		The statement follows since
		\begin{align*}
			&\mathbb{E}[|I|]
			\, = \,
			\mathbb{E}[|I_{M - 1}|] + \mathbb{E}[|I_{M}|]
			\, = \, \\
			&\sum_{t = 1}^{n}\Pr[|I_{M - 1}| \geq t]+\Pr[|I_{M}| \geq t]
			\, \leq \, 2\cdot\sum_{t=0}^{\infty}\phi^{t}
			\, = \,
			\frac{2}{1 - \phi} \, .\qedhere
		\end{align*}
	\end{proof}

	\begin{proof}[Proof of \Lem{}~\ref{lem:spanner-B}]
		Notice that by design, if the cluster IDs drawn at the bridging edges addition stage are unique, then event $B$ occurs.
		Let $\tau'$ be the random variable counting the number of rounds in the bridging edges addition stage (which is also the length of the cluster IDs).
		First, let us condition on the event $\tau'\geq (c + 2) \log n$. Since there are at most $n$ clusters, the probability of a specific cluster having a non-unique ID in that case is at most $n/(2^{(c + 2) \log n}) = n^{-c - 1}$.
		Applying union bound over all clusters yields
		\begin{align*}
			&\Pr[\lnot B\mid \tau'\geq (c + 2) \log n]
			\, \leq \,\\
			&
			\Pr[\text{IDs are not unique}\mid \tau'\geq (c + 2) \log n]
			\, \leq \, n \cdot n^{-c-1}
			\, = \,
			n^{-c} \, .
		\end{align*}
		
		\Lem{}~\ref{lem:counting} suggests that with probability larger than $1 - 2n^{-c}$, the median runtime of $4c + 7$ calls to $\mathtt{CountingToLogn}$ is at least $\frac{1}{2}\log n$, i.e., there are $2c + 4$ calls whose runtime is at least $\frac{1}{2}\log n$.
		Therefore,
		$\Pr[\tau' \geq (c + 2) \log n] = \Pr[\tau' \geq (2c + 4) \cdot \frac{1}{2} \log n] \geq 1 - 2n^{-c}$.
		
		By the law of total probability, it follows that
		\begin{align*}
			& \Pr[\lnot B]
			\, = \, \Pr[\lnot B \mid \tau'\geq (c + 2) \log n] \cdot \Pr[ \tau' \geq (c + 2) \log n] \, +\, \\
			& \Pr[\lnot B \mid \tau' < (c + 2) \log n]\cdot \Pr[\tau'< (c + 2) \log n] \leq \\
			& \Pr[\lnot B \mid \tau' \geq (c + 2) \log n] + \Pr[\tau'< (c + 2) \log n]\leq \\ 
			& n^{-c} + 2n^{-c} = 3n^{-c}\ ,
		\end{align*}
		which concludes the argument.
	\end{proof}

\end{document}